\documentclass{article}

\usepackage[utf8]{inputenc}
\usepackage{amsmath}
\usepackage{verbatim}
\usepackage{amsthm}
\usepackage{amssymb}
\usepackage{float}
\usepackage{algorithm}
\usepackage{algcompatible}
\usepackage[noend]{algpseudocode}
\usepackage{graphicx}
\usepackage{hyperref} 
\usepackage{enumerate}
\usepackage{multicol}

\usepackage[bottom=2cm,top=2cm,left=3.5cm,right=3.5cm]{geometry}
\graphicspath{{img/}{../img/}}
\usepackage[outdir=./img/]{epstopdf}

\newcommand{\G}{$G=(V,E)$ }
\newcommand{\raiz}{\text{$v_{0}$} }
\newcommand{\Gv}{$H=(V,E,\raiz,L,\Omega)$ }
\newcommand{\Hs}{H{\text{\begin{math}\star\end{math}}}}
\newcommand{\succv}{$v \to u = T$}

\newcommand{\nA}{\textbf{\textit{A}}}
\newcommand{\nC}{\textbf{\textit{C}}}
\newcommand{\nJ}{\textbf{\textit{J}}}
\newcommand{\nN}{\textbf{\textit{N}}}


\newcommand{\VB}{\text{$v_{B}$}}
\newcommand{\VA}{\text{$v_{A}$}}
\newcommand{\VL}{\text{$v_{L}$}}
\newcommand{\VH}{\text{$v_{H}$}}
\newcommand{\VD}{\text{$v_{D}$}}
\newcommand{\VI}{\text{$v_{I}$}}
\newcommand{\VN}{\text{$v_{N}$}}
\newcommand{\VG}{\text{$v_{G}$}}


\algblock[Try]{Try}{EndTry}
\algnotext[Try]{EndTry}
\algrenewtext{Try}{\textbf{try}}

\algblockdefx[Catch]{Catch}{EndCatch}%
[1][]{\textbf{catch} #1}

\algnotext[Catch]{EndCatch}
\newcommand{\inicioAlgoritmo}[1]{

\begin{algorithm}[H]
\caption{#1}
\begin{algorithmic}[1]}

\newcommand{\fimAlgoritmo[1]}{\end{algorithmic}
\end{algorithm}}

\algnewcommand\algorithmicinput{\textbf{Input:}}
\algnewcommand\Input{\item[\algorithmicinput]}
\algnewcommand\algorithmicoutput{\textbf{Output:}}
\algnewcommand\Output{\item[\algorithmicoutput]}

\newtheorem{definition}{Definition}
\newtheorem{theorem}{Theorem}
\newtheorem{lemma}{Lemma}
\newtheorem{constraint}{Constraint}
\newtheorem{strategy}{Strategy}
\newtheorem{prop}{Property}
\newtheorem{observation}{Observation}
\newtheorem{corollary}{Corollary}
\newcommand{\why}[1]{\textit{Why.} #1 \qed}

\begin{document}

\title{SFCM-R: A novel algorithm for the hamiltonian sequence problem}
\author{ Cícero A. de Lima \footnote{email: cicero.lima.id@gmail.com, orcid: 0000-0002-3117-3065}}
\maketitle

\begin{abstract}
A hamiltonian sequence is a path walk $P$ that can be a hamiltonian path or hamiltonian circuit. Determining whether such hamiltonian sequence exists in a given graph \G is a NP-Complete problem. In this paper, a novel algorithm for hamiltonian sequence problem is proposed. The proposed algorithm assumes that $G$ has potential forbidden minors that prevent a potential hamiltonian sequence $P^\prime$ from being a hamiltonian sequence. The algorithm's goal is to degenerate such potential forbidden minors in a two-phrase process. In first phrase, the algorithm passes through $G$ in order to construct a potential hamiltonian sequence $P^\prime$  with the aim of degenerating these potential forbidden minors. The algorithm, in turn, tries to reconstruct $P^\prime$ in second phrase by using a goal-oriented approach.

\end{abstract}

\section{Introduction}
\label{intro}
A hamiltonian sequence is a path walk $P$ that can be a hamiltonian path or hamiltonian circuit. Determining whether such sequence exists in a given graph is a NP-Complete problem (\cite{1}; \cite{9}). Several algorithms have been proposed to find hamiltonian sequences in a graph $G$. For example, Held and Karp (\cite{2}) proposed an algorithm that runs in $O(n^2 2^n)$ to compute a hamiltonian path by using dynamic programming. In 2014, Björklund (\cite{3}) proposed a randomized algorithm  that runs in $O(1.657)^n$ to compute a hamiltonian circuit in undirected graphs.

The currently best known exact algorithm for the hamiltonian sequence problem runs in $O^{*}(2^{n-\Theta(\sqrt{n / \log{n}})})$ (\cite{10}). Despite the progress made in the hamiltonian sequence problem, a substantial improvement in the area of exact algorithms for this problem remains an open problem. Unfortunately, exact algorithms for the hamiltonian sequence problem, which is determining if hamiltonian path or hamiltonian circuit exists in a graph $G$, still run in exponential time complexity.

In this paper, a novel algorithm is proposed to solve the hamiltonian sequence problem. The goal of the proposed algorithm is to construct a potential hamiltonian sequence $P^\prime$ , assuming that $G$ may have potential forbidden minors that prevent a potential hamiltonian sequence $P^\prime$  from being a hamiltonian sequence. Thus, these potential forbidden minors need to be degenerated in some state $k$ in a two-phrase process by using a goal-oriented approach. Our algorithm outputs a valid hamiltonian sequence by reconstructing $P^\prime$ , or aborts itself, if it is forced to use probability instead of the proposed goal-oriented approach. Hence, this study presents new techniques to solve the hamiltonian sequence problem.

The rest of the paper is organized as follows. In section 2 we list some technical conventions and provide a solid foundation for a better understanding of this paper. Finally, in  section 3 we present the details of the  proposed algorithm and prove its correctness.

\section{Preliminary}
\label{sec:1}
In this section, some concepts about graph theory are described. Also, this section provides a concise background needed for a better understanding of the paper. 

A graph $G=(V,E)$ consists of a set $V$ of vertices and a set E of edges. The vertex set of a graph $G$ is refereed to as $V(G)$, its edge set as $E(G)$. The number of elements of any set $X$ is written as $|X|$. Each edge $e \in E$ is undirected and joins two vertices $u, v \in V$, denoted by $e=uv$. To represent adjacency between two vertices we use the notation $u \sim v$. $u \sim S$ is used to represent the adjacency between $u$ and at least one vertex $w \in S$,$S \supseteq V$. The set of neighbors of a vertex $v \in V(G)$ is denoted by $N(v)$.   $G[U]$ is a  subgraph of $G$ \textit{induced} by $U \supseteq V$ that contains all the edges $xy \in E$ with $x,y \in U$. $G-U$ is a subgraph obtained from $G$ by deleting all the vertices $U \cap V$ and their incident edges. If $|X|=1$ and $X=\{v\}$, we write $G-v$ rather than $G-\{v\}$. $\omega(G)$ is the number of components of a graph $G$. If $v$ is an articulation point then we will have $\omega(G-v) > \omega(G)$. The graph $G \setminus e$ is obtained from $G$ by contracting the edge e and replace its endpoints $x$,$y$ with new vertex $v_e$, which becomes adjacent to all the former neighbors of $x$ or $y$ . Formally, the resulting graph $G \setminus e$ is a graph $(V^\prime, E^\prime)$ with vertex set $V^\prime = \{(V \setminus {x,y}) \cup {v_e}\}$ and an edge set $E^\prime=\{ vw \in E \mid {v,w} \cap {x,y} = \emptyset \} \cup \{v_e w \mid xw \in E \setminus \{e\}$ or $ yw \in E \setminus \{e\} \}$ (\cite{4}) . 

A minor of a graph $G$ is any subgraph obtainable from $G$ by means of a sequence of vertex and edge deletions and edge contractions. A class or a family $F$ of graphs $G$ contain all graphs $G$ that posses some common characterization. Many families of graphs are minor-closed, that is, for every $G$ in $F$ every minor $G^\prime$ of $G$ also belongs to $F$. Every minor-closed families has a  finite set $X$ of excluded minors. (\cite{4}) For example, a major step towards deciding whether a given graph is planar is provided by Kuratowski's theorem which states that if $G$ in $P$ where $P$ is the family of planar graphs, then $G$ contains no minor belongs to $X=\{K_{5},K_{3,3}\}$ (\cite{4}) 

Many methods were studied to test the planarity of a graph G. One interesting method to determine if a graph $G$ is planar was proposed by Schmidt. This method incrementally builds planar embeddings of every 3-connected component of $G$. Schmidt studied a far-reaching generalization of canonical orderings to non-planar graphs of Lee Mondshein's PhD thesis (\cite{6}) and proposed an algorithm that computes the called Mondshein sequence in $O(m)$ (\cite{5}). Mondshein sequence generalizes canonical orderings and became later and independently known under the name \textit{non-separating ear decomposition} (\cite{5}). 

\begin{definition}
An ear decomposition of a 2-connected graph \G is a decomposition $G=(P_0, P_1,... P_k)$ such that $P_0$ is a cycle and every $P_i,1 \leq i \leq k$ is a path that intersects $P_0 \cup P_1 \cup ... \cup P_{i-1}$ in exactly its endspoints. Each $P_i$ is called an ear. (\cite{5})
\end{definition}

Mondshein proposed to order the vertices of a graph in a sequence that, for any $i$, the vertices from 1 to $i$ induce essencially a 2-connected graph while the remaining vertices from $i$ + 1 to $n$ induce a connected graph. For conciseness, we will stick with following short ear-based definition of mondshein sequence. (\cite{5})

\begin{definition}
Let $G$ be a graph with edge $ru$. Assuming that $ru \cup tr$ is part of the outer face of G. A \textit{Mondshein sequence avoiding ru} is an ear decomposition $D$ of $G$ such that (1) $r \in P_0$, (2) $P_{birth(u)}$ is the last long ear, contains u as its only inner vertex and doesn't contain $ru$ and (3) $D$ is non-separating. (\cite{5})
\end{definition}

An ear decomposition $D$ that satisfies the conditions (1) (2) and (3) is said to avoid $ru$, so $ru$ is forced to be added last in $D$, right after the ear containing u as an inner vertex (\cite{5}). If we negate the constraints (1) (2) and (3), we form the \textit{forbidden condition} of the Schmidt's algorithm, seeing that such algorithm can't ignore them. Otherwise, the Schmidt's algorithm fails to produce a valid output.

\begin{definition}
A \textit{forbidden condition} $F$ of an algorithm $A$ is a set $F=\{f_0 ... f_n\}$ of sufficient conditions that makes $A$ fail to produce a valid output.
\end{definition}

Before continuing, let \textsc{Validator} be a generic  hamiltonian sequence validator function that  outputs \textit{true} if $P=v_i .. v_k$ with $1 \leq i \leq k$ is a hamiltonian sequence of $G$ by performing subsequent $G-v_i$ operations.

\inicioAlgoritmo{Hamiltonian sequence validator}
\Input  \G,  $P=v_i .. v_k$
\Output \textbf{true},\textbf{false}
\Function{Validator}{}
\State \textit{output} $\gets$ \textbf{false}
\For  {\textbf{each} \textit{$v_i$} $\in P$}
\If {$\omega(G-v_i) > \omega(G)$}
	\State \textbf{break}
\EndIf
\State $G-v_i$
\EndFor

\If {$|V| = 0$}
	\State \textit{output} $\gets$ \textbf{true}
\EndIf

\State \Return {output}
\EndFunction
\fimAlgoritmo{}

Conditions like $|P| \neq |V(G)|$ or $v_i$ being an articulation point makes \textsc{Validator} ouput \textit{false}. Unfortunately, such invalid conditions are useful only to test if $P$ is a hamiltonian sequence or not. There's some sufficient conditions available for a graph to posses a hamiltonian sequence (\cite{7}; \cite{8}) but there's no known non-exhaustive algorithm for hamiltonian sequence characterization test that constructs a valid hamitonian sequence by performing subsequent $G-v_i$ operations and throwing an error, if $G$ doesn't have any hamiltonian sequence. Likewise, there's no known forbidden condition for the hamiltonian sequence problem. At the same time, find a hamiltonian sequence $P$ by relying on exhaustive methods is not feasible. The lack of a known forbidden condition for hamiltonian sequence characterization test motivated this research. 

In this paper, a novel algorithm called SFCM-R is proposed to solve the hamiltonian sequence problem in a different way. SFCM-R is a type of what we call \textit{Syncronization-based Forbidden Condition Mirroring (SFCM)} algorithm, which is formally defined as follows.

\begin{definition}
Let \G be a graph. The Synchronization-based Forbidden Condition Mirroring (SFCM) algorithm is an algorithm with a configuration $g: W \times F \rightarrow A$, that consists of: (1) a finite set of scenes $W = W_i...W_n$,$W_0=G$, $W_i \equiv ... \equiv W_n$, $0 \leq i \leq n$ associated to a finite set of synchronizable forbidden conditions $F=F_i... F_n$; and (2) a pair $(W_i,F_i)$, with $W_i \in W$ and $F_i \in F$, associated to each mirrorable algorithm in $A = A_i ... A_n$.
\end{definition}

\begin{definition}\emph{(Synchronizable forbidden condition)}
If $F_i \in F$ and $F_k \in F$ of $A_i \in A$ and $A_k \in A$, respectively, are conceptually equivalent, then both $F_i \in F$ and $F_k \in F$ are synchronizable forbidden conditions that will be synchronized eventually when both $A_i$ and $A_k$ are executed.
\end{definition}

\begin{definition}\emph{(Mirrorable algorithm)}
If $F_i \in F$ and $F_k \in F$ are synchronizable forbidden conditions, then both $A_i \in A$ and $A_k \in A$ are conceptually equivalent mirrorable algorithms that will be mirrored eventually when both $A_i$ and $A_k$ are executed.
\end{definition}

Before continuing, a trivial example of how the proposed algorithm works in practice is presented for a better understanding of this paper. Let's convert the Schmidt's algorithm to a SFCM algorithm that we call SFCM-S algorithm. Let $W_0=G$ be a 2-connected scene that Schmidt's algorithm takes as input and $W_1$ be a scene called Schmidt Scene that SFCM-S takes as input. The description of Schmidt scene is as follows.

\paragraph{\textbf{Schmidt Scene}}{Each $P_k \in D$, with $D$ being an ear decomposition of $G$, is a component and the $ru$ edge is a forbidden \textit{ru-component} that needs to be degenerated in some state $k$.}
\vspace{5mm}

Notice that ru-component is a potential forbidden minor of Schmidt Scene that needs to be added last by SFCM-S  in order to not make such algorithm fail to produce a valid output. As the $ru$ edge could be also considered a potential forbidden minor of Schmidt's algorithm, all we need to do is to make the SFCM-S imitate the behaviour of Schmidt's algorithm so that the forbidden conditions of both algorithms will be completely \textit{synchronized} eventually.

As the only difference between SFCM-S and Schmidt's algorithm is that they're \textit{conceptually equivalent}, they will be also completely \textit{mirrored} eventually. 

In this paper, we use a variation of the same approach to construct a hamiltonian sequence path $P$. In this case, $W_0=G$ is the scene that an unknown non-exhaustive hamiltonian sequence characterization test, that performs subsequent $G-v_i$ operations, takes as input. $W_1$ is the scene called \textit{minimal scene} that the proposed algorithm for hamiltonian sequence problem, that we call SFCM-R, takes as input. Such unknown non-exhaustive algorithm will be called \textit{real scene algorithm} or RS-R. Throughout this paper, we also refer to an exhaustive version of real scene algorithm as RS-E.

 We assume that every state of SFCM-R has potential forbidden minors that make such algorithm fail to produce a valid hamilton sequence. In addition, we also assume at first that, if $W_0$ has a hamiltonian sequence, these potential forbidden minors will be degenerated in some state $k$ of SFCM-R. 

In summary, the goal of the SFCM-R is to synchronize the forbidden condition of SFCM-R and the forbidden condition of an unknown non-exhaustive hamiltonian sequence characterization test by using an imitation process. The minimal scene description is based on invalid conditions that make \textsc{Validator} have \textit{false} as output on \textit{minimal state}. In other words, such invalid conditions belong only to  \textit{minimal scene}, not to \textit{real scene} or simply $G$, which is the scene that RS-R takes as input.

\section{SFCM-R algorithm}
\label{sec:2}
In this section, SFCM-R is explained in detail.  Before continuing, we need to define formally the minimal scene and some important functions. The minimal scene is formally defined as follows.

\begin{definition}
 A minimal scene is a rooted graph \Gv with a set $L=(L_{w})_{w\in V}$ of labels associated with each vertex $w$, a root vertex $\raiz$ and an ordered set $\Omega=(\tau_i ... \tau_n)$ of  tiers $\tau_i \supset H$.
\end{definition}

Let \Gv be a minimal scene and \G be a real scene. Let $u$ and $v$ be vertices such that $v,u \in V$. By convention, $v$ is the current state's vertex and $u$, a potential successor $u \in N(v)$. $V - v$ is performed whenever a vertex $u$ is defined as a successor of $v$. It will be written as $v \to u=T$. When $u$ is defined as an invalid successor, it will be written as $v \to u=F$.  $P_{u}$ is a path from $v_i$ to $u$. As we need to find a vertex $u$ such that $v \to u=T$ holds for $u$, SFCM-R analyses a subscene $H^\prime \supset H$ that will be denoted as \emph{tier}.

\begin{definition}
Let \Gv be a minimal scene. A tier $\tau_i$ is a  subscene $\tau_i \supset H$ such that $\tau_i=H[V -  X_i]$, $X_i=(S_0 \cup ... \cup S_{i+1})$, $S_0=\{\raiz\}$ and $S_k$ being a set of nodes with depth $k$ of a breadth-first search transversal tree of $H$.
\end{definition}

A set $\Omega$ of tiers is defined by the function called \textsc{maximum-induction}. When \textsc{maximum-induction} outputs a valid set $\Omega$, the next step is to get the vertices labelled according to the function \textsc{Lv-label} that outputs a set $L=(L_{w})_{w\in V}$, which is the set of labels associated with each vertex $w$.  By convention, $v_{LABEL}$ is a vertex labelled as $v_{LABEL}$ and $N_{v_{LABEL}}(w)$ represents a set of vertices $w^\prime \in N(w)$ labelled as $v_{LABEL}$. $H^\prime_\raiz$ is the root of $H^\prime \supseteq H$ and $H^\prime_v$ is the $v$ of its current state. The notation $H_{v_{LABEL}}$ represents a set of all vertices labelled as $v_{LABEL}$.

\inicioAlgoritmo{Maximum induction of $H$}
\Input  \Gv
\Output Set $\Omega=(\tau_i ... \tau_n)$ of tiers

\Function{Maximum-induction}{}
\State $\Omega \gets \emptyset$
\State $A \gets \{\raiz\}$
\State $X \gets \{\raiz\}$
\Repeat 
	\State $B \gets \emptyset$
	\For{\textbf{each} $v \in A$}
		\For{\textbf{each} $u \in \{N(v) - X\}$}
			\State $B=B \cup \{u\}$
		\EndFor
	\EndFor
	\If{$B \neq \emptyset$}
		\State $X \gets X \cup B$
		\If{$\{V-X\} \neq \emptyset$}
		\State $\Omega \gets \Omega \cup H[V-X]$
		\EndIf		
		\State $A \gets B$
	\ElsIf{$|V| \neq |X|$}
		\State \textbf{throw} error
	\EndIf
	
\Until{$|V| = |X|$}

\State \Return $\Omega$
\EndFunction
\fimAlgoritmo{}

Notice that a tier $\tau_i \in \Omega$ can potentially have $\tau_i \equiv H \equiv G$ in some state $k$ of SFCM-R, RS-R, and RS-E. Because of that, we need to identify all the stumbling blocks that may happen to break a potential hamiltonian sequence $P$ on each tier and degenerate them. These points are denoted as \textit{hamiltonian breakpoints} or simply  \VB\space, because SFCM-R assumes that they're potential forbidden minors that prevent $P$ from being a hamiltonian sequence due to the fact that a tier can potentially have $\tau_i \equiv H^\prime$ holding for $\tau_i$, with $H^\prime \supset H$ being the scene  $H^\prime \supset H$ of the current state of SFCM-R. 

  Before continuing, we'll briefly describe what each label means. Let $w$ be a vertex $w \in V$. If $w$ is an articulation point of $\tau \in \Omega$, it will labelled as $\VA$ or \textit{minimal articulation} vertex. If $\tau \in \Omega$ and $d(w)=1$ it will labelled as \textit{minimal leaf} or \VL. Every $ w^\prime \in N(\VB)$,$w^\prime \neq \VB$ will be labelled as \textit{minimal degeneration} vertex or $\VD$. Every  $w \notin \{\VD, \VB\}$ such that $w \neq \VB$ and $N_\VD(w)\ \geq 2$ is an \textit{minimal intersection} vertex or $\VI$. Every non-labelled vertex will be labelled as $\VN$. A vertex labelled as $\VA$ or $\VL$ is a $\VB$ vertex. On the other hand, a vertex labelled as $\VA\VN$ is not considered a hamiltonian breakpoint. 

\inicioAlgoritmo{$L_v$ labelling}

\Input  \Gv
\Output $L$

\Function{Lv-label}{}
\State $L \gets \emptyset$
	\For{\textbf{each} $\tau \in \Omega$}
	\State $X \gets $ every $w^\prime$ such that $\omega(\tau-w^\prime) > \omega(\tau)$
	\For{\textbf{each} $w \in V(\tau)$}
	
				\If {$d(w) \neq 2$ in $H$}
					\If{$w \in X$}
						\State $L_w \gets L_w \cup \{\VA\}$
					\EndIf	
					\If{$d(v)=1$ in $\tau$}
						\State $L_w \gets L_w \cup \{\VL\}$
					\EndIf
					\Else
					
					\If{$w \in X$}
						\State $L_w \gets L_w \cup \{\VA\VN\}$
				\EndIf

				\EndIf	
		\EndFor
	
	\EndFor

	\For{\textbf{each} $w \text{ labelled as \VB} \in V(H)$}
		\If {$L_w = \{\VA, \VL\}$} 
			\State $L_w \gets L_w - \{\VL\}$
		\EndIf
	\EndFor
	
	\For{\textbf{each} $w \in N(\VB)$}
		\If {$w \neq \VB$} 
			\State $L_w \gets \{\VD\}$
		\EndIf
	\EndFor

	\For{\textbf{each} non-labelled  $w$}
		\If {$|N_{\VD}(w)| \geq 2$} 
			\State $L_w \gets L_w \cup \{\VI\}$
		\EndIf
	\EndFor
	
	\For{\textbf{each} non-labelled  $w$}
		\State $L_w \gets L_w \cup \{\VN\}$
	\EndFor

	\State \Return $L$
\EndFunction

\fimAlgoritmo{}

Now, a concise description of minimal scene is finally provided below.

\paragraph{\textbf{Minimal Scene}}

Every $\VA$ vertex is an articulation point of $H$ and every \VL\space vertex is a potential articulation point. In addition, every \VB\space vertex is part of an isolated $C_\VB$ component such that $C_\VB= \VB \cup N_{\VD}(\VB)$. Thus, if we have $H-\VB$ then we will have $|C_\VB| - 1$ \VD-components. \VI\space vertices are potential intersection points between $C_\VB$ components. \VI\space, \VN\space and \VA\VN\space vertices aren't part of any $C_\VB$ component.  The function $A_v(\VB,H)$ defined bellow returns $T$  if \VB\space is a \textit{virtual articulation} of $H$ or $F$, otherwise. \VB\space is a virtual articulation only if  $|N_\VD(\VB)| \geq 2$.

\begin{equation}
\label{eq:1}
	A_v(\VB,H)=\left\{\begin{array}{lr}

		T, & 
		|N_{\VD}(\VB)| \geq 2   \\
		
		F, & \text{ otherwise }
	\end{array}\right\}
\end{equation}

The term \textit{virtual} indicates that some definition belongs only to minimal scene, not to real scene. Thus, we'll define an additional function $A(w, H)$ that returns \textit{true} if $\omega(H-w) > \omega(H)$. For conciseness, Every real articulation point of $H$ is labelled as \VH\space and $C_\VH=\VH \cup N(\VH)$ represents a $C_\VH$ component. 

In order to keep a valid state, every $C_\VB$ needs to be mapped and degenerated in some state $k$ of first or second phrase. If \VB\space could be degenerated in a state $k$, then \VB\space is  called \textit{b-treatable}. We need to assume at first that $\forall $ \VB\space $ \in V(H)$, there exists a state $k$ which \VB\space will be \textit{b-treatable}.

\begin{definition}
Let $w$ be a vertex with $\{\VB\} \in L_w$. Let $v,w$ and $z$ be vertices of $H$. A b-treatable \VB\space or $\VB^T$ is a vertex $w$ reachable from $v$ through a path $P=v ... z$  with $z \sim w$ such that we have $\{\VB\} \notin L_w$ when we recalculate its label in subscene $H-P$ 
\end{definition}

If \VB\space could be degenerated, \VB\space cant' be considered a \VB\space \textit{b-consistent} anymore.

\begin{definition}
A \VB\space \textit{b-consistent} or $\VB^C$ is a  consistent \VB\space that can't be degenerated in current state.
\end{definition}

As we don't know a detailed description about the unknown forbidden condition of RS-R, we will stick with a conceptually equivalent definition of hamiltonian sequence problem that relates real scene to minimal scene explicitly. The \textit{\VB\space path problem} is as follows. 

\begin{definition}\emph{(\VB\space path problem)} Given a scene \Gv, is there a simple path $P$ that visits all vertices with $P$ such that $P=P_{\VB_{i}} ... P_{\VB_{k}} \cup P_u$ with $|P| = |V(H)|$?
\end{definition}

As the \VB\space path problem is similar to the hamiltonian sequence problem, it must be NP-complete. In the minimal scene, if we have a path $P=P_\VB - \VB$ that degenerates $C_\VB$ in $H-P$, such $P$ will be part of another $P_{\VB^\prime}$ fragment. In first phrase of  SFCM-R, we pass through $H$ with the aim of degenerating $C_\VB$ components in order to create a potential hamiltonian sequence $L_e$, which is a sequence of path fragments.

It means that the following theorem, which is a sufficient condition to make \textsc{Validator} output \textit{false}, will be partially ignored in first phrase. Such phrase is called mapping phrase, which is represented by \textsc{Mapping} function (see Sect.~\ref{sec:3}).

\begin{theorem}
\label{thm:1}
Let \G be a graph. If $P=v_i ... v_k$ with $1 \leq i \leq k$, $1 \leq k \leq |V|$ is hamiltonian sequence of $G$, then $v_i \to v_{i+1} =T$ holds for $v$ only and only if $\omega(G-v_i) \leq \omega(G)$.
\end{theorem}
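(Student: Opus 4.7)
The plan is to prove both implications of the biconditional by matching the notation $v_i \to v_{i+1} = T$ against the pseudocode of \textsc{Validator} and by using the defining property of a hamiltonian sequence. Throughout I read $G$ in the statement as the value of the mutating graph variable inside \textsc{Validator} at the iteration that processes $v_i$, i.e.\ the original graph with $v_1, \ldots, v_{i-1}$ already deleted; the test $\omega(G - v_i) > \omega(G)$ is then asserting that $v_i$ is an articulation point of this current graph.

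For the \emph{only if} direction, assume $v_i \to v_{i+1} = T$. By the convention fixed in Section~\ref{sec:1}, this records that $v_{i+1}$ has been declared a successor of $v_i$ and that the deletion $V - v_i$ has been performed. Inspecting \textsc{Validator}, this event can occur at iteration $v_i$ only when the \textbf{if} test $\omega(G - v_i) > \omega(G)$ fails, so that the \textbf{break} is skipped and the line $G - v_i$ runs. That failure is exactly $\omega(G - v_i) \leq \omega(G)$. A structural sanity check gives the same conclusion: since $P = v_i, \ldots, v_k$ is hamiltonian, the suffix $v_{i+1}, \ldots, v_k$ is a path in $G - v_i$ that reaches every remaining vertex, so $G - v_i$ is connected and $\omega(G - v_i) = 1 \leq \omega(G)$.

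Conversely, suppose $\omega(G - v_i) \leq \omega(G)$. The \textbf{break} condition of \textsc{Validator} is then not triggered at iteration $v_i$, the assignment $G - v_i$ is performed, and the loop advances to the next element of $P$. Since $P$ is the hamiltonian sequence $v_i, \ldots, v_k$ given by the hypothesis, that next element is $v_{i+1}$, which is thereby chosen as the successor of $v_i$, yielding $v_i \to v_{i+1} = T$.

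The main obstacle is not combinatorial but one of careful bookkeeping. The statement reduces almost to unfolding notation against the pseudocode, so the real work is being explicit that $G$ inside \textsc{Validator} is being overwritten at every iteration, and that the hamiltonian hypothesis is what lets us identify $v_{i+1}$ as the correct successor precisely when the connectivity test succeeds. The structural remark about the hamiltonian suffix being a connectivity witness is the one piece that adds mathematical content beyond notation unfolding, and it is what justifies calling the theorem a genuine \emph{necessary} condition on $P$ rather than a mere restatement of the algorithm.
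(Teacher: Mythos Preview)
Your proposal is correct and its mathematical core coincides with the paper's own argument: the one substantive step is the observation that if $P$ is hamiltonian then the suffix $v_{i+1},\ldots,v_k$ already visits every vertex of $G - v_i$, so $v_i$ cannot be a cut vertex. The paper phrases this same step contrapositively (``if $\omega(G-v_i)>\omega(G)$ then some component is unreachable from $v_{i+1}$, hence $|P|\neq|V|$'') and stops there, whereas you additionally spell out the converse direction and the bookkeeping against the \textsc{Validator} pseudocode; this extra care is harmless and arguably makes the biconditional explicit where the paper leaves it implicit.
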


\begin{proof}
If we have $\omega(G-v_i) > \omega(G)$, at least one component is not reachable from $u = v_{i+1}$. Therefore, $|P| \neq |V|$ holds for $P$, which is not a hamiltonian sequence, since at least one vertex is not reachable from $u = v_{i+1}$. 
\end{proof}

Because of that, both first and second phrase of SFCM-R need to enforce basic constraints related to real scene in order to not ignore Theorem \ref{thm:1} completely.  Before continuing, we will define two properties that a subscene $H^\prime \supset H$ may have.

\begin{prop}{\emph{($|H^{n}|$ property)}}
The property $|H^{n}|$ indicates that $H^\prime$ is a component of $H^\prime \supset H[V-H_\VH]$ that has $n$ vertices $w \in Z$ with $Z=\{ w^\prime \in V(H[V]) : (|N(w^\prime) \cap H_\VH| \geq 1) \wedge (w^\prime \neq \VH)\}$. The value of $|H^{n}|$ is equal to $\alpha = \left|\bigcup_{w \in Z} \{N(w) \cap H_\VH\}\right|$; $|H_\VH^{n}|$ returns a set $\beta=\bigcup_{w \in Z} \{N(w) \cap H_\VH\}$.
\end{prop}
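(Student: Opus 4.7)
The statement is essentially definitional, so my plan is to verify that the three quantities it introduces---the property $|H^{n}|$, its numeric value $\alpha$, and the set $\beta$ returned by $|H_\VH^{n}|$---are well-defined and internally consistent functions of the component $H^\prime$, rather than to derive a substantive theorem.

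First, I would confirm that the set $Z$ is well-defined as a subset of $V(H)$. By construction, every $w \in Z$ satisfies $w \neq \VH$ and has at least one neighbor in $H_\VH$, so $Z$ is disjoint from $H_\VH$. Since $H^\prime$ is by assumption a component of $H[V - H_\VH]$, its vertex set is also disjoint from $H_\VH$, and the components of $H[V - H_\VH]$ partition $V(H) \setminus H_\VH$. Hence the intersection $Z \cap V(H^\prime)$ is a well-defined finite set whose cardinality we can legitimately name $n$. This pins down the meaning of the property $|H^{n}|$ as an indicator that $H^\prime$ contains exactly $n$ vertices touching $H_\VH$.

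Next, I would verify the two formulas. The set $\beta = \bigcup_{w \in Z \cap V(H^\prime)}\bigl(N(w) \cap H_\VH\bigr)$ is a finite union of finite subsets of $H_\VH$, so $|H_\VH^{n}| = \beta$ is well-defined, and the identity $\alpha = |\beta|$ then follows immediately from the definition of cardinality. Since each $w \in Z$ contributes at least one element to the union, we also obtain $\alpha \geq 1$ whenever $n \geq 1$, which ensures that speaking of a ``value'' of $|H^{n}|$ is meaningful and not vacuous.

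The main obstacle is notational rather than mathematical: the symbol $|H^{n}|$ is overloaded between an indicator (``$H^\prime$ has $n$ vertices in $Z$'') and a numeric evaluation (equal to $\alpha$), and the definition of $Z$ is given in terms of $V(H[V])$ rather than restricted directly to $V(H^\prime)$. My plan is therefore to read the statement as implicitly defining two functions of $H^\prime$---namely $\beta(H^\prime) = \bigcup_{w \in Z \cap V(H^\prime)}(N(w) \cap H_\VH)$ and $\alpha(H^\prime) = |\beta(H^\prime)|$, with $n = |Z \cap V(H^\prime)|$---so that every later reference to $|H^{n}|$ and $|H_\VH^{n}|$ has an unambiguous target, and the property reduces to the straightforward observation that these assignments are well-defined on each component of $H[V - H_\VH]$.
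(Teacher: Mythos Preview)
Your proposal is correct in recognizing that the statement is purely definitional; indeed, the paper provides no proof at all for this property, treating it simply as a notational convention introduced via the \texttt{prop} environment and moving immediately on to the next definition. Your careful unpacking of the well-definedness of $Z$, $\alpha$, and $\beta$ therefore goes beyond what the paper does, which is nothing beyond stating the definition---so there is no ``paper's own proof'' to compare against, and your verification is supererogatory rather than essential.
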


\begin{prop}{\emph{($|H^{c}|$ property)}}
The property $|H^c| = F$ indicates that $H^\prime \supset H[V-H_\VH]$ is a creatable component of $H[V]$, which implies that it still doesn't exist in $H[V]$. $|H^{c}|=T$ indicates that $H^\prime$ is a component that exists in $H[V]$.
\end{prop}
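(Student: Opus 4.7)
The plan is to read this property as a well-definedness claim: for every subscene $H^\prime \supset H[V - H_\VH]$, the value $|H^c|$ is uniquely determined to be either $T$ or $F$, and these two outcomes correspond to an exhaustive, mutually exclusive dichotomy between ``$H^\prime$ already appears as a component of $H[V]$'' and ``$H^\prime$ can only be realised as a component after further state transitions of SFCM-R''. First I would fix the ambient scene $H[V]$ and enumerate its connected components after deleting the articulation set $H_\VH$; by the standard definition of an articulation vertex, each resulting piece is a maximal connected subgraph of $H[V - H_\VH]$, so the induced component set is uniquely determined.

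Next I would argue the dichotomy directly. Given a candidate $H^\prime$ with $H^\prime \supset H[V - H_\VH]$, I would test whether the vertex set of $H^\prime$, together with its induced edges, coincides with one of the current components of $H[V]$; if so, declare $|H^c| = T$, which is a decidable combinatorial check. Otherwise $H^\prime$ must contain at least one vertex or edge whose incidence with some $\VH \in H_\VH$ has not yet been realised, in which case materialising $H^\prime$ as an actual component forces SFCM-R to re-attach at least one $\VH$-neighbour in a later state. This is precisely the ``creatable but non-existent'' situation, and we declare $|H^c| = F$.

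I would then close the argument by tying both cases to the $|H^n|$ property established just above. In the $T$-case, the set $\beta = \bigcup_{w \in Z} \{ N(w) \cap H_\VH \}$ is empty relative to the component under inspection (no dangling $\VH$-attachments remain), while in the $F$-case $\beta$ is non-empty and catalogues exactly the $\VH$-incidences that still have to be produced; hence $|H^c|$ is completely determined by whether $|H^n| = 0$ or $|H^n| \geq 1$, which in turn makes the two outcomes disjoint and thus makes the value well-defined.

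The main obstacle I anticipate is conceptual rather than technical: the statement is phrased in the informal language of ``existing'' versus ``creatable'' subscenes, and the real work is to pin down a formal criterion — I would use the $|H^n|$ test above — under which ``creatable'' becomes a precise, checkable condition rather than a forward reference to hypothetical future SFCM-R states. Once that anchor is fixed, the verification reduces to comparing the component structures of $H[V]$ and $H[V - H_\VH]$, a routine graph-theoretic computation.
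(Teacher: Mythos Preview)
The paper does not prove this statement at all: the \texttt{prop} environment here is being used for a \emph{definition}, not a proposition in the logical sense. The text simply introduces the notation $|H^c|\in\{T,F\}$ and explains what each value is meant to signify; there is no accompanying proof, just as there is none for the companion $|H^n|$ property immediately preceding it. So there is nothing in the paper for your proposal to be compared against.

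Your proposal therefore targets a proof obligation that the paper never sets up. You have reinterpreted the statement as a well-definedness claim and then tried to discharge it by arguing that $|H^c|$ is determined by whether $|H^n|=0$ or $|H^n|\geq 1$. But the paper treats $|H^n|$ and $|H^c|$ as \emph{independent} descriptors that are listed side by side in Constraints~\ref{cst:1} and~\ref{cst:2} and elsewhere (e.g.\ ``$|H^n|=1$, \ldots, $|H^c|=F$'' versus ``$|H^n|=0$ and $|H^c|=T$''); it never asserts or relies on the reduction you propose, and nothing in the paper would license inferring one from the other. In short, you have supplied an argument for a claim the paper does not make, using a characterisation the paper does not adopt.
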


The two basic constraints are as follows.

\begin{constraint} 
\label{cst:1}
If $H_\VH \neq 0$, $H$ can't have a creatable component $H^\prime \supset H[V-H_\VH]$ with $|H^n|=1$,$\{V(H^\prime) \cap \{ v \cup N(v) \cup \raiz \cup N(\raiz) \}\} = \emptyset$ and $|H^c|=F$.
\end{constraint}

\why{If $H^\prime$ is created and reached by either $v$ or $\raiz$, $\omega(G-w) > \omega(G)$ may hold for $G - w$ with $w \in \{v , \raiz\}$. Such situation is invalid since it can potentially make SFCM-R ignore Theorem \ref{thm:1} completely. SFCM-R assumes that every $w$ is reachable from either $v$ or \raiz, without ignoring Theorem \ref{thm:1} completely.} 

\begin{constraint}
\label{cst:2} 
If $H_\VH \neq 0$, $H[V-H_\VH]$ can't have a component $H^\prime \supseteq H[V-H_\VH]$ with $\raiz \in V(H^\prime)$, $|H^n|=0$ and $|H^c|=T$.
\end{constraint}

\why{
In this case, $\raiz$ can't reach other components. Such situation is invalid since it can make SFCM-R ignore Theorem \ref{thm:1} completely. SFCM-R assumes that every $w$ is reachable from either $v$ or \raiz, without ignoring Theorem \ref{thm:1} completely. 
}
\vspace{5mm}

In addition, SFCM-R can't have an exponential complexity. Otherwise, we're implicitly trying to solve this problem by imitating RS-E. Such situation is clearly invalid seeing that SFCM-R needs to try to imitate the behaviour of RS-R. That's why every \succv\space choice must be \textit{goal-oriented} in both two phrases of SFCM-R. In other words, both phrases must be goal-oriented. Throughout this paper, we prove that both phrases are imitating the behaviour of RS-R. Such proofs shall be presented with an appropriate background. (see Sect.~\ref{sec:6} and \ref{sec:13}).

\begin{definition}
Goal-oriented choice is a non-probabilistic  \succv 
\space choice that involves minimal scene directly and real scene partially. 
\end{definition}

As SFCM-R passes through $H$ instead of $G$, we're considering minimal scene directly. In addition, the real scene is considered partially since some basic constraints related to RS-R are evaluated by SFCM-R. Throughout this paper, constraints followed by an intuitive description shall be presented in this order by convention. All the goal-oriented strategies developed through this research shall be presented along with an appropriate background (see Sect.~\ref{sec:11}).

Notice that we can assume that RS-R generates only \textit{consistent} $C_\VH$ components in order to construct a valid hamiltonian sequence (if it exists), or throws an error when $G$ doesn't have any hamiltonian sequence in order to abort itself. For that reason, we represent the real scene algorithm as follows. 

\inicioAlgoritmo{Non-mirrorable RS-R algorithm  }
\Input  \G, $v$, $P$
\Output Hamiltonian sequence $P$
\Function{Hamiltonian-sequence}{}
\State $A \gets N(v)$

\State $X \gets $ every $w^\prime$ such that $A(w^\prime, H)=T$
\For  {\textbf{each} $ u \in A$}
\If {$u \in X$ \textbf{or} $v \to u=F$}
	\State $X \gets X \cup \{u\}$
\EndIf
\EndFor

\State $A \gets A - X$
\If{$A \neq \emptyset$}
	
	\State $v \to u=T$ with $u \in A$
	\State $P \gets P \cup \{u\}$
	\State \textsc{Hamiltonian-sequence($G$, $u$, $P$)}
\Else

\State $P \gets \{v\} \cup P$
 \If{$|P| \neq |V(H)|$}
	\State \textbf{throw} error 
 \EndIf
\EndIf

\State \Return P
\EndFunction
\fimAlgoritmo{}

As this version of RS-R doesn't have any explicitly relationship with the proposed minimal scene, it needs to be modified to properly represent a \textit{mirrorable} real scene algorithm, which is the real scene algorithm we want to directly mirror in reconstruction phrase. Such modification shall be presented with an appropriate background. 

For conciseness, we use RS-R to represent the non-mirrorable RS-R algorithm in order to avoid confusion unless the term \textit{mirrorable} is explicitly written. The reason is that the correctness of SFCM-R implies that both non-mirrorable RS-R algorithm and mirrorable RS-R algorithm are conceptually equivalent mirrorable algorithms, which consequently implies that there's no specific reason to differentiate one from another throughout this paper.

In summary, the main goal of SFCM-R is to imitate the behaviour of  RS-R in order to avoid using probability, which is a known behaviour of RS-E. That's why the second phrase, that is called reconstruction phrase, which is represented by the function \textsc{Reconstruct}, aborts the process if it's forced to use probability while reconstructing $P$. Such reconstruction process is explained in section \ref{sec:7}. 

\subsection{Mapping phrase}
\label{sec:3}
In this section, the mapping phrase is explained. This phrase outputs a \textit{non-synchronized hamiltonian sequence} that is called $L_e$ set. Such set is used by reconstruction phrase, which tries to reconstruct a hamiltonian sequence by modifying $L_e$ in order to output a valid hamiltonian sequence (if it exists). The mapping task is done by the \textsc{Mapping} function. This function takes both \Gv and $G=(V^\prime, E^\prime)$ as input by reference along with additional parameters ($L_e$, $v$, $\eta$, $\varepsilon$, $m$, $\kappa$, $S$) by reference and keeps calling itself recursively until reaching its base case.

\begin{definition}
Let \Gv be a minimal scene. A non-synchronized hamiltonian sequence is a sequence $L_e=(e_i ... e_{n})$, $L_e \supseteq E(H)$ of path fragments.
\end{definition}

By convention, the $(x,y)$ notation will be used to represent non-synchronized edges $xy$ created by  \textsc{Mapping}. $(w,\square)$ is a non-synchronized edge $e \in L_e$ with $w \in e$. $\{L_e \cap (w,\square)\}$ is an ordered set that contains each $e \in L_e$ with $w \in e$. The mapping task performed by \textsc{Mapping} has the following structure:

\paragraph{Base case}

(1) $|V(H)|=0$ or (2) $\varepsilon > \eta$ forms the base case of recursion. If base case is reached by first condition, then we assume at first that every $\VB^C$ will be $\VB^T$ in some state $k$

\paragraph{Degeneration state}

The current state of \textsc{Mapping}, in which the main operations are as follows: (1) perform $V-v$; (2) perform \succv; (3) perform a recursive \textsc{Mapping} call; and (4) throw an error exception.
\\

In degeneration state, some constraint must make $v \to v_{LABEL} = T$ hold for at least one $v_{LABEL}$. If we don't have any $u = v_{LABEL}$ with  $v \to v_{LABEL} = T$, we have to undo one step and try an alternative choice in current scene until $\varepsilon > \eta$, with $\varepsilon$ being the local error counter of \textsc{Mapping} and $\eta$ being the local error counter limit of \textsc{Mapping}.

The \textsc{Sync-Error} procedure is called by \textsc{Mapping} whenever it finds an inconsistency. Such procedure increments both $\varepsilon$ and $\kappa$ by reference, with $\kappa$ being a global error counter of \textsc{Mapping}.  If $\varepsilon > \eta$ , the current subscene must be discarded by \textsc{Mapping} and the degeneration state is changed to another $v$ in an earlier valid subscene. On the other hand, if $\kappa > m$, with $m$ being the global error counter limit of \textsc{Mapping}, the mapping process must be aborted.

\inicioAlgoritmo{Pre-synchronization error handler}
\Input  \Gv,  $\eta$, $\varepsilon$, $m$, $\kappa$, $\textit{throw-error}$
\Procedure{Sync-Error}{}

	\State $\varepsilon \gets \varepsilon + 1$
	\State $\kappa \gets \kappa + 1$
	\If{$\kappa > m$}
		\State Abort mapping process
	\ElsIf{$\varepsilon > \eta$}
		\State Discard $H$
	\EndIf
	\If {$\textit{throw-error}=\textbf{true}$}
		\State \textbf{throw} error
	\EndIf

\EndProcedure
\fimAlgoritmo{}

Every constraint must be checked into $H[V - v]$. In order to check if a constraint holds for $u=v_{LABEL}$, \textsc{Mapping} must update the labelling of $H$ by calling \textsc{Lv-label}$(H[V - v])$ function only. Some constraints have nested constraints that induce  $H^{\prime} = H[V-v]$ by a set $U \supset V$. These nested constraints also need to be checked into $H^\prime[U]$ by calling \textsc{Lv-label}$(H^{\prime}[U])$ function only. 

The only case that requires the current subscene to be completely changed is when $H[V-v]_\VH \neq \emptyset$. In this case, we have to perform a \textit{Context Change (CC)} operation into a new subscene $H^\prime \supset H$, due to the fact that $\VH$ must be reachable by $v$ or $\raiz$, without ignoring Theorem \ref{thm:1} completely. Because of that, a creatable component $H^{\prime} \supset H[V-H_\VH-v]$ such that $|H^n|=1$,  $\{ V(H^\prime) \cap \{N(\raiz) \cup \raiz\}\}  \neq \emptyset$, ,$|H^c|=F$ needs to be explicitly created by \textsc{Mapping} since the minimal scene is not aware of the existence of real articulation points. We call this creatable component $\Hs$. After $\Hs$ is created, it need to be configured by the following operations: $V(H^{\prime}) = V(H^{\prime}) \cup \{\raiz,|H_\VH^n|\}$, $H^{\prime}_v=\raiz$, $H^{\prime}_\raiz=|H_\VH^n|$. When it's processed by \textsc{Mapping}, the current labelling of $H$ becomes obsolete. Because of that, $H$ also needs to perform a CC operation. 

Notice that an edge $\raiz v$ is added temporarily whenever \textsc{Mapping} make a CC operation, which is done by the function \textsc{Context-change}, in order to make both  \textsc{Lv-Label}($H$) and \textsc{Maximum-induction}($H$) work correctly. That's because such $v$ will act like a vertex $u$ that was chosen by $\raiz \to v = T$ in an imaginary state with $H[V-\raiz]_\VH=\emptyset$, which makes $\Hs$ and the degeneration state behave like $v = \raiz$ in maximal $H \equiv G$,$\raiz \in V(H)$. 

\inicioAlgoritmo{Context Change (CC) Operation}
\Input  \Gv,  $w \in V(H)$,  $v \in V(H)$
\Output Scene \Gv
\Function{Context-change}{}
		\If {\text{First context change of $H$}}
			\For {\textbf{each} $y \in V(H)$} 
				\State $y.LAST \gets \textit{null}$
				\State $y.SPLIT \gets \textit{true}$
			\EndFor
		\EndIf
		\State $H_{\raiz} \gets w$
		\State $\textit{edge\_created} \gets false$ 
		\State $e \gets \textit{null}$ 
		\If {$w \neq v$ \textbf{and} $w \notin N(v)$}
		\State $e \gets vw$
		\State $E(H) \gets E(H) \cup \{e\}$
		\State $\textit{edge\_created} \gets true$ 
		\EndIf
		\State $\Omega \gets \textsc{Maximum-induction}(H)	$
		\State $L \gets \textsc{Lv-Label}$ $(H)$	
		\If {\textit{edge\_created}}
		\State $E(H) \gets E(H) - \{e\}$
		\EndIf
\State \Return $H$
\EndFunction
\fimAlgoritmo{}

The constraints considered in this phrase are defined as follows.

\begin{constraint}
\label{cst:3}
{\large{$v \to \VD=T$}}, 
if $\VD \sim \VA$ and $N_\VA(\VA)=\emptyset$. \end{constraint}

\why{
As \VA\space is considered an isolated component by minimal scene, it can't influence the labelling of any $\VA{^\prime}$ directly.
}

\begin{constraint}
\label{cst:4}
{\large{$v \to \VD=T$}},
if we have at least one $\VA^T$ for $H[V-\VD-P]$ with $P$ being a \VH-path $P = w_i ... w_k$ generated by $H[V-\VD]$ such that $w_1=\VD$, $1 \leq i \leq k$, and $d(w) = 2$ with $w \in P$ such that $w \neq w_1$. 
\end{constraint}

\why{
In this case, \VD\space is part of a degeneration process. As $P$ is a mandatory path of subdivisions,  $H[V-\VD-P]$ is performed in order to check if $w_k$ also behaves like \VD\space since $v=w_k$ will hold for $v$ eventually.
}

\begin{constraint}
\label{cst:5}
{\large{$v \to \VL=T$}},
if we have at least one $\VA^T$ for $H[V-\VL-P]$ with $P$ being a \VH-path $P = w_i ... w_k$ generated by $H[V-\VL]$ such that $w_1=\VL$, $1 \leq i \leq k$, and $d(w) = 2$ with $w \in P$ such that $w \neq w_1$. 
\end{constraint}

\why{
A \VL\space is a leaf on its minimal state, that can act like a $\VD^\prime$ with  $d(\VD^\prime)=1$ that degenerates a $C_\VA$ such that $\VD^\prime \in C_\VA$. In this case, \VL\space is behaving like a leaf $w$ of RS-R such that $w \in C_\VH$ instead of a \VB\space vertex since it's part of a degeneration process.

}

\begin{constraint}
\label{cst:6}
{\large{$v \to \VD=T$}}, 
if $\VD \sim \VA$ and $\VD \notin \tau \wedge (A(\VA,\tau)=T) \wedge (|N_\VA(\VD)|=1)$. \end{constraint}

\why{
In this case, \VD\space doesn't influence the labelling of any \VA\space vertex directly since $\VD \notin \tau$ and $|N_\VA(\VD)|=1$. 
}

\begin{constraint}
\label{cst:7}
{\large{$v \to \VD=T$}}, 
If $\VD \sim \VL \wedge \VD \to \VL=T$.
\end{constraint}

\why{
If $\VD \sim \VL$ and $\VD \to \VL=T$ we assume that $\VD\to \VL=T$ may be the next choice. 
}

\begin{constraint}
\label{cst:8}

{\large{$v \to \VD=T$}}, 
if  there exists a \VH-path $P = w_i ... w_k$,$w_k \sim \VL$, generated by $H[V-\VD]$ such that: (1) $w_1=\VD$, $1 \leq i \leq k$, $d(w) = 2$ with $w \in P$ such that $w \neq w_1$; and (2) $w_k \to \VL=T$ in $H[V-\VD-P]$.
\end{constraint}

\why{
If there exists $P$, which is a mandatory path of subdivisions, we check if $w_k \to \VL=T$ holds for $\VL$ since $v=w$ with $w=\VL$ will hold for $v$ eventually.
}

\begin{constraint}
\label{cst:9}

{\large{$v \to \VD=T$}},
If $\VD \sim \VA$, and (1) $A_v(\VA,H)=F$ in $H$ or (2) $A_v(\VA,H)=F$ in $H[V-\VD]$.
\end{constraint}

\why{
In this case, we have $0 \leq |N_\VD(\VA)| \leq 1$ .Thus, such \VA\space is not a consistent virtual articulation since we have  $0 \leq |C_\VA| - 1 \leq 1$. 
}

\begin{constraint}
\label{cst:10}

{\large{$v \to \VD=T$}},
 if $|H_\VA| = 0$.
\end{constraint}

\why{
If $|H_\VA| = 0$ and $|H_\VD| \neq 0$, then $|H_\VL| \neq 0$. In such state, \textsc{Mapping} tries to make \VL\space behave like leafs  $w$ of real scene such that $w \in C_\VH$.
}

\begin{constraint}
\label{cst:11}
If there's no other valid choice for $v$, we have {\large{$v \to \VA\VN=T$}}, {\large{$v \to \VI=T$}}, and {\large{$v \to \VN=T$}} . 
\end{constraint} 

\why{
Vertices labelled as $\VI$, $\VN$ and $\VA\VN$ aren't part of any $C_\VB$ directly. 
}

\subsubsection{Goal} 
\label{sec:4}
The goal of mapping phrase is to output a valid $L_e$ set ready to be reconstructed in next phrase. As a consequence, if \textsc{Mapping} generates an inconsistent $\VH$\space that prevents $L_e$ from being a hamiltonian sequence, \textsc{Reconstruct} will be able to degenerate such inconsistency and generate another $\VH^\prime$ to change parts of $L_e$ until we have a valid hamiltonian sequence (if it exists) by correcting parts of mapping process. We call this process \textit{$C_\VH$ attaching} or \textit{minimal scene attachment}, because inconsistent $C_\VH$ components are degenerated by considering minimal scene directly and real scene partially. Such process is done in reconstruction phrase by using a goal-oriented approach (see Sect.~\ref{sec:10}).

\begin{definition}
A $C_\VH$ attaching is when we choose a vertex $u$ with $u \in C_\VH$ before $C_\VH$ makes a scene $H^\prime \supseteq H$ be inconsistent in current state of SFCM-R. A $C_\VH$ is attached when: (1) $H^\prime_\VH=\emptyset$ holds for $H^\prime-u$; or (2) a \VH-path $P$ generated by $H^\prime-u$ that doesn't generate any inconsistency in $H^\prime-P$.
\end{definition}

\begin{definition}
A \VH-path is a path $P = P_{\VH_i}...P_{\VH_k}$, generated by $H[V-v]$ with $H[V-v]_\VH \neq \emptyset$, such that $1 \leq i \leq k$,$1 \leq k \leq |V|$, in which every $\VH_i$ reaches $\VH_{i+1}$ properly.
\end{definition}

The key to constructing a valid $L_e$ is take into account the priority order of each choice. The priority order plays an important role in this phrase since it will contribute to make the mapping phrase imitate the behaviour of RS-R.  The priority order relies on the label of $u$. If the priority is $n$ times higher than an arbitrary constant $i$, it will be denoted as $v^{i+n}_{LABEL}$. The highest priority is to make a $\VB^C$ be $\VB^T$. So we will have $\VL^{i+4}$ and $\VD^{i+3}$. \VL\space has the highest priority because it can potentially make $|H_\VA|$ increase since it's considered a potential real articulation point according to minimal scene's description. If \textsc{Mapping} can't make any $\VB^C$ be $\VB^T$ in its current state, we want to perform a CC operation instead of undoing states. Thus, we will have $\VA\VN^{i+2}$ since these vertices can generate \VH\space articulations with a considerable probability due to $d(\VA\VN)=2$. If we don't have any $C_\VB \sim v$, we have $\VN^{i+1}$ in order to make \textsc{Mapping} reach different regions of $H$. The lowest priority is for $\VI$. So we have $\VI^{i}$ for vertices labelled as \VI.

Notice that we don't have any constraint that makes $v \to \VA=T$ hold for $v$, since it can disconnect the minimal scene according to its description. Even so, we will have $v \to \VA=T$ in some state $k$ of SFCM-R if \textsc{Reconstruct} outputs a valid hamiltonian sequence. It means that the constraints related to vertices labelled as \VA\space can't be evaluated directly in this phrase.

\subsubsection{Algorithm} 
\label{sec:5}
In this section, the pseudocode of \textsc{Mapping} is explained. Every line number mentioned in this section refers to the pseudocode of \textsc{Mapping}, which is as follows. 

\inicioAlgoritmo{Mapping of $H$}

\Input  \Gv, $G=(V^\prime, E^\prime)$,  $L_e$,$v$,$\eta$, $\varepsilon$, $m$, $\kappa$,$S$
\Output Set $L_e=e_0 ... e_n$ of non-synchronized edges
\Function{Mapping}{}

\If{$|V(H)| \neq 1$}

	\If {$v.SPLIT$ \textbf {and} $H[V-v]_\VH \neq \emptyset$}
	
	\If {constraint \ref{cst:1} or \ref{cst:2} doesn't hold for $H[V-v]$}
		\State \textsc{Sync-Error($H$, $\eta$, $\varepsilon$,  $m$,$\kappa$, \textbf{true}})
		\EndIf
	\State $H-v$
	\State Set and configure $\Hs$ in $H[V-H_\VH]$
	\EndIf

	\If{$\Hs$ was set and configured}
	\Try
	\State $\Hs \gets$ \textsc{Context-change($\Hs$, $\Hs_\raiz$,$\Hs_v$)}
		\State \textsc{Mapping($\Hs$,$G$,$L_e$,$\Hs_v$,$\eta$, 0,  $m$,$\kappa$,$S$)}

	\State Update $H$
	\If{ $w \in V(H)$ with $w \equiv \Hs_\raiz$}
		\State $v.SPLIT \gets \textit{false}$ 
	\EndIf
	\State Restore $v$ and $w \in V(H)$ with $w \equiv \Hs_\raiz$
	\State $\raiz \gets w$
	\State $H \gets$ \textsc{Context-change($H$,$\raiz$, $v$)}

	\State \textsc{Mapping($H$,$G$,$L_e$,$v$,$\eta$, $\varepsilon$, $m$,$\kappa$,$S$)}
	
	\If {$N(v) = \emptyset$ in $H$}
		\State $H-v$
	\EndIf
	\EndTry
	\Catch{error}
	\State \textsc{Sync-Error($H$, $\eta$, $\varepsilon$, $m$, $\kappa$, \textbf{true}})
	\EndCatch
	\Else
	\If{$N(v) \neq \emptyset$ in $H$}
	\State $\textit{found} \gets \textit{false}$
	\State $v.SPLIT \gets \textit{false}$
	\If {$v.U$ is not set}
		\State $L \gets \textsc{Lv-Label}(H[V-v])$
		\State $v.U \gets \text{ every } v_{LABEL} \in N(v) \text{ such that } v \to v_{LABEL} = T$ in $H[V-v]$
	\EndIf
	
		\While {there exists a non-visited $u$}
			\State $u \gets$ a non-visited \textit{u} $\in v.U$  with highest priority chosen randomly
			\Try 
			\State \textsc{Select($G$, $v$, $u$, $L_e$, $S$)}
		
			\State \textsc{Mapping($H$,$G$,$L_e$,$u$,$\eta$,$\varepsilon$, $m$,$\kappa$,$S$)}
				\State {$\textit{found} \gets \textit{true}$}
				\State \textbf{break}
			\EndTry
			\Catch {error}
					\State \textsc{Sync-Error($H$, $\eta$, $\varepsilon$,  $m$,$\kappa$, \textbf{false}})
				\State Undo modifications in $H$, $L_e$, and $S$
				
			\EndCatch
		
		\EndWhile

		\If {$\textit{found} = \textit{false}$}
			\State \textsc{Sync-Error($H$, $\eta$, $\varepsilon$, $m$, $\kappa$, \textbf{true}})
		\EndIf
	
	\Else

		\State $H-v$
	
	\EndIf

\EndIf

\Else
\State \textsc{Select($G$, $v$, $v$, $L_e$, $S$)}
\EndIf

\State \Return $L_e$
\EndFunction
\fimAlgoritmo

Firstly, a \textsc{Context-change}(\raiz,\raiz,$H$) call is needed to calculate $\Omega$ and $L$ of $H$ such that $|H_\VH|=0$,$H_\raiz=\raiz$ ,$H_v=\raiz$ before the first \textsc{Mapping} call. When \textsc{Mapping} is called, if $H[V-v]_\VH \neq \emptyset$ , \textsc{Mapping} must remove $v$ from $V(H)$ in order to create a valid \Hs\text{ }component with $|H^c|=T$ (line 7). In addition, every $w \in V(\Hs)$ must be deep copies of $w \in V(H)$ because we treat vertices as objects in order facilitate the understanding of the proposed pseudocode.

Its important to mention that \textsc{Mapping} needs to call \textsc{Context-change} function before \textsc{Mapping} call itself recursively if $\Hs$ was set (lines 10 and 17). The new $\raiz$ is set to $w \equiv \Hs_\raiz$ with $w \in V(H)$ (line 16).  Every vertex $x \equiv x^\prime$, $x \in V(H)$, $x^\prime \in V(\Hs)$, $x^\prime \in S$ that was removed from $\Hs$ by a \succv\space operation made by \textsc{Select} function (lines 33 and 45) must be also removed from $H$ before a CC operation (line 12), including $\Hs_\raiz$, despite to the fact that is restored in $\Hs$. This rule doesn't apply for vertices removed from $H$ when $N(v) = \emptyset$ (lines 20 and 43) since such $v$ may be part of another $\Hs$ component in different recursive calls. If $w \in V(H)$ with $w \equiv \Hs_\raiz$, the split property $v.SPLIT$ is set to \textit{false} (line 14). In this case, $\Hs_\raiz$ was not explicitly reached by any \succv\space operation made by \textsc{Select}. As we're ignoring Theorem \ref{thm:1} partially, we need to force $v$ to be changed in next recursion call in order to make \textsc{Mapping} create a new $\Hs$ since a new $\Hs_\raiz$ may happen to be explicitly reached by a \succv\space operation made by \textsc{Select} in a new $\Hs$.

If $H[V-v]_\VH = \emptyset$, we need to follow the constraints and priorities mentioned earlier to set $v.U$, which is the set of possible successors of $v$ (lines 27 to 29), and set $v.SPLIT$ to \textit{false} (line 26). In this case, If $v \to v_{LABEL}=T$ holds for at least one $u=v_{LABEL}$, we must: (1) perform \succv; (2) perform $S \gets S \cup v$ in order to update $H$ properly; (3) add a non-synchronized edge $(v.LAST,v)$ to $L_e$; and (4) perform $u.LAST \gets v$. \textsc{Mapping} needs to call \textsc{Select} in order to do such operations by reference when the context remains unchanged (lines 33 and 45). 

\begin{observation} Notice that, as \succv\space performs $V-v$ by convention, $G=(V^\prime, E^\prime)$ is not changed. The reason is that we use $G$ to make \textsc{Mapping} keep track of adjacency between $v.LAST$ and $v$ in the maximal $H \equiv G$.
\end{observation}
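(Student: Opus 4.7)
The plan is to verify two things: (i) no execution path of \textsc{Mapping} or any routine it invokes writes to $V^\prime$ or $E^\prime$, and (ii) the convention that \succv\space abbreviates ``perform $V - v$'' resolves $V$ to the vertex set of the current scene, which is always $H$ or $\Hs$ inside \textsc{Mapping}, never $G$. Together these give the stated invariance of $G$.

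First I would make the convention precise. In \textsc{Mapping}, the ``current scene'' at the point of a \succv\space operation is the $H$ (or $\Hs$) passed as the first parameter of the enclosing recursive call, not the graph $G$ that appears as a separate parameter. So, by a structural reading of the pseudocode, every occurrence of $V - v$ is a modification of $V(H)$ or $V(\Hs)$ together with incident edges of that scene only, and the \succv\space operator never dereferences $V^\prime$ or $E^\prime$.

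Next I would perform a line-by-line audit of the \textsc{Mapping} pseudocode and the subroutines it invokes: \textsc{Context-change}, \textsc{Lv-Label}, \textsc{Maximum-induction}, \textsc{Sync-Error}, and later \textsc{Select}. Every mutation visible in the listings acts on $H$, $\Hs$, their labels, $L_e$, $S$, or per-vertex attributes such as $v.U$, $v.LAST$ and $v.SPLIT$; no statement writes to either component of $G = (V^\prime, E^\prime)$. In particular, the temporary edge addition $E(H) \gets E(H) \cup \{e\}$ inside \textsc{Context-change} operates on $H$ and is rolled back by the companion removal, so even this scratch modification leaves $G$ untouched. For the recursive calls it then suffices to observe that they pass the same reference $G$ downward, so by induction on the recursion depth the invariant propagates.

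The main obstacle will be that \textsc{Select} is referenced but its pseudocode is not in the present excerpt. I would discharge this by appealing to the contract implicitly fixed at the end of Section~\ref{sec:5}: \textsc{Select} performs \succv\space (hence $V(H) - v$ by the convention above), appends the non-synchronized edge $(v.LAST, v)$ to $L_e$, updates $S \gets S \cup \{v\}$, and sets $u.LAST \gets v$; none of these touch $G$. Finally, to justify the purpose clause of the observation, I would note that an unmodified $G$ is precisely what lets \textsc{Mapping}, after many deletions in $H$, still test adjacency between $v.LAST$ and $v$ in the maximal $H \equiv G$ when a new fragment of $L_e$ is emitted; without an immutable copy, the repeated $V - v$ operations would erase the very adjacency information that the non-synchronized path fragments rely on.
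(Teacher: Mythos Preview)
Your verification is correct, but note that the paper offers no proof here at all: this is an \emph{Observation}, i.e.\ an explanatory remark inserted after the description of \textsc{Select}, and the paper simply asserts it without argument. So there is no ``paper's own proof'' to compare against; your line-by-line audit is already far more than the paper provides.

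One factual correction: you say the pseudocode of \textsc{Select} ``is not in the present excerpt'', but it is---it appears as Algorithm~9, immediately before the Observation. Reading it confirms exactly the contract you inferred: \textsc{Select} performs \succv, updates $S$, possibly restores $\raiz$, tests ``$v.LAST \in N(v)$ in $G$'' (a \emph{read} of $G$, which is precisely the adjacency lookup the Observation is explaining), appends to $L_e$, and sets $u.LAST$. No line writes to $V^\prime$ or $E^\prime$. So your handling of \textsc{Select} via its contract was right, but you can replace that workaround with a direct citation of the listing.
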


If $v \to v_{LABEL}=F$ happens to hold for $w=v_{LABEL}$ with $w \in v.U$ due to an error thrown by \textsc{Sync-Error} (line 38), \textsc{Mapping} must undo modifications made in $H$, $L_e$ and $S$ to restore its state before choosing a new unvisited $u \in v.U$ as successor (line 39). On the other hand, if $v \to v_{LABEL}=F$ holds for every $v_{LABEL} \in v.U$, we need to undo the last step and try an alternative, incrementing both $\kappa$ and $\varepsilon$ by calling \textsc{Sync-Error} (line 41). Every error found in mapping phrase must increment $\kappa$ and $\varepsilon$. If $\varepsilon > \eta$, the current subscene $H$ must be discarded by \textsc{Mapping}, that needs to perform undo operations to choose another $v$ in an earlier valid subscene. On the other hand, if $\kappa > m$, the process must be aborted. In this phrase, a vertex can't have more than two incident edges since $L_e$ must be an ordered  sequence of path fragments. Therefore, \textsc{Select} must remove the first element of $S=\{L_e \cap (w,\square)\}$ from $L_e$ by reference if $|S| > 2$. \textsc{Select} is as follows.

\inicioAlgoritmo{Non-synchronized edge handler}

\Input $G=(V^\prime, E^\prime)$, $v$, $u$, $L_e$, $S$
\Procedure{Select}{}
	\State \succv
	\State $S \gets S \cup v$
	\If {$v=\raiz$}
		\State Restore \raiz	
	\EndIf
	\If{$v.LAST \in N(v)$ in $G$ \textbf{and} $v \neq u$}
		\State $L_e \gets L_e \cup (v.LAST,v)$
		\State Remove the first element of $S=\{L_e \cap (w,\square)\}$ from $L_e$ if $|S| > 2$
	\EndIf
	\State $u.LAST \gets v$

\EndProcedure
\fimAlgoritmo

In addition, \textsc{Mapping} must never remove $w = \raiz$ from $H$ except in two cases. The first case is before a CC operation that makes $w \neq \raiz$ hold for $w$ (line 17). The second case is when $N(v)=\emptyset$ and $v=\raiz$ (lines 20 and 43). Also, \textsc{Mapping} can't have $v \to \raiz=T$ with $v \neq \raiz$ unless $v \sim \raiz$, $d(v)=1$, $d(\raiz)=1$ and $|V(H)|=2$. Such restriction imitates the way that RS-R reaches \raiz from $v$.

\subsubsection{Proof of correctness}
\label{sec:6}

This section is dedicated to the proof of correctness of mapping phrase. It's important to mention that SFCM-R can only use goal-oriented choices. Because of that, we need to prove that \textsc{Mapping} is goal-oriented. Consider the following lemmas.

\begin{lemma}
\label{lem:1}
Let \Gv be a scene. $|H_\VA \cap H[V-v]_{\VH}|\leq |H[V-v]_\VH|$ holds for every $H$ with $|V| > 4$  and $H_\VA \cap H[V-v]_\VH \neq \emptyset$.
\end{lemma}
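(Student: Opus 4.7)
The plan is to unpack the notation and reduce the claim to a trivial set-inclusion inequality. By the conventions established earlier, $H_\VA$ denotes the set $\{w \in V(H) : \VA \in L_w\}$ of vertices carrying the minimal-articulation label, while $H[V-v]_\VH$ denotes the set of real articulation vertices of the induced subscene $H[V-v]$, i.e.\ those $w \in V(H) \setminus \{v\}$ with $A(w, H[V-v]) = T$. In particular both sets are subsets of $V(H) \setminus \{v\}$, and the intersection is well defined.

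The core step is simply to observe that for any two sets $A$ and $B$ one has $A \cap B \subseteq B$, hence $|A \cap B| \leq |B|$. Applying this with $A = H_\VA$ and $B = H[V-v]_\VH$ yields the claimed bound at once; no further combinatorial argument is required for the inequality itself.

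What the two side conditions contribute is nondegeneracy rather than content for the inequality. The hypothesis $H_\VA \cap H[V-v]_\VH \neq \emptyset$ ensures that both sides are at least $1$, so the statement is not vacuous and equality $|H_\VA \cap H[V-v]_\VH| = |H[V-v]_\VH|$ is in principle attainable; the hypothesis $|V| > 4$ excludes the small graphs on which the labelling procedure \textsc{Lv-label} fails to produce a \VA-vertex and a \VH-vertex at the same time (since having both kinds of articulations simultaneously requires enough vertices to sustain the relevant component structure in some $\tau \in \Omega$ as well as in $H[V-v]$).

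The main obstacle, such as it is, lies not in the inequality but in confirming that the two labelled sets are being interpreted in the same scene: $H_\VA$ is computed from the labelling of $H$, whereas $H[V-v]_\VH$ is computed from the real-articulation function $A(\cdot, H[V-v])$ on the reduced graph. Since $A(w, H')$ depends only on the graph $H'$ passed as its second argument, the $\VH$-labelling in $H[V-v]$ is unambiguous, and the set-inclusion argument above goes through without any further verification.
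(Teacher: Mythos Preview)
Your proof is correct: the inequality $|H_\VA \cap H[V-v]_\VH| \leq |H[V-v]_\VH|$ is indeed nothing more than the set-theoretic triviality $|A \cap B| \leq |B|$, and your reading of the two side conditions as nondegeneracy hypotheses (ensuring the intersection can actually be nonempty) is the right interpretation.

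The paper, however, proceeds quite differently. Rather than invoking the set-inclusion argument, it builds explicit examples: starting from a minimal hamiltonian cycle, it argues that for $|V| \leq 4$ the labelling procedure yields $|H_\VA| = 0$ (so the hypothesis $H_\VA \cap H[V-v]_\VH \neq \emptyset$ fails), and then for $|V| > 4$ it constructs a configuration where $H_\VA = H[V-v]_\VH$ (equality) and another where removing an edge forces $|H_\VA| < |H[V-v]_\VH|$ (strict inequality). In effect the paper's argument is aimed at showing that the side conditions are sharp and that the bound is attainable, rather than at justifying the inequality itself; as a proof of the universal statement it is example-based and does not cover arbitrary $H$, whereas your one-line set-inclusion argument does. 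What the paper's approach buys is concrete witnesses for tightness and for the necessity of $|V| > 4$; what your approach buys is an actual proof of the stated inequality in full generality.
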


\begin{proof}
Let \Gv be a scene such that $H$ is a minimal hamiltonian graph, $E=w_iw_{i+1} ... w_{n-1}w_n$ and $w_i=w_n$. Suppose that $|V|=4$. For every $H[V-v]$ with $v \in V(H)$, we have $|H[V-v]_\VH|=|V| - 3$, which is the maximum value possible of $|H[V-v]_\VH|$. If we call \textsc{Context-Change}(\raiz,\raiz,$H$) with \raiz being an arbitrary vertex $w \in V$, the first $\tau \in \Omega$ will have $|\tau_\VH|=0$ since $|V(\tau)|=1$. Let's add a vertex $w^\prime$ and an edge $w^\prime w_i$ to $H$, set $w^\prime$ to \raiz and call \textsc{Context-Change}(\raiz,\raiz,$H$). As $V(\tau) = V(H) - N(\raiz) - \raiz$ holds for first tier $\tau \in \Omega$, $|V| - 4$ is the maximum value possible of $|\tau_\VH|$. Notice that: (1) if we had $d(\raiz)=2$, $d(w)=1$ would hold for every $w \in V(\tau)$; and (2) if we had $d(\raiz)=3$, $|\Omega|=0$. Thus, $|H_\VA|=0$ holds for $H$ when $|V| \leq 4$. Now suppose that $|V| > 4$. Let's connect $w_i$ with every vertex except $\raiz=w^\prime$ and call \textsc{Context-Change}(\raiz,\raiz,$H$) again. In this case, $H_\VA = H[V-v]_{\VH}$ will hold for $H$ with $v=\raiz$ since $d(w)>2$ in $H$ holds for every $w \in \tau_{\VH}$. Notice that if we remove at least one edge $w_i x$ with $x \neq \raiz$  and call \textsc{Context-Change}(\raiz,\raiz,$H$) again, $H_\VA < H[V-v]_{\VH}$ will hold for $H$ with $v=\raiz$ since $d(x)=2$ in $H$ holds for $x \in \tau_{\VH}$ in this case. Therefore, $|H_\VA \cap H[V-v]_{\VH}|\leq |H[V-v]_\VH|$ holds for every $H$ with $|V| > 4$  and $H_\VA \cap H[V-v]_\VH \neq \emptyset$. 
\end{proof}

\begin{lemma}
\label{lem:2}
Let \Gv be a scene. $\varepsilon$ such that $\varepsilon < |V|$ aborts the mapping task process only if at least one valid $u$ for every $v$ found is known.
\end{lemma}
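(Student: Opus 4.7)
The plan is to analyse how the local error counter $\varepsilon$ accumulates during a single invocation of \textsc{Mapping} and to bound the abortion of the mapping task by counting vertices whose candidate successor sets have been fully exhausted. The rough claim to drive the proof is: at the moment of abortion with $\varepsilon < |V|$, the number of vertices $v$ whose $v.U$ has been demonstrated empty is strictly less than $|V|$, so every $v$ still found along the current exploration retains at least one known valid successor $u$ in $v.U$.

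First I would catalogue the four call sites of \textsc{Sync-Error} in the pseudocode of \textsc{Mapping} and classify each by what it certifies: lines 5 and 22 correspond to a Context Change failure attributable to the current $v$, while lines 38 and 41 correspond to successor exhaustion inside the normal branch. In every case the \emph{throw-error} flag decides whether the subscene is immediately discarded or whether a different $u$ is tried next; I would check that an increment at line 38 is ultimately charged to the outer vertex only when the outer while-loop itself also runs out of alternatives and reaches line 41, so that increments can be associated injectively with vertices.

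Next I would invoke the construction of $v.U$ at lines 27--29, which guarantees that, before any recursive call on a non-CC branch, the candidate set is precisely the labels $v_{LABEL} \in N(v)$ for which the priority-ordered constraints \ref{cst:3}--\ref{cst:11} hold in $H[V-v]$. Consequently, whenever $v.U \neq \emptyset$ at the time $v$ is processed, at least one valid successor is already identified and recorded. This lets me pair each increment of $\varepsilon$ with either a constraint-violating CC branch on $v$ or a fully exhausted $v.U$ on $v$, and not both.

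Finally I would run the counting argument: since each vertex can be exhausted at most once in the current subscene before being removed or before the subscene itself is discarded, the total number of exhaustion-induced increments to $\varepsilon$ is bounded above by the number of vertices so processed. If the abort is triggered while $\varepsilon < |V|$, then fewer than $|V|$ such charges have been made, and therefore every $v$ still present in the current exploration prefix has a non-empty $v.U$ of successors known to satisfy $v \to u = T$, which is the desired conclusion. The main obstacle I anticipate is the CC branch, where a nested \textsc{Mapping} call on $H\star$ starts with $\varepsilon$ reset to $0$ at line 11 yet can still throw errors back to the outer call via the catch at line 22; I would have to verify that the outer $\varepsilon$ accounts for those errors without double-counting the outer $v$, so that the injective charging scheme, and hence the $< |V|$ bound, survives the Context Change recursion.
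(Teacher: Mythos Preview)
Your charging scheme has a genuine gap. You claim that increments of $\varepsilon$ can be associated \emph{injectively} with vertices, so that at most one increment is charged to each $v$. But the while-loop at lines 31--40 fires the catch block (line 38) once for \emph{every} failed candidate $u \in v.U$, and each such failure calls \textsc{Sync-Error}, incrementing $\varepsilon$. A single vertex $v$ can therefore contribute up to $|v.U|$ increments at line 38 before the final increment at line 41, so the map from increments to vertices is far from injective. Your counting argument (``fewer than $|V|$ charges means fewer than $|V|$ exhausted vertices, hence every $v$ retains a successor'') collapses once this injectivity fails.

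The paper's proof takes a completely different and much simpler route. It does not distribute increments across many vertices at all; instead it bounds the number of errors that can accumulate while processing a \emph{single} vertex $v$ by the degree $|N(v)| \leq |V|-1$, with a short case split on whether $H[V-v]_{\VH}$ is empty (the normal branch, where each neighbour contributes at most one error) or nonempty (the CC branch, where some neighbours may lie in the split-off component $\Hs$ and hence contribute at most one error through line 22). Since $|N(v)| \leq |V|-1$, an error budget of $\varepsilon < |V|$ is already enough to exhaust all neighbours of any one $v$; this is the whole argument. Your elaborate cataloguing of call sites and the attempted global charging across the exploration prefix are unnecessary, and the anticipated difficulty with the CC recursion is handled in the paper simply by noting that the inner call resets $\varepsilon$ to $0$ (line 11) and that the outer catch at line 22 contributes at most one increment per invocation.
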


\begin{proof}
If $H[V-v]_\VH=0$ and at least one valid $u$ for every $v$ found is known in mapping task, $\varepsilon = |N(v)|$ will force \textsc{Mapping} to abort the process  when there's only invalid $u$ vertices for $v$. If $H[V-v]_\VH \neq 0$, $\varepsilon$ such that $\varepsilon \leq  |N(v)|$ aborts the mapping task when there's  only invalid $u$ vertices for $v$, since at least one invalid $u \in N(v)$ may be part of different $\Hs$ components set by \textsc{Mapping}. As $|N(v)| \leq |V| - 1$,  $\varepsilon$ such that $\varepsilon < |V|$ aborts the mapping task process only if at least one valid $u$ for every $v$ found is known. 
\end{proof}

The following theorem we want to prove states that \textsc{Mapping} is goal-oriented with $\eta=|V|$ and $m=\frac{|V|^{2}-|V|}{2}$, even when it doesn't reach its base case. As a consequence, \textsc{Mapping} may require some attempts with a different vertex set as \raiz to reach its base case in order to output a set $L_e$ that maps the majority of the vertices $w \in V(H)$ (if it exists). 

\begin{observation}
The proof of the following theorem assumes that \textsc{Mapping} takes as input a connected \Gv with $H_\VH=\emptyset$. The reason is that \textsc{Mapping} needs to enforce constraints \ref{cst:1} and \ref{cst:2}, which doesn't imply that SFCM-R will fail when $H_\VH \neq \emptyset$ and also doesn't imply that \textsc{Mapping} needs to take a connected \Gv with $H_\VH=\emptyset$ in order to be goal-oriented. Thus, if we need to reconstruct a hamiltonian path in a scene $H$ with $H_\VH \neq \emptyset$, we need to reconstruct multiple hamiltonian path fragments for each $H^\prime \supset H$,$|H^n|=1$ generated by $H[H-H_\VH]$ separately in different instances of SFCM-R.

\end{observation}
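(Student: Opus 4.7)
The Observation makes three interrelated claims that need justification: (i) the hypothesis that the input \Gv is connected with $H_\VH=\emptyset$ is adopted because \textsc{Mapping} must enforce Constraints \ref{cst:1} and \ref{cst:2}; (ii) that same hypothesis is sufficient but not necessary for goal-orientedness, so SFCM-R does not automatically fail when $H_\VH\neq\emptyset$; and (iii) when $H_\VH\neq\emptyset$ one can still reconstruct a hamiltonian path by running SFCM-R independently on each component $H^\prime \supset H[V-H_\VH]$ with $|H^n|=1$. My plan is to address each claim in turn: the first as a bookkeeping convention, the second as an invariance argument about the Context Change mechanism, and the third as a constructive reduction.

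For (i), the plan is to inspect Constraints \ref{cst:1} and \ref{cst:2} directly. Both are stated in terms of creatable components $H^\prime \supset H[V-H_\VH]$ with $|H^c|=F$, and both exist to prevent \textsc{Mapping} from producing a subscene whose real-articulation behaviour would contradict Theorem \ref{thm:1}. I would argue that these constraints are only meaningful when the current input has no pre-existing real articulations, since otherwise one cannot cleanly distinguish between a component ``created'' by the present choice of $v$ and a component that was already separated before \textsc{Mapping} was invoked; the connectedness assumption on $H$ plays the analogous role at the top level. Hence both hypotheses function as a cleanness convention that keeps the induction driving the goal-orientedness proof uniform.

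For (ii), the plan is to observe that \textsc{Mapping} is already built to cope with subscenes satisfying $H[V-v]_\VH\neq\emptyset$ through the Context Change operation (lines 4--19 build $\Hs$, invoke \textsc{Context-change}, and recurse). If an input had $H_\VH\neq\emptyset$ from the start, the same CC machinery would apply with a synthetic initial $v$ playing the role a previous degeneration state would have played, exactly as described in the paragraph immediately preceding the \textsc{Context-change} pseudocode. So the hypothesis $H_\VH=\emptyset$ only buys a uniform starting point; the algorithm's failure modes are unchanged. I would formalize this by exhibiting, for any hypothetical execution starting from $H_\VH\neq\emptyset$, an equivalent execution that begins by simulating a CC and then proceeds identically, which reduces the $H_\VH\neq\emptyset$ case to the hypothesized one without invoking any additional goal-violating step.

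For (iii), the reduction itself is the main content, and I expect it to be the hardest step. The plan is: any hamiltonian sequence of $G$ must visit every vertex of $H_\VH$ exactly once, and between two consecutive articulation visits the sequence is contained entirely within a single component of $H[V-H_\VH]$. Hence the target path decomposes into fragments, one per component $H^\prime$; reconstructibility of a fragment forces $H^\prime$ to meet the articulation boundary in a controlled way, which is exactly $|H^n|=1$ (a unique attachment vertex per component). Invoking SFCM-R independently on each such $H^\prime$, with the unique boundary vertex acting as \raiz, produces each fragment, and concatenation in articulation order yields the full sequence. The main obstacle will be showing that independently produced fragments remain mutually consistent at shared articulation endpoints — i.e. that parallel runs of SFCM-R cannot force incompatible choices at a boundary. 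This is where I expect to invoke the $C_\VH$-attaching mechanism of the reconstruction phrase (Section \ref{sec:10}), which is designed precisely to reattach such fragments after the fact and thereby absorb any residual local inconsistency.
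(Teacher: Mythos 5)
The paper does not prove this statement: it is an unproven Observation inserted to justify the input hypothesis of Theorem \ref{thm:2}, so there is no proof of record to compare yours against. Your parts (i) and (ii) are faithful reconstructions of the paper's evident intent --- Constraints \ref{cst:1} and \ref{cst:2} are phrased in terms of creatable components ($|H^c|=F$) relative to a scene whose real articulations are supposed to arise only during execution, and the Context Change machinery is indeed what the paper leans on for the claim that $H_\VH\neq\emptyset$ at input is not fatal --- and nothing you say there conflicts with the text, though neither part rises above a plausibility argument.

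Part (iii) has a genuine gap. You assert that reconstructibility ``forces $H^\prime$ to meet the articulation boundary in a controlled way, which is exactly $|H^n|=1$,'' but this is false as stated: in a chain of blocks $B_1$, $w_1$, $B_2$, $w_2$, $B_3$ with $H_\VH=\{w_1,w_2\}$, the middle component $B_2-\{w_1,w_2\}$ of $H[V-H_\VH]$ has $|H^n|=2$ yet is perfectly compatible with a hamiltonian path, which enters it at a neighbour of $w_1$ and leaves at a neighbour of $w_2$. So the decomposition ``one SFCM-R instance per component with $|H^n|=1$'' does not cover all components, and your argument neither handles the $|H^n|\geq 2$ components nor shows they cannot occur; you have inherited the Observation's vagueness on this point rather than repaired it. Separately, the step you yourself flag as hardest --- mutual consistency of independently produced fragments at shared articulation endpoints --- is not actually discharged: you defer it to the $C_\VH$-attaching mechanism of Section \ref{sec:10}, but that mechanism operates within a single instance of \textsc{Reconstruct} on a single $L_e$, whereas your construction runs separate instances of SFCM-R producing separate $L_e$ sets with no shared state, and nothing in the paper guarantees that their endpoints can be concatenated through the articulation vertices without backtracking across instance boundaries. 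As it stands, (iii) is a reduction sketch, not a proof.
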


Because \textsc{Mapping} is goal-oriented by the following theorem, even when it doesn't reach its base case, SFCM-R also assumes that both \textsc{Mapping} and RS-R have pre-synchronized forbidden conditions.

\begin{theorem}
\label{thm:2}
 \textsc{Mapping} is goal-oriented with $\eta=|V|$ and $m=\frac{|V|^{2}-|V|}{2}$.
\end{theorem}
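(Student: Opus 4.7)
The plan is to check each ingredient of the definition of goal-orientedness in turn: that every \succv\space choice produced by \textsc{Mapping} is non-probabilistic, that the minimal scene is consulted directly, that the real scene is consulted only partially, and that, with the stated values of $\eta$ and $m$, the algorithm runs in non-exponential time — the feature that separates \textsc{Mapping} from an RS-E-style exhaustive search.

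First I would trace the control flow of \textsc{Mapping} and verify that every successor $u$ is drawn from $v.U$ by applying Constraints \ref{cst:3}--\ref{cst:11} under the fixed priority ordering $\VL^{i+4},\VD^{i+3},\VA\VN^{i+2},\VN^{i+1},\VI^{i}$ stated in Section \ref{sec:4}. Ties among vertices of equal priority are broken uniformly from inside a candidate set that already satisfies the constraint system, so no probability-weighted decision is introduced. This establishes the non-probabilistic half of the definition.

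Next I would read off which scene each rule consults. Constraints \ref{cst:3}--\ref{cst:11} refer only to labels, tiers, virtual articulations, and \VH-paths of \Gv, so the minimal scene is consulted directly. The triggers governed by Constraints \ref{cst:1}--\ref{cst:2}, together with the creation and configuration of $\Hs$, touch only $H_\VH$ and basic $\raiz$/$v$-reachability, so the real scene is consulted only partially. Lemma \ref{lem:1} is what keeps this partial consultation honest, because it bounds $|H_\VA \cap H[V-v]_\VH|$ by $|H[V-v]_\VH|$ whenever $|V|>4$, preventing the $\Hs$ creation step from cascading uncontrollably across recursive calls.

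The main obstacle is the complexity bound. By Lemma \ref{lem:2}, choosing $\eta = |V|$ is precisely what makes the local error counter abort a subscene only after a valid $u$ has been identified for every $v$ examined in that subscene, so no exponential blow-up can hide in local backtracking. For the global bound I would argue that $m = \frac{|V|^{2}-|V|}{2} = \binom{|V|}{2}$ upper-bounds the number of distinct undirected candidate pairs $\{v,u\}$ with $v \sim u$ available to any \succv\space attempt, and that each \textsc{Sync-Error} increment of $\kappa$ can be charged to such a pair. Once $\kappa > m$, every pair has been exercised as a failed candidate at least once, so no further goal-oriented choice remains and \textsc{Mapping} aborts. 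Combining this polynomially-bounded error budget with the polynomial cost of a single recursive step yields polynomial overall running time; together with the first three ingredients, this completes the proof that \textsc{Mapping} is goal-oriented.
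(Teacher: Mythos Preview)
Your approach is a definitional checklist: verify non-probabilistic choice, verify direct minimal-scene consultation, verify partial real-scene consultation, then bound the error counters. The paper's proof does something substantially different. It argues, case by case through the priorities $\VL^{i+4},\VD^{i+3},\VA\VN^{i+2},\VN^{i+1},\VI^{i}$ and through the CC mechanism, that each constraint of \textsc{Mapping} \emph{imitates} the unknown $\lfloor F\rceil$ of RS-R by retarding the growth of $\varepsilon$ (cancelling non-mandatory $C_\VH$ components, isolating $C_\VA$ components, maximizing Equation~\eqref{eq:3}, minimizing the Lemma~\ref{lem:1}-based summation, and so on). Only after establishing this imitation does the paper derive the thresholds: $\eta=|V|$ from Lemma~\ref{lem:2}, and $m=\tfrac{|V|^2-|V|}{2}$ from the sequential count $\sum_{i=1}^{|V|}(|V|-i)$ of successor checks along a single non-aborted run. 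Your checklist, even if it went through, would not yield what the paper actually needs downstream: Corollaries~\ref{clr:1}--\ref{clr:5} and Theorem~\ref{thm:3} all cite Theorem~\ref{thm:2} for the conclusion that \textsc{Mapping} imitates RS-R and distorts its error-rate curve, not merely for the bare statement that choices are goal-oriented.

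There are also local gaps. Your invocation of Lemma~\ref{lem:1} as a guard against ``cascading'' $\Hs$-creation misreads it: the lemma is a cardinality inequality $|H_\VA\cap H[V-v]_\VH|\leq|H[V-v]_\VH|$, and the paper uses it only inside a minimization argument about how the degeneration process keeps mandatory $C_\VH$ appearance under control; it says nothing about recursion depth. Your $m$-bound charges each $\kappa$-increment to a distinct unordered pair $\{v,u\}$, but \textsc{Sync-Error} is also triggered by Constraint~\ref{cst:1}/\ref{cst:2} failures and by propagated exceptions (lines 5, 22, 41), not only by rejected successors, so the charging scheme is incomplete; the paper instead counts successor checks along a successful trajectory and argues that exceeding that count means Lemma~\ref{lem:2} has failed. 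Finally, line~32 of \textsc{Mapping} explicitly says ``chosen randomly'' among highest-priority candidates; you wave this off as not ``probability-weighted,'' but the paper's proof sidesteps the issue entirely by arguing at the level of priorities and constraints rather than individual draws --- you would need to do the same or justify why uniform tie-breaking is compatible with the paper's notion of non-probabilistic.
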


\begin{proof}

Let \Gv be a connected minimal scene with $H_\VH=\emptyset$ that \textsc{Mapping} takes as input, and $\lfloor F \rceil$ be the unknown negated forbidden condition of RS-R. 

As not every $C_\VA$ will happen to be $C_\VH$, then $\VL^{i+4}$ and $\VD^{i+3}$ can potentially cancel the appearance of non-mandatory $C_\VH$ components and consequently retard $\varepsilon$ growth rate since $\VA \in V(\tau_i)$ is a potential \VH\space of \textsc{Mapping}. As $\lfloor F \rceil$ also cancels the appearance of non-mandatory $C_\VH$ components, \textsc{Mapping} is imitating RS-R by giving the degeneration process a high priority. 

Even if $C_\VA$ happens to be a $C_\VH$, such $C_\VA$ will not influence the labelling of any other $\VA^\prime$ directly since such $C_\VH$ forces \textsc{Mapping} to perform a CC operation. $\lfloor F \rceil$ also forces RS-R to perform a CC-like operation in order to pass through a potential forbidden minor $X \supset H$, directly or indirectly. As $\lfloor F \rceil$ is optimal, $X$ can be used by $\lfloor F \rceil$  to decide whether the real scene has a hamiltonian sequence. In this specific case, such $X$ is an inconsistent component with $0 \leq |H^n| \leq 1$,  $|H^c| \in \{T,F\}$ in a state $x$, in which RS-R decides to abort itself. Therefore, if $H_\VH \neq \emptyset$ in RS-R context, then $v_i \in P$ holds for every valid $v_i$ found with $P$ such that $P=P^\prime \cup X^\prime$ with $P^\prime$ being a \VH-path and $X^\prime \supseteq V(X)$. 

As $P$ can be split into potential independent forbidden minors in RS-R context by $\lfloor F \rceil$, \textsc{Mapping} is imitating RS-R by: 

\begin{enumerate}[(1)]

\item forcing $C_\VA$ components to be isolated through the degeneration process; and 
\item performing a CC operation in case of $H_\VH[V-v] \neq \emptyset$  when both constraints \ref{cst:1} and \ref{cst:2} hold for $H[V-v]$. 
\\
\end{enumerate}

Notice that as $\lfloor F \rceil$ detects both potential non-mandatory $C_\VH$ components and potential independent forbidden minors that don't exist in current minimal scene context of real scene, directly or indirectly, \textsc{Mapping} is still imitating RS-R when a vertex $w \neq \VA$ happens to be \VH\space without ignoring both constraints \ref{cst:1} and \ref{cst:2}.

If we have $u=\VD$ and $u \in N(\VL)$, \textsc{Mapping} is forcing such $\VL$ to be a leaf of real scene. Notice that if $z=\VL$, $z \sim v$ happens to be a real leaf with $d(z)=1$, \textsc{Mapping} can prevent $z$ from being a potential independent forbidden minor $X$ since $d(z)=1$. Even so, $z$ could potentially create non-mandatory $C_\VH$ components. As $\lfloor F \rceil$ cancels the appearance of a leaf in order to prevent it from creating non-mandatory $C_\VH$ components, \textsc{Mapping} is also imitating RS-R by $\VL^{i+4}$ because: 

\begin{enumerate}[(1)]
\item not every $\VL$ will turn to be a leaf; and 

\item $u=\VL$ can also cancel the appearance of non-mandatory $C_\VH$ components and potential independent forbidden minor $X$ by either preventing $\VL$ from being a real leaf or degenerating $C_\VA$ components.
\\
\end{enumerate}

Thus, $\VL^{i+4}$ can also retard $\varepsilon$ growth rate. In addition, as $\lfloor F \rceil$ needs to ensure that at least one $v \sim \VL$ will reach \VL\space by $v \to \VL=T$ due to the fact that $\VL$ is a potential leaf, \textsc{Mapping} is imitating RS-R by giving $\VL$ the highest priority. 

If $u=\VA\VN$ due to $\VA\VN^{i+2}$, and $z^\prime = \VA\VN$,$z^\prime \sim v$ happens to be a \VH\space with $d(z^\prime)=2$, \textsc{Mapping} can prevent $z^\prime$ from being a potential independent forbidden minor $X$ since $d(z^\prime)=1$ will hold for $u=z^\prime$ when \textsc{Mapping} is passing through $z^\prime$. Even so, it could potentially generate non-mandatory $C_\VH$ components due to $d(z^\prime)=2$. As:

\begin{enumerate}[(1)] 
\item these non-mandatory $C_\VH$ components can be degenerated by $\VL^{i+4}$ and $\VD^{i+3}$; and 

\item not every $\VA\VN$ will turn to be a \VH\space with $d(\VH)=2$; 
\end{enumerate}

\textsc{Mapping} is imitating RS-R by giving $\VA\VN$ an intermediary priority in order to prevent $\VA\VN$ from generating non-mandatory $C_\VH$ components. 
\\

If we have $u=\VI$ due to $\VI^{i}$, $\VI$ can delay the appearance of $C_\VH$ components by forcing \textsc{Mapping} to cancel the appearance of non-mandatory $C_\VH$ components since $u=\VI$ prevents $\VI$ from being transformed into a \VA. In addition, $\VI$ can also retard $\varepsilon$ growth rate by forcing \textsc{Mapping} to give the degeneration process a higher priority due to $\VI \sim \VD$, $\VD^{i+3}$ and $\VL^{i+4}$. 

Notice that $\VI$ can also retard $\varepsilon$ growth rate by maximizing the following equation, which is the  sum of $abs(|V(A_i)| - |V(B_i)|)$ from state $i=0$ to current state $x$, with $A_i=\Hs$ being the component set in line 7 of in state $i$ of \textsc{Mapping}, and $B_i=H_i[V(H_i)-\{V(A_i) - A_{i_\raiz} \}]$, $B_{i_\raiz} \equiv A_{i_\raiz}$ in the same state $i$. 

\begin{equation} 
\label{eq:3}
{\begin{array}{rcll} \text{maximize} \hphantom{00} && {\displaystyle  \sum_{i=0}^{x} abs(\left|V(A_i)| - |V(B_i)| \right)} \\[12pt] \text{subject to} \hphantom{00}&& {\displaystyle A_{i_\raiz} \equiv B_{i_\raiz}} \end{array}}
\end{equation}
\vspace{2mm}

The reason is that \VI\space can potentially reduce the local connectivity $l(w,\raiz)$ of at least one $w \in C_\VB$ where $\VI \sim C_\VB$. If so, $\sum_{i=0}^{x} abs(|V(A_i)| - |V(B_i)|)$ tends to be maximized by $u=\VI$, specially when $\VI$ forces at least one $\VD \sim \VI$ of such $C_\VB$ to be a subdivision of $H$, which could increase the success rate of CC operations made by \textsc{Mapping} when $v \in C_\VB$, seeing that: 

\begin{enumerate}[(1)]
\item $\VI$ has the lowest priority; and 

\item not every $w \in C_\VB$ with $\VI \sim C_\VB$ will have its local connectivity $l(w,\raiz)$ reduced, because of the higher priority given to degeneration process. 
\end{enumerate}

Notice that $\varepsilon$ such that $\varepsilon>|V|$ suggests that the current scene $H^\prime \supseteq H$ of \textsc{Mapping} has regions $R$ of vertices with a small local connectivity $l(w,\raiz)$,$w \in R$. As \textsc{Mapping} minimizes indirectly the appearance of $C_\VH$ components by decreasing both $|V|$, $|H_\VA|$, and consequently $|H_\VI|$, the appearance of such regions can be minimized. That's because the appearance of mandatory $C_\VH$ components is maximized by minimizing the following equation, which is the summation  from state $i=0$ to current state $x$ of an equation that ,by Lemma \ref{lem:1}, relates the maximization of $|H[V-v]_\VH|$ to $|H_\VA|$. As a consequence, \textsc{Mapping} can make the success rate of CC operations increase, and retard $\varepsilon$ growth rate through its degeneration process. 

\begin{equation} 
{\begin{array}{rcll} \text{minimize} \hphantom{00} && {\displaystyle  \sum_{i=0}^{x} |H_i[V-v]_{\VH}| - |H_{i_ \VA} \cap H_i[V-v]_{\VH}|} \\[12pt] \text{subject to} \hphantom{00}&& {\displaystyle H_{i_\VA} \cap H_i[V-v]_{\VH} \neq \emptyset} \end{array}}
\end{equation}

The success rate of CC operations also can be increased by $u=\VD$ with $\VD \in C_\VA$ when: (1) $A_v(\VA,H)=F$ in $H$; or (2) $A_v(\VA,H)=F$ in $H[V-\VD]$. The reason is that such $\VA$ can potentially create both independent potential forbidden minors $X$ and non-mandatory $C_\VH$ components, with $\VD \in X$ and $\VD \in C_\VH$. When \textsc{Mapping} passes through such independent potential forbidden minors $X$ and non-mandatory $C_\VH$ components before passes through $\VA$, it could cancel the appearance of them, and consequently make the success rate of CC operations increase when $0 \leq |C_\VA - 1| \leq 1$. If so, such \VA\space will behave like an isolated component. As $\lfloor F \rceil$ also cancels the appearance of both independent potential forbidden minors $X$ and non-mandatory $C_\VH$ components, \textsc{Mapping} is  imitating RS-R in this case, even if such $\VA$ is not explicitly independent in minimal scene.

If we have $u=\VN$ due to $\VN^{i+1}$, we can also increase the success rate of CC operations, since it doesn't influence any $C_\VB$ to be $C_\VH$ directly. Because of that, it can prevent $|H_\VA|$ and $|H_\VI|$ from growing, which delays the appearance of $C_\VH$ components. Even if $\VD \sim \VN$,$\VD \in C_\VB$, both $\VD^{i+3}$ and $\VL^{i+4}$ can prevent $w \in C_\VB$ from having $l(w,\raiz)$ reduced. Thus, $\VN$ can also retard $\varepsilon$ growth rate. 
\\

In addition, notice that even if \textsc{Mapping} generates non-mandatory $C_\VH$ components in regions $R$ of vertices with small local connectivity $l(w,\raiz)$,$w \in R$, no error is thrown when $v$ or \raiz has none or more than one different vertices as successor unless constraint \ref{cst:1} or \ref{cst:2} doesn't hold for $H[V-v]$. Such flexibility also makes the success rate of CC operations increase and can retard $\varepsilon$ growth rate. Furthermore, \textsc{Mapping} can throw an error with $\varepsilon$ being very small when $H$ has regions with a small connectivity, since \textsc{Mapping} doesn't make \succv\space operations when $H[V-v]_\VH\neq 0$.

Even so, $\lfloor F \rceil$ can't ignore minimal scene constraints completely. If RS-R ignores minimal scene constraints completely, we have: 

\begin{enumerate}[(1)] 

\item at least one $\VB \in V(\tau_i)$ in every scenario with $H_\VB \neq \emptyset$ would happen to be an inconsistency of real scene in at least one of its states. If so, $\lfloor F \rceil$ in every scenario with $H_\VB \neq \emptyset$ would be ignoring Theorem \ref{thm:1} completely in at least one state of RS-R, which is invalid.

\item at least one $v \in V(H)$ in every scenario with $H_\VB = \emptyset$ would happen to have $v \to u=F$, for every $u \sim v$, in at least one of its states, even when Theorem \ref{thm:1} is not being ignored completely, which is invalid.
\end{enumerate}

 Thus, \textsc{Mapping} ignoring Theorem \ref{thm:1} partially is not a sufficient condition to prove that \textsc{Mapping} is not imitating RS-R. 

As every constraint of \textsc{Mapping} can potentially retard $\varepsilon$ growth rate, \textsc{Mapping} can  potentially distort its potentially-exponential error rate curve. $\lfloor F \rceil$ also distorts the potentially-exponential error rate curve of RS-R, which is represented by the number of times that $v \to u=F$ holds for $u$, since $\lfloor F \rceil$ predicts, directly or indirectly, when the error rate curve of RS-R will grow exponentially in order to make RS-R abort itself. As a consequence, $\varepsilon$ growth rate must be distorted by \textsc{Mapping} in order to make $\varepsilon$ converge to $k$ such that $k < |V|$ in order to prevent it from aborting itself, which is not a sufficient condition to prove that \textsc{Mapping} is not imitating RS-R.

If $\varepsilon$ happens to converge to $k$ such that $k \geq |V|$, \textsc{Mapping} would be failing to make $\varepsilon$ growth rate retard. In this case, \textsc{Mapping} would be using probability explicitly when it doesn't discard its current scene since: 

\begin{enumerate}[(1)]

\item by Lemma \ref{lem:2}, it doesn't know at least one valid $u$ for a $v$ in the worst case scenario; and 

\item it is tending to ignore Theorem \ref{thm:1} completely as every constraint is failing to make $\varepsilon$ growth rate retard. 
\\
\end{enumerate}

When \textsc{Mapping} discard its current scene due to $\varepsilon$ converging to $k$ such that $k \geq |V|$, it is still imitating RS-R. The reason is that we can assume that $\lfloor F \rceil$ needs to construct a valid hamiltonian sequence fragment starting from $u$ by calling \textsc{Hamiltonian-Sequence} recursively in order to check if $v \to u = F$ holds for $u$, directly or indirectly, since RS-R performs only $v \to u = T$ operations. If $\lfloor F \rceil$ can't construct such valid hamiltonian sequence fragment starting from $u$, it'll also discard $G$ without aborting RS-R in order to return $v \to u=F$ to its caller, that in turn, either increments its error count by one or makes $v \to u=F$ hold for the remaining $u$. If $v \to u=F$ holds for every $u \sim v$ and $\lfloor F \rceil$ makes RS-R throw a non-catchable exception, $\lfloor F \rceil$ is predicting when its error rate curve distortion is about to be degenerated in order to abort RS-R.

If Lemma \ref{lem:2} holds for \textsc{Mapping}, $m = \vartheta$, with $\vartheta=\frac{|V|^{2}-|V|}{2}$ being the number of times that \textsc{Mapping} checks if \succv\space holds for every $u$ found when it is not aborted in the worst case scenario. That's because, f Lemma \ref{lem:2} holds  for \textsc{Mapping}, for each vertex $v_i$ found by \textsc{Mapping} with $i$ such that $1 \leq i \leq |V|$, \textsc{Mapping} needs to check if $v_i \to u=T$ holds for $u \sim v_i$ at most $|V|-i$ times.

If $m>\vartheta$, \textsc{Mapping} would be failing to retard $\varepsilon$ growth rate. In this case, \textsc{Mapping} would be using probability explicitly if it doesn't abort itself since: 

\begin{enumerate}[(1)]

\item By Lemma \ref{lem:2}, at least one $v$ would have an unknown successor; and 
\item It is tending to ignore Theorem \ref{thm:1} completely since every constraint is failing to make $\varepsilon$ growth rate retard. 
\\
\end{enumerate}

However, when \textsc{Mapping} aborts itself due to $m>\vartheta$, \textsc{Mapping} is still imitating RS-R since it enforces the stop condition of RS-R by aborting itself, seeing that the first instance of RS-R also checks if \succv\space holds for every $u$ found $\frac{|V|^{2}-|V|}{2}$ times when it is not aborted in the worst case scenario. As a consequence, $m =\vartheta$ must hold for $m$ in order to prevent \textsc{Mapping} from aborting itself, which is not a sufficient condition to prove that \textsc{Mapping} is not imitating RS-R.

In addition, notice that \textsc{Mapping} can produce an incomplete $L_e$, without aborting itself and without reaching its base case, when \textsc{Sync-Error} throws a non-catchable exception.  Even so, \textsc{Mapping} is still imitating the behaviour of RS-R since $\lfloor F \rceil$ can abort RS-R without visiting every vertex from real scene when $v \to u=F$ holds for every $u \sim v$.  Furthermore, we can assume that: 

\begin{enumerate}[(1)]
\item $\lfloor F \rceil$ can change the first $v=y$ of the first \textsc{Hamiltonian-Sequence} call, when $y$ is preventing  $\lfloor F \rceil$ from constructing a valid hamiltonian sequence $S$ in order to not make RS-R fail to produce a valid output, with $S=v_i ... v_k$ such that $|S|=|V|$, $1 \leq k \leq |V|$,  $1 \leq i \leq k$, $v_1 \neq y$; or

\item $\lfloor F \rceil$ can split $H$ into different components with $|H^n|=1$ when it wants to create a hamiltonian path $S = S_1 \cup S_2$ such that $S_1 = v ... r$, $S_2 = v ... r^\prime$, $r \in V(H)$,$r^\prime \in V(H)$. In this case, when $r$ or $r^\prime$ is reached, $\lfloor F \rceil$ creates a new instance of RS-R to reach the remaining dead end, which consequently forces the current instance RS-R to reach its base case instead of trying to enforce constraints \ref{cst:1} and \ref{cst:2}. 
\end{enumerate}

Therefore, \textsc{Mapping} producing an incomplete $L_e$  is not a sufficient condition to prove that \textsc{Mapping} is not imitating RS-R. 
\\

As \textsc{Mapping} imitates RS-R, even when it not reach its base case, \textsc{Mapping} ignoring Theorem \ref{thm:1} partially is not a sufficient condition to make \textsc{Mapping} imitate RS-E.  Thus, it suggests that:

\begin{enumerate}[(1)]

\item $\lfloor F \rceil$ can generate at least one hamiltonian sequence $S = e_{i}e_{i+1} ... e_{k-1}e_{k}$ (if it exists) of $H$ such that $\{S \cap L_e\} \neq \emptyset$; and 

\item $\lfloor F \rceil$  can also generate at least one path $S^\prime \neq \emptyset$, $\{S^\prime \cap L_e\} \neq \emptyset$, that makes RS-R not enforce constraints \ref{cst:1} and \ref{cst:2} explicitly in at least one of its states when $\lfloor F \rceil$ wants to either: 

\begin{enumerate}[-]
\item make RS-R abort itself in the absence of at least one constructable hamiltonian sequence $S = e_{i}e_{i+1} ... e_{k-1}e_{k}$; or 

\item construct a hamiltonian path $S = S_1 \cup S_2$ such that $S_1 = v ... r$, $S_2 = v ... r^\prime$, $r \in V(H)$,$r^\prime \in V(H)$ by creating a new instance of RS-R to reach $r$ when $r^\prime$ is reached (or vice-versa) in order to force the current instance of RS-R to reach its base case instead of trying  to enforce constraints \ref{cst:1} and \ref{cst:2}.
\end{enumerate}

\end{enumerate}

Thus, \textsc{Mapping} is goal-oriented with $\eta=|V|$ and $m=\frac{|V|^{2}-|V|}{2}$.  
\end{proof}

\subsection{Reconstruction phrase}
\label{sec:7}
In this section, the reconstruction phrase is explained.  The reconstruction task is done by the \textsc{Reconstruct}  function, that takes following parameters as input by reference:  \Gv,  $L_e$, $H^{*}$, $\phi$, $P_{x_1}$, $P_{x_2}$. The edge $\phi$ is a non-synchronized edge $(x_1 x_2) \in L_e$ where $x_1$ and $x_2$ are initially the last vertices of two expandable paths $P^\prime=(x_1)$ and $P^{\prime\prime}=(x_2)$, respectively. In addition, we need to assume that $x_1=v$ and $x_2=\raiz$ in this phrase in order to check if both constrains \ref{cst:1} and \ref{cst:2} hold for $H[V-v]$. $P_{x_1}=P^\prime$ will be the current path we're expanding and $P_{x_2}=P^{\prime\prime}$, the other path. As for every $u$, $v$ must be added to either $P_{x_1}$ or $P_{x_2}$, $x_1$ and $x_2$ must be properly updated in order to represent the last vertices of $P_{x_1}$ and $P_{x_2}$, respectively. 

The term \textit{expansion call} is used throughout this paper whenever we make a recursive call to \textsc{Reconstruct}. Every expansion call restores the initial state of both $H$ and $L_e$. Some conventions are used in this section.   The \textit{synchronized} edges will be written as $[v,u]$. The edge $[w,\square]$ is a synchronized edge $e \in L_e$ with $w \in e$. 
 
\begin{definition}
A synchronized edge is either: (1) a non-synchronized edge $(v,u)$ that got converted to $[v,u]$ by \textsc{Reconstruct}; or (2) an edge $[v,u]$ added to $L_e$ by \textsc{Reconstruct}.
\end{definition}

The notation $d^{*}(x)$ is used to represent the degree of a vertex $x$ of a scene $H^{*}$, which is a clone of  $H^\prime \supseteq H$ scene of the current state of \textsc{Reconstruct}, such that $V(H^{*})=V(H^\prime)$ and $E(H^{*})=L_e \cap E(H^\prime)$. 

$P_v(u)$ function is used by \textsc{Reconstruct} to pass through $H$ by using paths of $H^{*}$, starting from $v \in \{x_1,x_2\}$ until it reaches $z=u$ such that $d^{*}(z)=1$. During this process, it performs successive $H-v$ operation, converts edges from $(v,u)$ to $[v,u]$, and updates $P_v$. When $z$ is reached, it returns $z$.
$[v,u]$ cannot be removed from $H$ unless by undoing operations performed by \textsc{Reconstruct}. 

\subsubsection{Goal} 
\label{sec:8}

The goal of reconstruction phrase is to reconstruct a hamiltonian sequence (if it exists) by passing through $H$ in order to attach inconsistent $C_\VH$ components. If such hamiltonian sequence is reconstructed, $H^{*}$ will be a path graph corresponding to a valid hamiltonian sequence of the maximal $H \equiv G$. In order to do that, some edges may need to be added to $L_e$ to merge a component $H^{*}_{\prime}$ with $v \in V(H^{*}_{\prime})$ to another component $H^{*}_{\prime\prime}$ so that $P_v(u)$ can reach vertices $u \in V(H^{*}_{\prime\prime})$ properly. 

Notice that if \textsc{Reconstruct} passes through $H^{\prime}$ in a scene $H^{\prime\prime}$, with $H^{\prime}$ being a scene in a state $k$ of \textsc{Mapping} and $H^{\prime\prime}$ being the current scene of \textsc{Reconstruct} such that $V(H^{\prime})\cap V(H^{\prime\prime}) \neq \emptyset$, some edges $(v,u) \in L_e$ could be removed from $H$ to make both constraints \ref{cst:1} and \ref{cst:2} hold for $H[V-v]$. However, this is not a sufficient condition to prove that \textsc{Mapping} is not imitating the behaviour of RS-R. (see Sect.~\ref{sec:13}). Therefore, \textsc{Reconstruct} can make both constraints \ref{cst:1} and \ref{cst:2} hold for $H[V-v]$, even if some edges are removed from $L_e$. 

The problem is that \textsc{Reconstruct} must decide when to abort the reconstruction process. Because of that, the non-existence of a sequence of $C_\VH$ attachments that needs to be made in order to convert $L_e$ to a hamiltonian sequence is part of the forbidden condition of SFCM-R. That's because the following is an immediate corollary of Theorem \ref{thm:2}.

\begin{corollary}
\label{clr:1}
If \textsc{Mapping} outputs $L_e$, such set will be formed by path fragments that generate in RS-R context both (1) potential independent forbidden minors and (2) potential non-mandatory $C_\VH$ components.
\end{corollary}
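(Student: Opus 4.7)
The plan is to extract the claim directly from the analysis already carried out in the proof of Theorem \ref{thm:2}, rather than performing any new combinatorial argument. By construction each edge in $L_e$ is inserted by \textsc{Select} through a \succv\space operation made during some recursive call of \textsc{Mapping}, so $L_e$ is, as a set, a sequence of path fragments of $H$. The structural content of the corollary is therefore not about the shape of $L_e$ itself but about what these fragments encode when reinterpreted under the RS-R view established in Theorem \ref{thm:2}.

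I would then point to two concrete observations that are already proved inside Theorem \ref{thm:2} and combine them. First, whenever $H[V-v]_{\VH} \neq \emptyset$ the algorithm performs a CC operation instead of enforcing $\omega(H-v) \leq \omega(H)$; the proof of Theorem \ref{thm:2} identifies this partial ignoring of Theorem \ref{thm:1} with $\lfloor F \rceil$ traversing, in the RS-R context, a path of the form $P = P^\prime \cup X^\prime$ where $X^\prime \supseteq V(X)$ and $X$ is an inconsistent component with $0 \leq |H^n| \leq 1$. Second, the priority scheme $\VL^{i+4} > \VD^{i+3} > \VA\VN^{i+2} > \VN^{i+1} > \VI^{i}$, together with constraints \ref{cst:3}--\ref{cst:11}, can only retard, and not prevent, the appearance of $C_\VH$ components in regions of small local connectivity $l(w,\raiz)$.

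These two facts map directly onto the two clauses of the statement. The partial ignoring of Theorem \ref{thm:1} yields fragments that, under the RS-R interpretation of Theorem \ref{thm:2}, correspond to potential independent forbidden minors in $\lfloor F \rceil$'s trace; and the residue of the degeneration process, which is never eliminated but only delayed by the priority ordering, corresponds to potential non-mandatory $C_\VH$ components. The only obstacle is notational: one must make explicit the correspondence, already established inside Theorem \ref{thm:2}, between the log of \succv\space operations committed by \textsc{Select} into $L_e$ and the hypothetical sequence of $v \to u = T$ steps taken by RS-R under $\lfloor F \rceil$. Once that correspondence is invoked, no further argument is required and the corollary follows immediately.
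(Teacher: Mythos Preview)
Your proposal is correct and matches the paper's approach: the paper gives no separate proof for this corollary at all, simply introducing it with ``the following is an immediate corollary of Theorem \ref{thm:2}''. Your write-up merely makes explicit which portions of the Theorem \ref{thm:2} argument supply clauses (1) and (2), which is exactly the intended reading.
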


It means that if such sequence of $C_\VH$ attachments exists for the current $L_e$, and it is not properly enforced by \textsc{Reconstruct}, then it can be considered a possible sufficient condition to make the \textit{mirrorable real scene algorithm}, which is modified version of RS-R that we want to mirror in this phrase, fail to produce a valid output. We call the output of modified RS-R \textit{hamiltonian sequence given $L_e$},  because it takes a non-synchronized hamiltonian sequence $L_e = L_e - S$ as input, which $S$ being a set of non-synchronized edges removed from $L_e$ by \textsc{Reconstruct}.

\begin{definition}
Let \Gv be a minimal scene. A hamiltonian sequence given $L_e$ is a simple path $P=v_i ... v_k$ with $1 \leq i \leq k$ of $H$, that visits all vertices, such that $P \cap \{ w \in P : |\{  L_e \cap (w, \square) \}| \geq 1 \} \neq \emptyset$
\end{definition}

The modified version of real scene algorithm is as follows.

\inicioAlgoritmo{Mirrorable RS-R algorithm}
\Input  \Gv,  $P_{x_1}$,  $P_{x_2}$, $v \in \{x_1,x_2\}$, $L_e$
\Output Hamiltonian sequence $P^\prime$
\Function{Hamiltonian-sequence}{}

\State $A \gets N(v)$
\State $U \gets \{ u \in A : (v,u) \in L_e \}$
\State $X \gets \emptyset$

\If{constraints \ref{cst:1} or \ref{cst:2} doesn't hold for $H[V-v]$}
	\State $A \gets \emptyset$
\EndIf
\For  {\textbf{each} $u \in A$}

\If {$(v,u) \in U$}
\If{constraints \ref{cst:1} or \ref{cst:2} doesn't hold for $H[V-\{v,u\}]$}
	\State $X \gets X \cup \{u\}$
	\State $L_e \gets L_e - (v,u)$
\Else
	\State $A \gets \{u\}$
	\State $X \gets \emptyset$
	\State \textbf{break}
\EndIf
\Else
\If{$v \to u = F$}
	\State $X \gets X \cup \{u\}$
\EndIf
\EndIf

\EndFor
\State $A \gets A - X$
\If{$A \neq \emptyset$}
	\State $v \to u=T$ with $u \in A$
	\If{$(v,u) \in L_e$}
		\State Convert $(v,u)$ to $[v,u]$
	\Else
		\State Remove an edge $(u,\square)$ from $L_e$ if $|\{L_e \cap (u,\square)\}| > 2$
		\State $L_e \gets L_e \cup [v,u]$
	\EndIf
	\State Update $P_{v}$
	\State $u \gets w \in \{x_1,x_2\}$
	\State \textsc{Hamiltonian-sequence($H$, $P_{x_1}$, $P_{x_2}$, $u$,$L_e$)}
	
\Else
\If{$|P_{x_1} \cup P_{x_2}| \neq |V(H)| $}
	\State \textbf{throw} error
\EndIf
\EndIf

\State $B \gets P_{x_2} \textit{ in reverse order}$
\State $P^\prime \gets B \cup P_{x_1}$
\State \Return $P^\prime$
\EndFunction
\fimAlgoritmo{}

In addition, \textsc{Reconstruct} may have inconsistent subscenes $H^\prime \supset H$ with non-attachable $C_\VH$ components. It means that if we try to attach every inconsistent $C_\VH$ by modifying $L_e$ aggressively, we could end up with SFCM-R imitating RS-E. Remember that  SFCM-R must not use exhaustive methods to reconstruct the hamiltonian sequence since we want to mirror a non-exhaustive algorithm. Therefore, we need to use a goal-oriented approach in order to attach inconsistent $C_\VH$ properly without relying on probability and find a valid sequence of $C_\VH$ attachments. (see Sect.~\ref{sec:10})

\subsubsection{Algorithm} 
\label{sec:9}

In this section, the pseudocode of \textsc{Reconstruct} is explained.  Every line number mentioned in this section refers to the pseudocode of \textsc{Reconstruct}. Initially, \textsc{Reconstruct} takes the following parameters as input: \Gv, $H^{*}$, $\phi=(\raiz,\square)$, $L_e$, $P_{x_1}=\{x_1\}$ and $P_{x_2}=\{x_2\}$, with $x_2=\raiz \in \phi$ and $x_1=\{w \in \phi : w \neq \raiz \}$. 

\textsc{Reconstruct}  passes through $H$ by using paths of $H^{*}$, performs subsequent $H-v$ operations by expanding $P_{x_1}$ or $P_{x_2}$ paths alternatively with $v$ such that $v \in \{x_1,x_2\}$ (line 7), and connects components of $H^{*}$ by adding a synchronized edge $[v,u]$ (line 27). During this process, it needs to remove some inconsistent edges $(v,u) \in L_e$ in its current state considering the following cases.

\begin{enumerate}[I.]

\item The first case is when we have $(v, \VH)$. 

\item The second case is when $H_{\VH} \neq \emptyset$  and $(v,u)$ doesn't enforce both constraints \ref{cst:1} and \ref{cst:2}. 

\item The third case is when $(v,x_1)$ or $(v,x_2)$, since both $P_{x_1}$ and $P_{x_2}$ are concatenated to form the output of mirrorable RS-R algorithm. 

\end{enumerate}

Notice that I or II could be ignored in hamiltonian path context since both $P_{x_1}$ and $P_{x_2}$ can have non-adjacent dead ends. As \textsc{Reconstruct} considers these two cases inconsistencies, we need to use specific goal-oriented strategies if we want to reconstruct a hamiltonian path (see Sect.~\ref{sec:12}).

\inicioAlgoritmo{Reconstruction of a hamiltonian sequence given $L_e$ (Simplified)}

\Input  \Gv,  $H^{*}$, $L_e$, $\phi$, $P_{x_1}$, $P_{x_2}$
\Output Set $L_e$ of synchronized edges
\Function{Reconstruct}{}
\State $S_0 \gets (\emptyset)$
\While{reconstruction of $L_e$ is not done}

\Try
		\If{constraint \ref{cst:1} or \ref{cst:2} doesn't hold for $H[V-v]$}
			\State \textbf{throw} error
		\EndIf
		\State $v \gets P_v(u)$ with $d^{*}(u)=1$ 
		\State $S_0 \gets (\emptyset)$
		\State $S_1 \gets (\emptyset)$
		\State $S_2 \gets (\emptyset)$
		\For {\textbf{each} non-synchronized $e \in L_e$}
			\If{ $w \in e$ with $w$ being a valid non-visited $w \sim v$}
			\If{$d^{*}(w) = 1$}
				\State $S_1 \gets S_1 \cup e$
			\EndIf
			\If{$d^{*}(w)=2$}
				\State $S_2 \gets S_2 \cup e$
			\EndIf
			\EndIf
		\EndFor
		\For {\textbf{each} non-mapped $w$ with $d^{*}(w)=0$,$w \sim v$}
				\State $e \gets (w,w)$
				\State $L_e \gets L_e \cup \{e,e\}$
				\State $S_0 \gets S_0 \cup e$

		\EndFor
		\State $S_2 \gets S_2 \cup S_0$
		\State $S \gets S_1 \cup S_2$
		\If {$S \neq \emptyset$}
		
		\State $u \gets w$ with $(w,\square) \in S$
		\If {$v \to u=T$}
			\State $L_e \gets L_e - S_0$
			\State $L_e \gets L_e \cup [v,u]$
	
			\State $v \gets q \in \{ x_1, x_2 \}$
			
		\EndIf
		\Else
			\State \textbf{throw} error
		\EndIf
		
\EndTry
\Catch{error} 
		\State $L_e \gets L_e - S_0$
		\State Undo $k$ states 
		\State Use goal-oriented strategies	
\EndCatch
	\EndWhile
\State \Return $L_e$
\EndFunction
\fimAlgoritmo{}

If \textsc{Reconstruct} finds a valid $v$ with $d^{*}(v)=1$, the next step is to choose $w \sim v$ (line 24), which will be the successor of $v$, by using the following conventions in an ordered manner. 

\begin{enumerate}[1.]

\item If $S_1 \neq \emptyset$, choose $w^\prime \sim v$ of the first element $(w^\prime,\square) \in S_1$.

\item If $S_1 = \emptyset$, remove the first element $y=(w^{\prime\prime},\square) \in S_2$ from $L_e$ in order to make $d^{*}(w^{\prime\prime})=1$, $w^{\prime\prime} \sim v$  hold for $w^{\prime\prime}$, then choose $w^{\prime\prime}$ such that $z = \{L_e \cap (w^{\prime\prime},\square)\} - y$, $w^{\prime\prime} \in z$.

\item If $z$ is removed from $L_e$ because of I, II or III in next state, perform $L_e \cup y$ and remove $z$ from $L_e$ instead of $y$, Then, choose $w^{\prime\prime}$ such that $y = \{L_e \cap (w^{\prime\prime},\square)\} - z$,$w^{\prime\prime} \in y$.

\end{enumerate}

\begin{observation} Whenever a goal-oriented strategy removes either $(v,\square)$ or $[v,\square]$ from $L_e$, and makes $d^{*}(v)=1$ hold for $v$,  \textsc{Reconstruct} must use the conventions  of this section in order to choose a non-visited $u$ 
\end{observation}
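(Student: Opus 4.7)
The plan is to observe that the post-strategy configuration is indistinguishable from the configuration in which \textsc{Reconstruct} has just finished a $P_v(u)$ traversal, so the same deterministic successor-selection rule must apply. Concretely, a goal-oriented strategy that removes $(v,\square)$ or $[v,\square]$ from $L_e$ and produces $d^{*}(v)=1$ leaves \textsc{Reconstruct} with exactly the data it would have had after line 7, namely a vertex $v\in\{x_1,x_2\}$ of $H^{*}$-degree one whose unmapped neighbours must still be processed. The decision problem confronting \textsc{Reconstruct} at that point is therefore formally identical to the standard one: pick some $u\sim v$ to attach next while keeping $L_e$ an ordered sequence of path fragments.

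From this identification, I would argue that conventions~1--3 of Section~\ref{sec:9} are forced. Convention~1 prefers neighbours $w^\prime$ with $(w^\prime,\square)\in L_e$ and $d^{*}(w^\prime)=1$ because any such $w^\prime$ is already committed by \textsc{Mapping} as a path-fragment endpoint; by Corollary~\ref{clr:1} the fragments of $L_e$ encode the potential independent forbidden minors and non-mandatory $C_\VH$ components that \textsc{Mapping} worked to cancel, so bypassing $S_1$ amounts to discarding mapping-phase information that, by Theorem~\ref{thm:2}, is the main instrument retarding the error-rate curve. Convention~2 is the unique safe fallback when $S_1=\emptyset$: a neighbour $w^{\prime\prime}$ with $d^{*}(w^{\prime\prime})=2$ cannot be attached directly without creating a degree-3 vertex in $H^{*}$ and thereby violating the path-fragment structure, so exactly one of its two incident non-synchronized edges must be removed before $v\to w^{\prime\prime}=T$ can proceed. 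Convention~3 then handles the single residual ambiguity: if the edge removed in step~2 is what later causes cases~I--III of Section~\ref{sec:9} to fire, the algorithm must swap which of $\{y,z\}$ it keeps, since both choices are symmetric with respect to $w^{\prime\prime}$ and the consistency check is what breaks the tie.

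The key step is arguing that no other rule preserves goal-orientation. I would proceed by exhaustion over the possible deviations from conventions~1--3: (i) selecting a $u$ with $d^{*}(u)=0$ or $d^{*}(u)=2$ while $S_1\ne\emptyset$ ignores a committed path-fragment endpoint, which is equivalent to a probabilistic reassignment of $L_e$ and hence violates the definition of a goal-oriented choice; (ii) attaching a $u$ with $d^{*}(u)=2$ without first demoting one of its incident edges creates $d^{*}(u)=3$, contradicting the invariant that $L_e$ is an ordered sequence of path fragments maintained throughout mapping and reconstruction; and (iii) failing to perform the swap of convention~3 when the first demotion propagates inconsistency forces \textsc{Reconstruct} to restart the same sub-decision with fresh randomness, which again collapses into probabilistic search. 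In each case the deviation violates either the $L_e$ structural invariant or the goal-oriented criterion, leaving conventions~1--3 as the only admissible rule.

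The main obstacle I expect is precisely case~(iii): formalising that the swap in convention~3 is deterministic and not a disguised probabilistic backtrack. The argument must use the fact that, at the moment of the swap, the removed edge $z$ has been shown to trigger case~I, II or III, so the information distinguishing $y$ from $z$ is already present in the current state of $H$ and $L_e$; exchanging them is therefore a consequence of the consistency checks rather than a free choice. Once this point is secured, the overall claim follows: any goal-oriented strategy that hands control back to \textsc{Reconstruct} with $d^{*}(v)=1$ must let \textsc{Reconstruct} apply conventions~1--3 to choose the next non-visited $u$.
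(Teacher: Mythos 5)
The paper offers no proof of this observation: it is stated as a bare design stipulation immediately after the three selection conventions of Section~\ref{sec:9}, and the surrounding text treats it as an instruction to the implementer rather than as a claim requiring derivation. Your proposal therefore cannot be matched against a paper argument, because there is none; what you have written is an independent justification. On its own terms the core of your justification is sound and in the spirit of the paper: identifying the post-strategy state (a vertex $v$ with $d^{*}(v)=1$ whose successor must still be chosen) with the state reached after $P_v(u)$ returns at line 7 is exactly the right reason the conventions transfer, and your appeal to the path-fragment invariant (no vertex may carry more than two edges of $L_e$) correctly explains why a neighbour with $d^{*}=2$ must first have an incident edge demoted before it can be attached.

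The overreach is in your uniqueness claim. The observation says \textsc{Reconstruct} \emph{must use} the conventions; you strengthen this to ``conventions 1--3 are the only admissible rule'' and argue by exhaustion over deviations. Deviation (i) is dismissed as ``equivalent to a probabilistic reassignment of $L_e$,'' but that equivalence is asserted rather than shown --- one can write down other deterministic, $L_e$-respecting preference orders (for instance, one that prefers the $d^{*}(w)=0$ vertices that \textsc{Reconstruct} itself temporarily promotes into $S_0$) which are not obviously probabilistic, so ruling them out needs more than the definition of a goal-oriented choice. Likewise your handling of convention 3 correctly names the danger (a disguised probabilistic backtrack) but resolves it only by asserting that the information distinguishing $y$ from $z$ ``is already present''; since cases I--III fire only in the \emph{next} state, the swap is reactive, and showing it is not a two-branch trial would require an argument the paper itself never supplies. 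None of this undermines the observation as the paper intends it --- a stipulation --- but your proof attempts to establish strictly more than is stated, and it is precisely that extra part which is incomplete.
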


Notice that \textsc{Reconstruct} temporarily changes $d^{*}(w)$ when $d^{*}(w)=0$ (line 19) in order to force $P_v$ to use the aforementioned conventions even when $w$ is an non-mapped vertex. 

If an inconsistency is found during this process, an error needs to be thrown by \textsc{Reconstruct} (lines 6 and 30). Every inconsistent $C_\VH$ component must be attached by goal-oriented strategies (lines 34). Because of that, \textsc{Reconstruct} undoes modifications in $H$, $L_e$, $P_{x_1}$, and $P_{x_2}$ (line 33), in order to go back to an earlier $v$ state to be able to use some goal-oriented strategy to attach inconsistent $C_\VH$ components. The reconstruction process continues until either the reconstruction of $L_e$ is done (line 3) or a goal-oriented strategy aborts the reconstruction process.

As an example of hamiltonian sequence reconstructed by SFCM-R, Figure \ref{fig:1} shows an arbitrary graph $H$ mapped by \textsc{Mapping} function with $\raiz=23$ on the left side. On the right side, we can see the non-synchronized hamiltonian sequence of $H$ reconstructed by \textsc{Reconstruct}.

\begin{figure}[H]
\label{fig:1}
\centering
\includegraphics[scale=0.54]{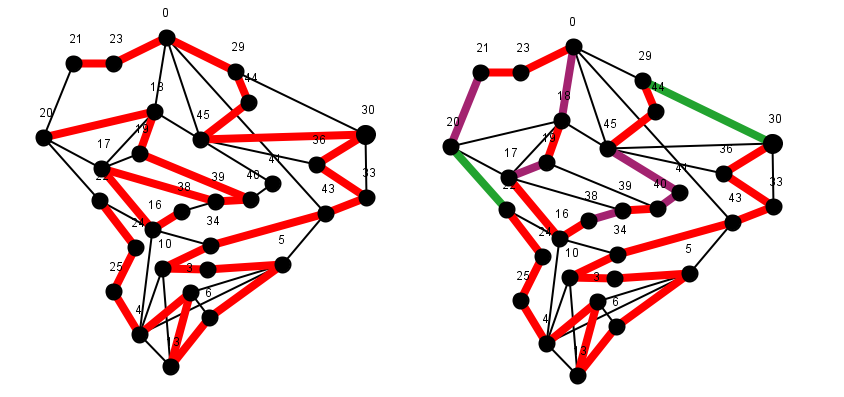}
\caption{Example of minimal scene mapping with $v=\raiz=0$ (on the left side). Hamiltonian circuit reconstructed with $\phi=(0,23)$, $x_1=23$, $\raiz=x_2=0$,$\mu_x=81,48\%$ (on the right side)}
\end{figure}

In this figure, purple edges represent synchronized edges added by \textsc{Reconstruct} to connect components of $H^{*}$. The red edges represent non-synchronized edges that got converted to synchronized edges by \textsc{Reconstruct}.  The green edges represent synchronized edges $[v,w]$ that were added to $L_e$ in order to attach an inconsistent $C_\VH$ component with $w \in C_\VH$. $x$ is the final state of reconstruction process.

\subsection{Goal-oriented approach}
\label{sec:10}

In this section, the goal-oriented approach is presented and can be used in a non-probabilistic goal-oriented implementation of reconstruction phrase. The main goal of  using a goal-oriented approach is to prevent SFCM-R from imitating RS-E during the reconstruction process. Before continuing, we define a structure that we use to help \textsc{Reconstruct} to make goal-oriented choices. Such structure will be called \textit{real-scene perception network (RSPN)}, and we use it to store informations related to goal-oriented strategies. 

\begin{definition}
Real scene perception network (RSPN) is a directed tree-like goal-oriented network that starts at \textbf{RSPN} node, which has the following children set \{\nA ,\nC ,\nJ ,\nN\}, where $\nA = \{a_i ... a_n\}$ is the the attachment node, $\nC = \{ c_i ... c_n\}$ is the current state node, $\nJ = \{ j_i ... j_k\}$ is the ordering node, and $\nN = \{ n_i ... n_k\}$ is the region node.
\end{definition}

It's very important to store some informations about goal-oriented strategies since the only difference between an expansion process from another is the way we pass through $H$ by using edges $e \in L_e$, which can lead to the creation of different attachable $C_\VH$. Because \textsc{Reconstruct} has conventions to pass through $H$ by using paths of $H^{*}$, RSPN and strategies can be useful to change $L_e$ relying on knowledge related to real scene instead of probability in order to give such conventions more flexibility. 
\\

Before continuing, two rates need to be defined. 

\begin{definition}
The \textit{negativity rate} $\gamma$ is the sum of $f_\gamma(x=0, a_{\gamma_i})$ from states $i=0$ to current state $z$ and represents the rate of how likely is the current state $z$ of reconstruction process to be inconsistent.

\begin{equation}
{\begin{array}{rcll} f_\gamma(x, a_{\gamma_i}) =  \displaystyle \frac{1}{(1 - a_{\gamma_i})\sqrt{2 \pi}} e^{\displaystyle -\frac{x^2}{(1-a_{\gamma_i})} } \hphantom{00} && 
0 \leq a_{\gamma_i} < 1,\;\; x \leq 0
\end{array}}
\end{equation} 

\begin{equation}
\gamma = \sum^z_{i=0}f_{\gamma}(x=0, a_{\gamma_i}) 
\end{equation} 
\end{definition}

\begin{definition}
The \textit{tolerance rate} $t$ is the sum of  \textit{degree of tolerance} over $\gamma$ from states $i=0$ to current state $z$ of reconstruction process.

\begin{equation}
t =\sum^z_{i=0} f_{\gamma}(x=0, a_{\gamma_i}) + t_i
\end{equation} 

\end{definition}

As \textsc{Reconstruct} undoes $k$ states to attach inconsistent $C_\VH$ components, $\gamma$ growth rate must be adjusted whenever a specific strategy fails to attach a $C_\VH$ properly.  A \textit{tolerance policy} $\lfloor T \rceil$ is needed to adjust $\gamma$ and $t$ in order to select and trigger a goal-oriented strategy in an appropriate moment. $\lfloor T \rceil$ must also prevent SFCM-R from imitating RS-E by making, what we call \textit{curve distortion ring} $\ltimes(\gamma, t)$, be disintegrated in some state of \textsc{Reconstruct}. $\ltimes(\gamma, t)$ is disintegrated when it returns $F$.

\begin{equation}
	\ltimes (\gamma, t)=\left\{\begin{array}{lr}

		T, & 
		\text{if } t-\gamma > 0   \\
		
		F, & \text{ otherwise }
	\end{array}\right\}
\end{equation}

The disintegration of $\ltimes(\gamma, t)$ made by $\lfloor T \rceil$ is used to make \textsc{Reconstruct} perform a new expansion call. These expansion calls, in turn, makes SFCM-R be more prone to degenerate itself in case of successive negative events that makes \textsc{Reconstruct} be tending to imitate RS-E explicitly, which is invalid (refer to section \ref{sec:11} to understand how this process works). Therefore, $\lfloor T \rceil$ needs to adjust $a_{\gamma_i}$ and $t_i$ of every state $i$ by using a set of actions in order to accomplish the aforementioned goals.

\inicioAlgoritmo{Tolerance policy $\lfloor T \rceil$ }

\State $s \gets$ current state of \textsc{Reconstruct}
\State Adjust $t_s$ and $a_{\gamma_s}$
\State Update RSPN if needed
\State $\textit{S} \gets \emptyset$ \{ set of goal-oriented strategies\}
\State Populate \textit{S}
\While{$s$ is inconsistent}
	\For{\textbf{each} goal-oriented strategy $s^\prime \in S$}
		\State Trigger $s^\prime$ inside \textsc{Reconstruct} environment
	\EndFor
	 	
			\State $s \gets$ current state of \textsc{Reconstruct}
	\State Adjust $t_s$ and $a_{\gamma_s}$
	\State Update RSPN if needed

	\If{$s$ is consistent}
		\State \textbf{break}
	\EndIf
	\State Populate \textit{S}
	  
\EndWhile	

\State Go back to \textsc{Reconstruct} environment

\fimAlgoritmo{}

Therefore, one of the main goals of $\lfloor T \rceil$  is to keep a balance between: (1) retarding the growth rate of both $f(x,a_{\gamma_i})$ and $\gamma$ by triggering goal-oriented strategies that attach inconsistent $C_\VH$ components properly; and (2) not retarding the growth rate of both $f(x,a_{\gamma_i})$ and $\gamma$ when some goal-oriented strategy fails to attach inconsistent $C_\VH$ components; in order to \textsc{Reconstruct} be able to continue to reconstruction process without imitating RS-E. Later in this paper, we will prove that a potential hamiltonian sequence can be reconstructed by SFCM-R if $\lfloor T \rceil$ is optimizable (see Sect.~\ref{sec:13}). 

\begin{definition}
Let \Gv be a minimal scene. A tolerance policy $\lfloor T \rceil$ is optimizable if it computes the following constrained optimization problem, with $S$ being a set that contains every inconsistent state $i$ found with $\ltimes (\gamma_i, t_i) = F$, without making \textsc{Reconstruct} fail to produce a valid output while behaving like a non-exhaustive algorithm.

\begin{equation} 
\underset {t_i} {\operatorname{arg\,min}} \sum_{i \in S} (t_i - \gamma_i)\;\;\;\;{\text{subject to:}}\;\;\; t_i, \gamma_i \in \mathbb{R},\;\;  \gamma_i > t_i
\end{equation}

\end{definition}

\subsubsection{Quantum-inspired explanation}

To understand how this process can prevent SFCM-R from imitating RS-E, we can intuitively think of the retardation of $\gamma$ growth rate process as the following simplified quantum-inspired process. In this process, we assume that a distortion ring $\ltimes$ has $N$ distortion particles $p^{+}$, that can behave like their own anti-distortion particles $p^{-}$, and vice-versa. When $p^{-}$ and $p^{+}$ collide, they annihilate each other.

Let $N_{\ltimes}$ be the number of distortion particles $p^{+}$ expected to be observed in  distortion ring $\ltimes$, and $N_{\overline{\ltimes}}$ be an unknown non-observed number of anti-distortion particles $p^{-}$ in distortion ring $\ltimes$. In addition, let $E_T(x) = E^{+}(x) + E^{-}(x)$ be the sum of Electromagnetic (EM) waves emitted in $\ltimes$ as a function of time,  $E^{+}(x)$ be imaginary EM waves that are expected to be emitted from observed $p^{+}$ particles as a function of time, and $E^{-}(x)$ be imaginary EM waves with opposite charge that are expected to be emitted from observable $p^{-}$ particles as a function of time. To simplify, we assume that $N_{\ltimes}$ is equivalent to the positive amplitude peak of $E^{+}(x)$ and $N_{\overline{\ltimes}}$ is equivalent to the positive amplitude peak of $E^{-}(x)$.

The idea here is to consider a consistent state $s$ of reconstruction process $p^{+}$ particles, an inconsistent state $s^\prime$  of reconstruction process $p^{-}$ particles, and \textsc{Reconstruct} the observer of both $p^{-}$ and $p^{+}$. The following equations are used in this explanation. 

\begin{equation}
E^{+}(x) = sen(\beta x)
\end{equation}

\begin{equation}
E^{-}(x) = \delta_\gamma sen(\displaystyle \frac{\beta 3 x}{4}) 
\end{equation}

\begin{equation}
E_T(x) = E^{+}(x) + E^{-}(x)
\end{equation}

\begin{equation}
\alpha, \beta = 2
\end{equation}

In addition, we use the following sigmoid function to represent $a_{\gamma_i}$, which is the $\gamma$ growth rate in state $i$. For conciseness, we assume that $\lfloor T \rceil$ updates both $t_i$ and $a_{\gamma_i}$ whenever $\delta_\gamma$ is changed to avoid repetition.

\begin{equation}
{\begin{array}{rcll}a_{\gamma_i} = f(\delta_\gamma) = \left(\displaystyle  \frac{1}{1 +  e ^{\displaystyle -(\delta_\gamma^2)}} \right)  \hphantom{00} && 0 \leq f(\delta_\gamma) < 1
\end{array}}
\end{equation}

A distortion ring $\ltimes$ is represented by a circle whose center is the point $(C_\ltimes, 0)$
, as illustrated in next figure. $C_\ltimes$ is represented by the following equation.

\begin{equation}
C_\ltimes=\frac{4\pi}{\beta}
\end{equation}

The force of $\ltimes$ (or simply $F_\ltimes$) is equal to the amplitude $A$ of the second inner wave of $\ltimes$ from $E_T(x)$. If $A > 0$ is a positive peak amplitude, we have an observable distortion ring $\ltimes$ (in blue) with $F_\times = A$. If $A$ is a negative peak amplitude or $A=0$, we have an observable anti-distortion ring $\overline{\ltimes}$ (in red) with $F_{\overline{\ltimes}} = -A$. 

By convention, we use a dashed blue line for $E^{+}(x)$, a solid red line for $E^{-}(x)$, and a solid blue line for $E_T(x)$. Figure \ref{fig:2} shows $\ltimes$ with $F_\ltimes > 0$ and $E^{+}(x)>0$. 

\begin{figure}[H]

\centering
\includegraphics[scale=0.6]{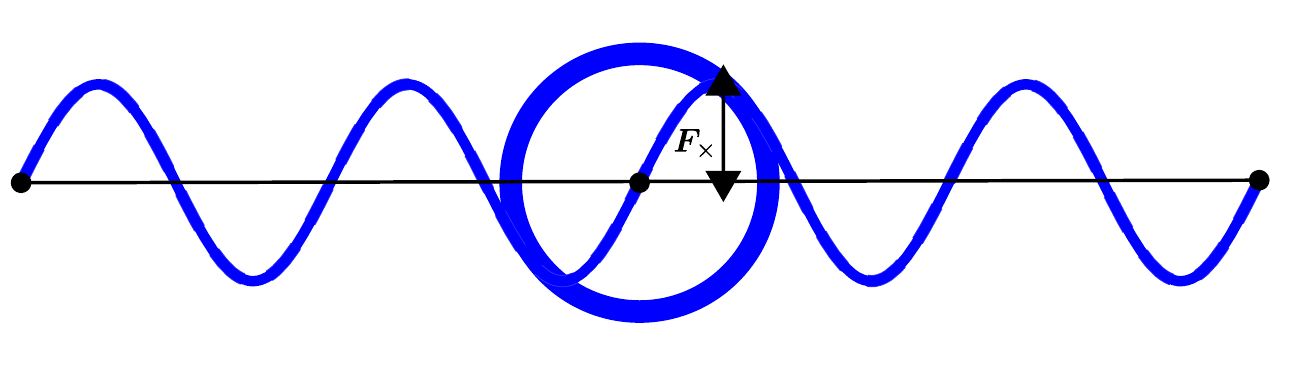}
\centering
\caption{A distortion ring $\ltimes$ with the $F_\ltimes> 0$, $E^{+}>0$, $E^{+}(x)=E_T(x)$.}
\label{fig:2}
\end{figure}

Before continuing, consider the following corollary of Theorem \ref{thm:2}

\begin{corollary}
\label{clr:4}
\textsc{Mapping} performs an error curve distortion conceptually equivalent to $\lfloor F \rceil$, in order to distort its potentially-equivalent error rate curve, even when \textsc{Sync-Error} throws a non-catchable error or makes \textsc{Mapping} abort itself.
\end{corollary}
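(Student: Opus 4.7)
The plan is to derive this corollary directly from Theorem \ref{thm:2}, isolating the two distinguished regimes in its statement --- the non-catchable \textsc{Sync-Error} throw and the self-abort --- and reducing each to a claim already established inside the proof of that theorem.

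First I would extract from Theorem \ref{thm:2} its core distortion claim: each of the label priorities $\VL^{i+4}, \VD^{i+3}, \VA\VN^{i+2}, \VN^{i+1}, \VI^{i}$, together with the CC machinery and the partial toleration of Theorem \ref{thm:1}, demonstrably retards the growth rate of $\varepsilon$, so that \textsc{Mapping} actively distorts its otherwise potentially-exponential error rate curve. The same theorem argues that $\lfloor F \rceil$ applies the analogous distortion to the error rate curve of RS-R --- measured by the number of times $v \to u = F$ is forced --- and predicts, directly or indirectly, when that curve is about to grow exponentially. Together these give conceptual equivalence of the two distortions in the benign regime where \textsc{Mapping} terminates normally.

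Next I would handle the non-catchable exception case by reusing the paragraph of Theorem \ref{thm:2} showing that producing an incomplete $L_e$ is not sufficient to destroy imitation. The reason, carried over verbatim, is that $\lfloor F \rceil$ is itself permitted to abort RS-R without visiting every vertex (whenever $v \to u = F$ for every $u \sim v$), to re-choose the starting vertex, or to split $H$ into $|H^n|=1$ components that force the current RS-R instance to base-case early. In each of those branches the distortion performed up to the throw matches what \textsc{Mapping} has just performed, so the conceptual equivalence of the two distortions survives the exception.

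Finally, for the self-abort regime I would invoke the bound $m = \frac{|V|^2 - |V|}{2}$ computed at the end of Theorem \ref{thm:2}, which coincides exactly with the worst-case number of $v \to u = T$ checks performed by the first RS-R instance. Hence aborting when $\kappa > m$ enforces the very stop condition that $\lfloor F \rceil$ triggers when its own distortion curve is about to be degenerated, making the abort itself a curve-distortion action conceptually equivalent to that of $\lfloor F \rceil$. The main obstacle I anticipate is purely expository --- welding these three observations into a single coherent reduction without re-deriving portions of Theorem \ref{thm:2} --- since no genuinely new technical ingredient is needed; the corollary is essentially a repackaging of the terminal paragraphs of that theorem's proof around the two abort regimes.
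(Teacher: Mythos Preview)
Your proposal is correct and matches the paper's approach: the paper states this result as an immediate corollary of Theorem~\ref{thm:2} with no separate proof, and the three ingredients you isolate (the priority-driven retardation of $\varepsilon$, the incomplete-$L_e$ paragraph, and the $m=\frac{|V|^2-|V|}{2}$ stop-condition argument) are exactly the terminal portions of that theorem's proof from which the corollary is meant to be read off. If anything, your write-up is more explicit than the paper itself, which simply asserts the corollary and moves on.
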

 
By Corollary \ref{clr:4}, we can assume that, neglecting some technical complexities, the observable collision rate between $p^{-}$ and $p^{+}$, can be maximized in a way that it favours $p^{+}$ over $p^{-}$. In other words, we want to observe $N_{\ltimes} > N_{\overline{\ltimes}}$ in order to be sure that we have a consistent observable $\ltimes$. Because of that, $E^{+}(x)$ represents the EM waves emitted from a total of $R_T = N_{\ltimes} - N_{\overline{\ltimes}}$,$R_T > 0$ particles, which is the residue expected after such collision process. In other words, we want to observe an imbalance between $p^{-}$ and $p^{+}$ particles, even if the observed $R_T$ happens to change due to the principle of superposition of states in quantum mechanics as \textsc{Reconstruct} passes through $H$.

An ideal scenario is represented in Figure \ref{fig:3}. 

\begin{figure}[H]

\centering
\includegraphics[scale=0.6]{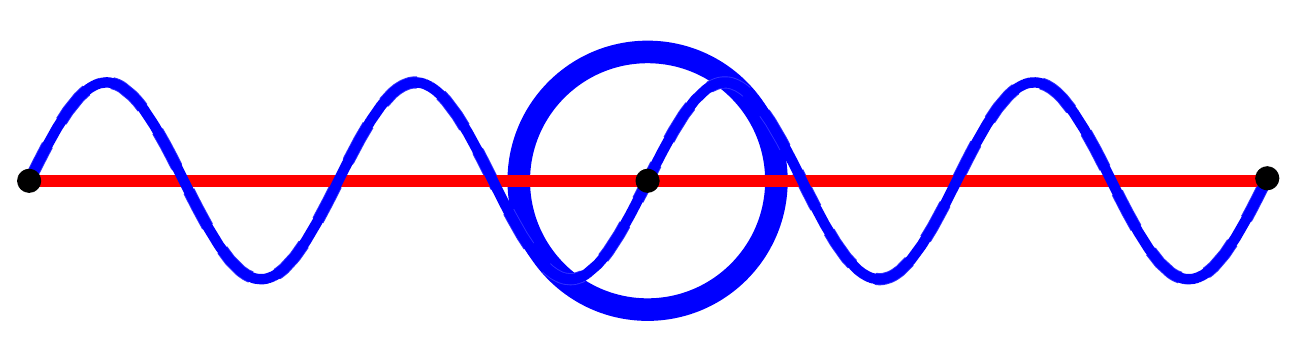}
\centering
\caption{Ideal distortion ring $\ltimes$ with $E_T(x) > 0$ and $\delta_\gamma=0$}
\label{fig:3}
\end{figure}

In this scenario, the curve of $f_\gamma(x, a_{\gamma_i})$ is illustrated in Figure \ref{fig:4}. The blue line from $s_0$ to $\gamma=s_n$ represents the curvature of the distorted curve that \textsc{Mapping} created, that is expected to exists by Corollary \ref{clr:4}, given an  optimizable $\lfloor T \rceil$ and a scene $H$ with at least one hamiltonian sequence.

As the observable collision rate between $p^{-}$ and $p^{+}$ can be maximized in a way that it favours $p^{+}$ over $p^{-}$,  an optimizable $\lfloor T \rceil$ assumes that \textsc{Reconstruct} is expected to terminate its execution by observing $\ltimes$ and projecting a curvature of a non-exponential curve between $s_0$ and $s_n$. In other words, \textsc{Reconstruct} is expected to terminate its execution without aborting itself, with a small $N_{\overline{\ltimes}}$. Because of that, an optimizable $\lfloor T \rceil$ considers the curve of $f_\gamma(x, a_{\gamma_i})$ the curvature of a potential error rate curve of a \textsc{Reconstruct} instance that runs without aborting itself in the worst case scenario.

\begin{figure}[H]
\includegraphics[scale=1]{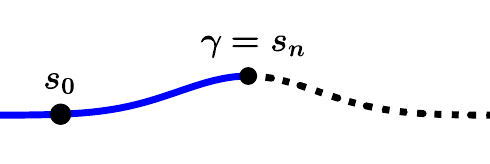}
\centering
\caption{Curve of $f(x, a_{\gamma_i})$ created by $\ltimes$, with $\delta_\gamma=0$, $E_{T}(x) > 0$}
\label{fig:4}
\end{figure}

Figure \ref{fig:5} shows $E_T(x)$ after the observation of particles $p^{-}$ in a state with $\delta_\gamma=\frac{\alpha}{4}$. Notice that the increase of $E^{-}(x)$   was not enough for the amplitudes peaks of $E^{-}(x)$ to  be greater than the amplitude peaks of $E^{+}(x)$ and $E_T(x)$. 

\begin{figure}[H]
\includegraphics[scale=0.6]{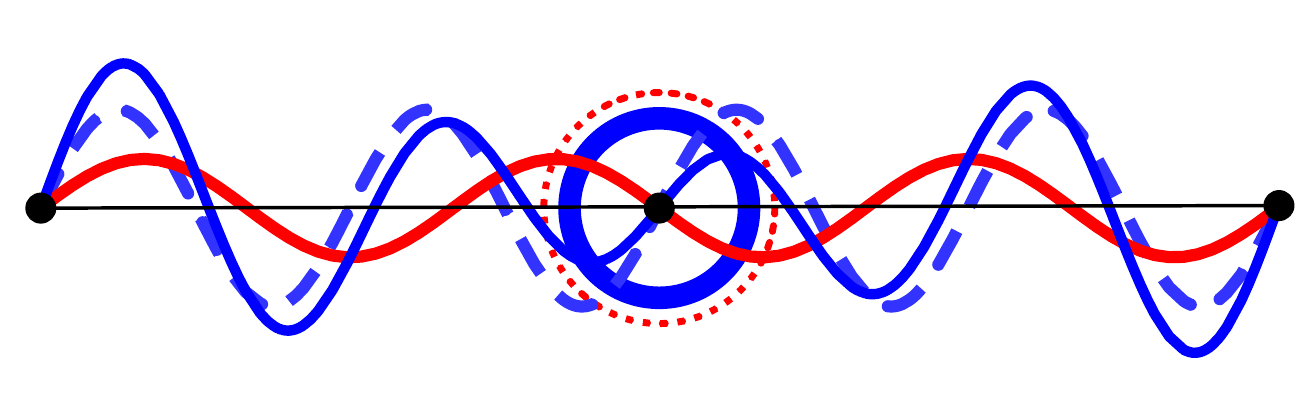}
\centering
\caption{Distortion ring $\ltimes$  observed with $E^{-}(x) < E^{+}(x)$ and $\delta_\gamma=\frac{ \alpha}{4}$}
\label{fig:5}
\end{figure}

The $f_\gamma(x, a_{\gamma_i})$ curve in the scenario is illustrated in Figure \ref{fig:6}. Such curve is representing the curvature of a non-exponential curve, which is a desired curvature since we want to imitate RS-R.

\begin{figure}[H]
\includegraphics[scale=1]{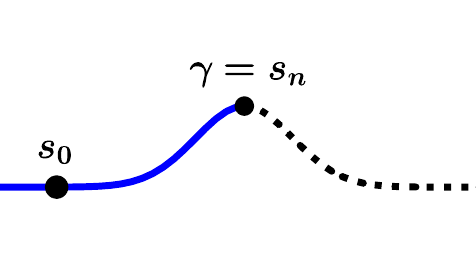}
\centering
\caption{Curve of $f(x, a_{\gamma_i})$ created by $\ltimes$, when $E^{-}(x) < E^{+}(x)$ and $\delta_\gamma=\frac{\alpha}{4}$}
\label{fig:6}
\end{figure}

Therefore, if \textsc{Reconstruct} finds a valid hamiltonian sequence in a state with $E^{-}(x) < E^{+}(x)$, $\lfloor T \rceil$ can make $E_T(x)$ collapse to $E^{+}(x)$ by setting $\delta_\gamma=0$, which is the configuration of an ideal distortion ring $\ltimes$.

Figure \ref{fig:9} shows an anti-distortion ring with $E^{-}(x) > E^{+}(x)$ in a state $k$ of \textsc{Reconstruct} with $F_{\overline{\ltimes}}= 0$, $F_{\overline{\ltimes}} \geq t - \gamma$. Notice that the amplitude peaks of $E^{-}(x)$ are greater than the amplitude peaks of $E^{+}(x)$, after the observation of an  unexpected number of $p^{-}$ particles. 

\begin{figure}[H]
\includegraphics[scale=0.6]{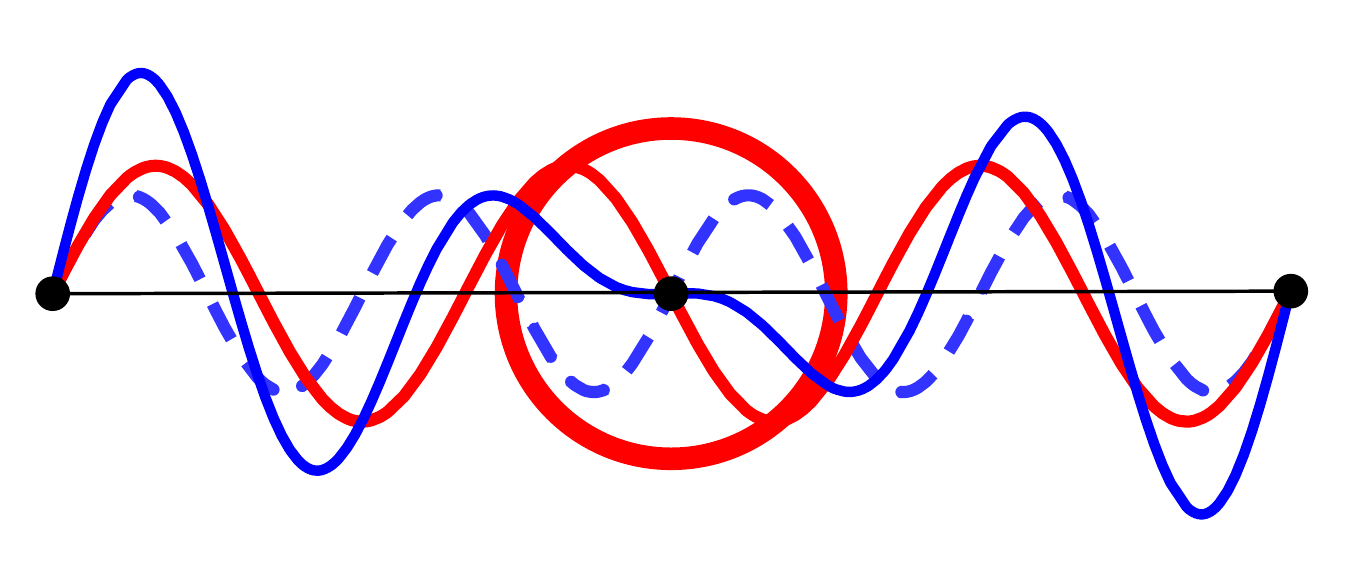}
\centering
\caption{Anti-distortion $\overline{\ltimes}$ ring observed with $E^{-} >  E^{+}$  and $\delta_\gamma=\frac{13 \alpha}{20}$}
\label{fig:9}
\end{figure}

As illustrated in the Figure \ref{fig:9}, \textsc{Reconstruct} observed that the maximization of collision rate between $p^{-}$ and $p^{+}$ doesn't favoured $p^{+}$ over $p^{-}$, since $\ltimes$ was spotted behaving like a $\overline{\ltimes}$. In other words, the expected imbalance between $p^{+}$ and  $p^{-}$ was not observed, which means that $\ltimes$ is disintegrated. Because of that, the non-exponential curvature of $f_\gamma(x, a_{\gamma_i})$ became unstable. 

As $f_\gamma(x, a_{\gamma_i})$ became unstable, $\lfloor T \rceil$ could also make the real error rate function of \textsc{Reconstruct} collapse to $+\infty$ in order to force the current instance of \textsc{Reconstruct} to "jump" into an imaginary state of RS-E and, at the same time, make $f(x, a_{\gamma_i})$ project a curvature of an exponential curve. Such curve is illustrated in the following figure.

\begin{figure}[H]
\includegraphics[scale=1]{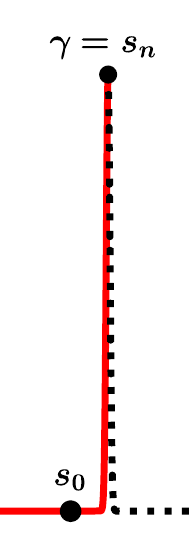}
\centering
\caption{Curve of $f(x, a_{\gamma_i})$ created by $\overline{\ltimes}$ with $\delta_\gamma=\alpha$}
\label{fig:11}
\end{figure}

As a consequence, $\lfloor T \rceil$ can: (1) negate the definitions of $E^{-}(x)$ and $E^{+}(x)$; and (2) make $E_T(x)$ collapse to $E^{+}(x)$ by setting $\delta_\gamma=0$. If so, we will have an ideal anti-distortion ring with almost the same inner structure of the observed anti-distortion ring $\overline{\ltimes}$ showed in Figure \ref{fig:9}. Figure \ref{fig:10} shows such ideal anti-distortion ring.

\begin{figure}[H]
\includegraphics[scale=0.6]{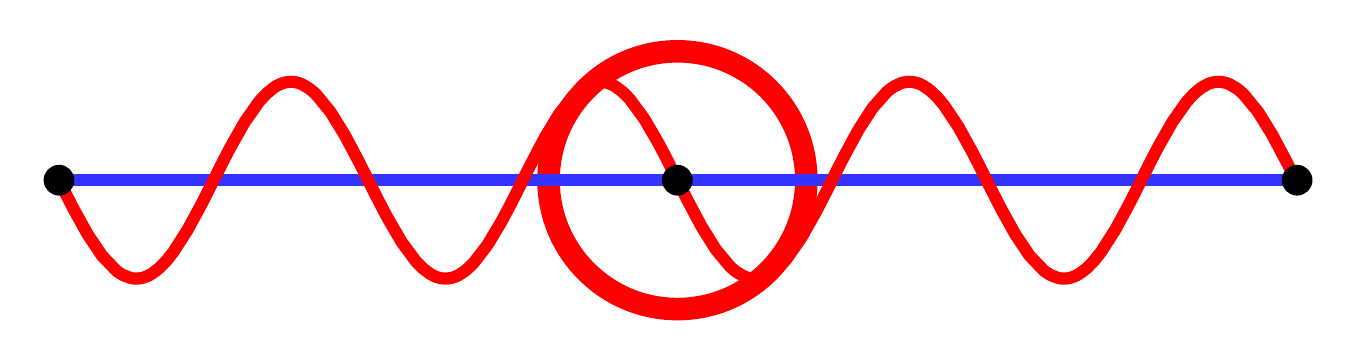}
\centering
\caption{An ideal anti-distortion ring  $\overline{\ltimes}$ with $\delta_\gamma=0$}
\label{fig:10}
\end{figure}

In this scenario, $\lfloor T \rceil$ can force a new expansion call, which makes SFCM-R be more prone to degenerate itself in case of successive negative events that make \textsc{Reconstruct} be tending to imitate RS-E explicitly. This is a desired behaviour of $\lfloor T \rceil$  since SFCM-R can't imitate RS-E explicitly.  

In conclusion, $\lfloor T \rceil$ is essentially taking advantage of Corollary \ref{clr:4} since it implies that the observable collision rate between $p^{-}$ and $p^{+}$  can be maximized in a way that it favours $p^{+}$ over $p^{-}$ , given an optimizable $\lfloor T \rceil$ and a scene $H$ with at least one hamiltonian sequence. Therefore, the main goal of $\lfloor T \rceil$ is to force \textsc{Reconstruct} to imitate RS-R in order to minimize the observation of $p^{-}$ particles, which represent inconsistent states, and consequently try to prevent \textsc{Reconstruct} from behaving like RS-E explicitly.

\subsubsection{General goal-oriented strategies}
\label{sec:11}

In this section, we present the goal-oriented strategies that SFCM-R needs to use to reconstruct a hamiltonian sequence. We call them general goal-oriented strategies due to the fact that they can be used to reconstruct both hamiltonian paths or hamiltonian circuits. The goal-oriented proposed in this section are primarily focused on keeping $H$ connected while preventing SFCM-R from imitating RS-E. Because of that, we assume that every goal-oriented strategy presented in this section is enforcing both constraints \ref{cst:1} and \ref{cst:2}. Please refer to section \ref{sec:12} to see specific strategies for hamiltonian path, that allow $P_{x_1}$ and $P_{x_2}$ to have non-adjacent dead ends when it's needed. 

\begin{observation} The strategies proposed in section \ref{sec:11} and \ref{sec:12} don't have necessarily an order of activation. It depends on how $\lfloor T \rceil$ is implemented, and specific signs that suggest that a specific strategy should be triggered by $\lfloor T \rceil$ in \textsc{Reconstruct} environment.
\end{observation}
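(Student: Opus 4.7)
The plan is to argue the observation by unpacking the specification of $\lfloor T \rceil$ already given in the pseudocode of the tolerance policy, and then exhibiting the degrees of freedom that it leaves open. First I would note that the pseudocode of $\lfloor T \rceil$ only prescribes the outer control flow (adjust $t_s$ and $a_{\gamma_s}$, update RSPN, populate a set $S$ of goal-oriented strategies, and trigger them while $s$ is inconsistent); it never fixes either the contents of $S$ at a given moment or the internal ordering imposed on $S$ by the \textbf{for each} loop. In particular, the populate-$S$ step and the re-populate step inside the loop are both parameters of the implementation, so any bijection between states and strategy-subsets is admissible as long as $\lfloor T \rceil$ remains optimizable in the sense of the constrained optimization problem already defined.

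The key steps would be: (i) read off from the pseudocode that the only hard constraint linking strategy activation to the state is through the disintegration test $\ltimes(\gamma,t)$ and through the RSPN nodes $\{\nA,\nC,\nJ,\nN\}$; (ii) observe that the ``specific signs'' the observation mentions are precisely the quantities updated in lines 2--3 (values of $t_s$, $a_{\gamma_s}$, and the nodes of RSPN touched by the current $C_\VH$, current state, ordering, and region context), so that which strategy is applicable at a given state is a function of these dynamic quantities, not of a static list; (iii) conclude that since $t_s$ and $a_{\gamma_s}$ are adjusted by $\lfloor T \rceil$ itself and RSPN is a directed tree-like network grown on the fly, two implementations of $\lfloor T \rceil$ that differ only in how they populate $S$ from the same RSPN snapshot can trigger strategies in different orders while both satisfying the optimizability definition. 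This gives the conclusion of the observation.

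The main obstacle, as I see it, is not a calculation but ensuring the argument is not vacuous: I need to show that at least one non-trivial ordering freedom survives after the optimizability constraint is imposed. My plan here would be to point out that the optimizability requirement only constrains the aggregate $\sum_{i \in S}(t_i - \gamma_i)$ over inconsistent states with $\ltimes(\gamma_i,t_i)=F$, so any permutation of strategy activations that leads to the same sequence of consistent/inconsistent labellings is admissible. In particular, whenever two distinct strategies both attach the same inconsistent $C_\VH$ without changing the resulting $\gamma_{s+1}, t_{s+1}$, they are interchangeable, and the choice between them is a free parameter of the implementation, justifying the observation.
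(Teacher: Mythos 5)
The paper does not actually prove this observation: it is stated as a design remark about how the strategy catalogue in sections \ref{sec:11} and \ref{sec:12} should be read, and the surrounding text simply proceeds to list strategies whose triggering conditions are phrased in terms of signs ($\gamma$ peaks, $\ltimes(\gamma,t)=F$, path overlapping, the state of $\nJ$, etc.) rather than in terms of a fixed schedule. Your proposal therefore supplies an argument where the paper supplies none, and the route you take --- reading the ordering freedom off the unpinned ``Populate $S$'' steps in the $\lfloor T \rceil$ pseudocode and identifying the ``specific signs'' with the dynamically updated quantities $t_s$, $a_{\gamma_s}$ and the RSPN nodes $\{\nA,\nC,\nJ,\nN\}$ --- is faithful to the paper's framework and is essentially the justification the author leaves implicit.

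Two caveats. First, your non-vacuousness step is the right thing to worry about, but you leave it as an existence assertion; you could close it concretely by pointing at the paper's own text, which repeatedly offers explicitly interchangeable variants (e.g.\ the three path-overlapping correction strategies, where after a path swap one may either swap back to the former $P_{x_1}$, continue through $P_{x_2}$, or continue until a new inconsistent $C_\VH \sim x_1$ appears), so at least one genuine ordering/selection freedom demonstrably survives any optimizability constraint. Second, your claim that \emph{any} bijection between states and strategy subsets is admissible subject only to optimizability is stronger than the paper warrants: Strategy \ref{str:1} and the $\nJ$-ordering machinery do impose hard constraints on when certain actions must fire (e.g.\ a new expansion call when $\ltimes(\gamma,t)=F$), so the freedom is real but not unconstrained. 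Neither caveat undermines the observation itself, which is, in the paper, a stipulation rather than a theorem.
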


Before continuing, we need to define some conventions. Every inconsistency $\VH \in H_\VH$ must be added to current state node $\nC$ when $\ltimes (\gamma, t) = T$. If $\ltimes (\gamma, t) = F$, every $\VH \in H_\VH$ must be added as a child of $\VH_i$ node in expansion call $i$. Such $\VH_i$ node is called \textit{static \VH\space articulation} and it must be child of $\nJ$. Every $C_\VH$ attached by adding an edge $[v,w]$ to $L_e$ with $w \in C_\VH$, must be added to attachment node $\nA$. 

The first strategy is to have $\phi=(\VH,\square)$, with $\VH$ being a \VH\space added to $\VH_i$ of $\nJ$, for every new expansion call made when $\ltimes (\gamma, t) = F$. As an example, the figure bellow shows the node $\nJ$ of RSPN. We can see on the left side an expansion call $k-1$ with a node $j_0 = \VH_0=\{w_1, w_2,w_3\}$ that was created in expansion call $k-3$, and another node $j_1 = \VH_{1}=\{w_{4}\}$ that was created in expansion call $k-2$ . The same figure shows $\nJ$ in expansion call $k$ with a node $j_2 = \VH_2=\{w_3\}$ that was created in expansion call $k-1$. In such case, $\nJ$ was updated since $w_3$ can't be part of two ordering constraints at the same time due to the fact that every vertex is visited once in hamiltonian sequence context. 

Therefore, $w_3$ was removed from $\VH_0$ in expansion call $k-1$. As $w_3$ and $w_4$ are the unique nodes of $\VH_{2}$ and $\VH_{1}$ respectively, we can enforce the ordering between $w_{3}$ and $w_{4}$ in expansion call $k$. In this case, $j_2=\VH_{2}$ and $j_1=\VH_{1}$ are \textit{active}. Such enforcement could result in non-synchronized edge removal operations in current expansion. In addition, if $j_i=\VH_i$ has only one child, $j_i = \VH_i$ can't be changed anymore. 

\begin{figure}[H]
\includegraphics[scale=0.6]{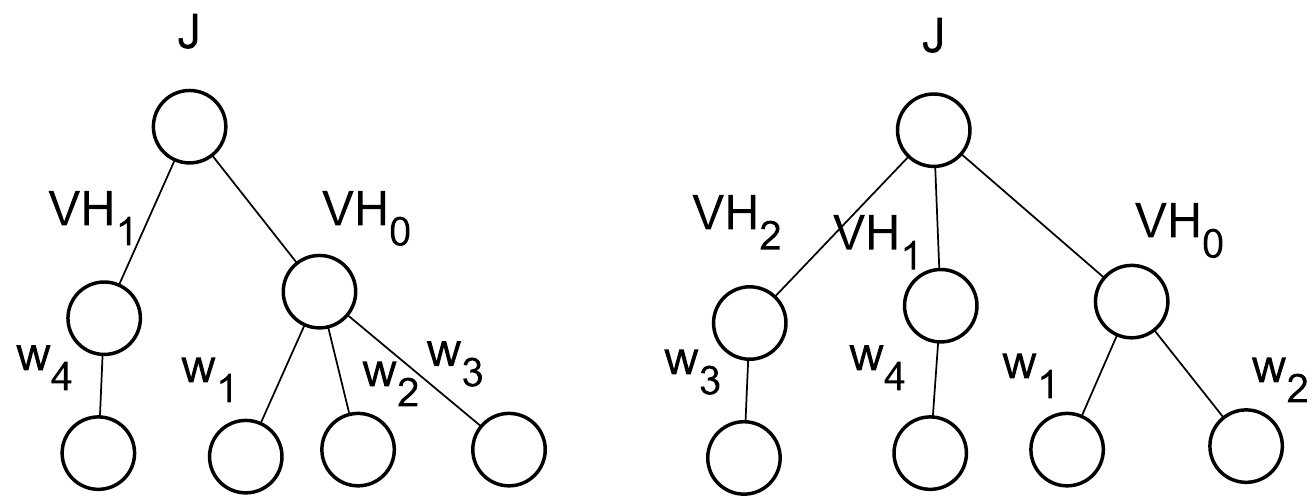}
\centering
\caption{RSPN's node $\nJ$ of expansion call $k-1$ (on left side) and $k$ (on right side)}
\label{fig:13}
\end{figure}

By Strategy \ref{str:1}, if SFCM-R runs in exponential time,  we'll no longer have a consistent minimal scene mapping. Such situation forces \textsc{Reconstruct} to choose by probability. As \textsc{Reconstruct} can't choose by probability, $\lfloor T \rceil$ will be forced to make SFCM-R abort itself since $\gamma$ will grow exponentially by using the following strategy. Therefore, this strategy forces the number of expansion calls to not grow exponentially. 

\begin{strategy}
\label{str:1}
Make a new expansion call $i$ with $\phi=(\VH,\square)$ such that $\VH \in H_\VH$ when $\ltimes (\gamma, t) = F$ and add every vertex $\VH \in H_{\VH}$ to a child $j_i = \VH_i$ of node $\nJ$. Update $\nJ$ and enforce ordering between $j_i = \VH_i$ and $j_k = \VH_{k}$ with $k > i$, if both are active. If (1) such ordering can't be enforced or (2) $\VH_i=\emptyset$, $\gamma$ must grow exponentially in order to make SFCM-R abort itself.
\end{strategy}

It's important to mention that SFCM-R assumes that both \textsc{Mapping} and RS-R have pre-synchronized forbidden conditions. It means that $\lfloor T \rceil$ must avoid making SFCM-R abort itself by Strategy \ref{str:1}. Also notice that high peaks of $\gamma$ can theoretically make $\nJ$ store an inconsistent ordering as the number of expansion calls grows. Even if it happens, $\nJ$ can't be changed arbitrarily. 

Therefore, $\lfloor T \rceil$ must try to retard $\gamma$ growth rate faster instead of making SFCM-R abort itself, in order to: (1) prevent a new expansion call; or (2) add another inconsistent $H_\VH$ set to $\nJ$ that either postpones the activation of static \VH\space points or causes less non-synchronized edge removal operations  when $\ltimes (\gamma, t) = F$.

\begin{strategy}
\label{str:2}
Make a new expansion call $i$ with $\phi=(\VH,\square)$ such that  $\VH \in H_\VH$ when $\ltimes (\gamma, t) = F$ with $H_\VH$ being a set that either postpones the activation of static \VH\space points or causes less non-synchronized removal operations.
\end{strategy}

$\lfloor T \rceil$ can also prevent the number of expansion calls from growing exponentially by preventing \textsc{Reconstruct} from making expansion calls to expand the same $P_{x_1}$ twice. Thus, we have the following strategy. 

\begin{strategy}
Every expansion call must have a different $x_1 \in \phi$
\end{strategy}

As mentioned earlier, each expansion call $i$ generates a static $j_i =\VH_i$ that must be added to node $\nJ$ of RSPN. However, SFCM-R needs to assume that the \textit{exactness rate} is enough for reconstruction process since \textsc{Reconstruct} must use paths of $H^{*}$, which is goal-oriented by Theorem \ref{thm:2}, to pass through $H$. The \textit{exactness rate} $\mu_x$ is the rate of how many non-synchronized edges got converted to synchronized edges from state $i=0$ to current state $x$. The more edges are removed from $L_e$, the lower is the exactness rate $\mu_x$. The $\lambda_{i}$ function outputs a set of $(v,\square)$ edges that was removed from $L_e$ in state $i$. $S$ is the number of edges $e \in L_e$ before reconstruction phrase.

\begin{equation} 
\mu_x=\left(1-\sum_{i=0}^x \frac{|\lambda_{i}|}{S}\right)
\end{equation} 

As we need to assume that the exactness rate is enough for reconstruction process, we want to restart the process considering $P_{x_2}$ as $P_{x_1}$ before making a new expansion call when $\gamma > t$. In this case, we have a \textit{path swap} since $P_{x_1}$ becomes $P_{x_2}$ and vice-versa.

\begin{strategy}
Before making a new expansion call, make a path swap in order to restart the process starting from $P_{x_2}$ path instead of $P_{x_1}$ path.
\end{strategy}

In addition, we need to use a lazy approach in order to  assume that the exactness rate is enough for reconstruction process. As an example, if we undo $k$ states to attach some inconsistent $C_\VH$, we need to assume that such $C_\VH$ will be properly attached without analysing the consequences of such attachment in its region.

\begin{strategy}
Any inconsistency correction must be made by using a lazy approach. 
\end{strategy}

The negativity rate can be also used when  \textsc{Reconstruct} connects components of $H^{*}$ by adding $[v,u]$ successively in non-mapped regions, with $u$ such that $d^{*}(u)=0$. In this case, \textsc{Reconstruct} is tending to ignore $H^{*}$ paths completely and consequently imitate RS-E, specially when $d^{*}(u)=0$ holds for every non-visited $u$ in the absence of inconsistent $C_\VH$ components that need to be attached. Because of that, $\gamma$ growth rate must be increased in this case. 

In addition, the number of times that \textsc{Reconstruct} can do it must be limited by a variable that is decreased as $\gamma$ growth rate is increased. This strategy is particularly useful to make SFCM-R degenerate itself when \textsc{Reconstruct} takes an incomplete $L_e$ as input, that was produced by \textsc{Mapping} without reaching its base.

\begin{strategy}
\label{str:6}
If $d^{*}(u)=0$ holds for every non-visited $u$ and \textsc{Reconstruct} successively connects components of $H^{*}$ by adding $[v,u]$ with $d^{*}(u)=0$, $\gamma$ growth rate must be increased. If \textsc{Reconstruct} connects such components of $H^{*}$ in the absence of inconsistent $C_\VH$ components that need to be attached, $\gamma$ growth rate must get increased drastically.  
\end{strategy}

\begin{strategy}
\label{str:7}
The number of times that \textsc{Reconstruct} connects components of $H^{*}$ by adding $[v,u]$ successively, with $u$ such that $d^{*}(u)=0$ must be limited by a variable that is decreased as $\gamma$ growth rate is increased.
\end{strategy}

\begin{observation}
Consider the following corollary of Theorem \ref{thm:2}.

\begin{corollary}
\label{clr:5}
If $\lfloor T \rceil$ is optimizable and \textsc{Reconstruct} wants to reconstruct a hamiltonian path, $\lfloor T \rceil$ may need to compute $P_{x_2}$ in a new instance of SFCM-R when $P_{x_1}$ reaches a dead-end if \textsc{Reconstruct} takes an incomplete $L_e$ as input, with $L_e$ being a non-synchronized hamiltonian sequence that was produced by \textsc{Mapping} without reaching its base case.
\end{corollary}

By Corollary \ref{clr:5} , $\lfloor T \rceil$ may need to create a new instance of SFCM-R in order to prevent itself from using Strategy \ref{str:6} or \ref{str:7} to degenerate the current instance of SFCM-R in a wrong moment, due to the fact that the reconstruction of $P_{x_2}$ in a different instance of SFCM-R could make the current instance of SFCM-R reach its base case instead of trying to enforce constraints \ref{cst:1} and \ref{cst:2}.
\\
\end{observation}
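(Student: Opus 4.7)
The plan is to derive Corollary \ref{clr:5} directly from Theorem \ref{thm:2} by specializing the two disjunctive behaviors of $\lfloor F \rceil$ identified at the end of that proof, and then showing that only the splitting behavior is compatible with an optimizable $\lfloor T \rceil$ in the hamiltonian path setting. First I would recall from the proof of Theorem \ref{thm:2} that when \textsc{Mapping} fails to reach its base case and outputs an incomplete $L_e$, this is only consistent with \textsc{Mapping} imitating RS-R if $\lfloor F \rceil$ itself terminates RS-R prematurely in one of two ways: either (i) by aborting RS-R outright because no valid hamiltonian sequence exists from the chosen starting vertex, or (ii) by splitting $H$ into components with $|H^n|=1$ so as to build $S = S_1 \cup S_2$ with $S_1 = v \dots r$ and $S_2 = v \dots r'$, handing the remaining dead-end to a fresh instance of RS-R so that the current instance can reach its base case without enforcing constraints \ref{cst:1} and \ref{cst:2}.

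Next I would restrict attention to the hamiltonian path case. In the circuit case, $P_{x_1}$ and $P_{x_2}$ must meet, and no splitting is needed; but for a hamiltonian path, the two expandable paths may legitimately end at non-adjacent dead-ends, which is exactly the $S_1 \cup S_2$ structure produced by option (ii). I would then argue that, since $L_e$ is incomplete, the portion of $H$ unvisited by \textsc{Mapping} mirrors the fragment that RS-R would delegate to a new instance, and therefore an optimizable $\lfloor T \rceil$ — which must mirror RS-R, not RS-E — cannot simply extend $P_{x_2}$ from $P_{x_1}$'s dead-end using the general connection mechanism, because doing so would force \textsc{Reconstruct} to add edges $[v,u]$ with $d^{*}(u)=0$ in non-mapped regions in succession, triggering the drastic $\gamma$ growth of Strategy \ref{str:6} and the depletion bound of Strategy \ref{str:7}.

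The conclusion would then follow: to avoid prematurely disintegrating $\ltimes(\gamma,t)$ by Strategy \ref{str:6}–\ref{str:7} and aborting the current SFCM-R instance in a state where a hamiltonian path actually exists, $\lfloor T \rceil$ must adopt option (ii) of Theorem \ref{thm:2}'s dichotomy, namely spawning a new SFCM-R instance to compute $P_{x_2}$ starting from the unmapped side, and allowing the current instance to reach its base case once $P_{x_1}$ hits its dead-end. This is precisely the content of Corollary \ref{clr:5}.

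The main obstacle I expect is in the middle step: making rigorous the claim that continuing to extend $P_{x_2}$ in the current instance necessarily involves successive $[v,u]$ additions in non-mapped regions, and that this is the behaviour penalized by Strategies \ref{str:6} and \ref{str:7}. This requires tying the incompleteness of $L_e$ precisely to the existence of a region of $H$ where $d^{*}(u)=0$ for every non-visited $u$, and then showing that, under an optimizable $\lfloor T \rceil$, the only alternative to exponential $\gamma$ growth in such a region is delegation to a fresh SFCM-R instance — which is the content we are trying to establish.
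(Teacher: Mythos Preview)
Your proposal is correct and aligned with the paper's reasoning, but you should know that the paper does not give a separate proof of Corollary~\ref{clr:5} at all: it is stated inside an Observation block as an immediate consequence of Theorem~\ref{thm:2}, with the only justification being the sentence that follows it about avoiding Strategies~\ref{str:6} and~\ref{str:7}. The source you correctly identified---option~(ii) at the end of the proof of Theorem~\ref{thm:2}, where $\lfloor F \rceil$ splits $H$ and spawns a new RS-R instance to handle the remaining dead-end so the current instance reaches its base case instead of enforcing constraints~\ref{cst:1} and~\ref{cst:2}---is exactly what the paper is invoking, and your elaboration of why this forces a fresh SFCM-R instance (rather than extending $P_{x_2}$ through unmapped regions and triggering Strategies~\ref{str:6}--\ref{str:7}) is already more detailed than anything the paper supplies. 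The obstacle you flag in your final paragraph is real, and the paper simply does not address it; it treats the corollary as self-evident from the Theorem~\ref{thm:2} dichotomy.
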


We can also use $\gamma$ to make $k$ increase or decrease. For example, if \textsc{Reconstruct} tries to attach an inconsistent $\VH$ stored in node C of RSPN by undoing $k$ states in order to add a synchronized-edge $[v,w]$ such that $w \in C_\VH$ and $(v,w) \notin L_e$  and every attachment attempt keeps generating another inconsistencies for every non-visited $w \in N(v)$ found, then $\gamma$ growth rate and $k$ can be increased at the same time to prevent SFCM-R from imitating RS-E. 

As a result, \textsc{Reconstruct} undoes $k^{\prime}$ states such that $k^{\prime} > k$ in order to not visit all neighbours of $v$. Therefore, $k$ must be proportional to $\gamma$ growth rate assuming that region $R$ is treatable by expanding $P_{x_1}$ or $P_{x_2}$. Such relationship between $k$ and $\gamma$ growth rate helps \textsc{Reconstruct} to attach frequently inconsistent $C_\VH$ components. On the other hand, if we can't find any attachable $C_\VH$ component by undoing $k$ states due to a high peak of $\gamma$, we can just delete the synchronized edge that is generating them, since they could happen to be attachable later.

\begin{strategy}
\label{str:8}
Undo $k$ states until we find the first inconsistent $C_\VH$ stored in $\nC$ node attachable through $w$ with $w$ such that $S=\{ w \in N(v) : (w \sim C_\VH) \neq \emptyset \wedge (\text{w was not visited})\}$, $S \neq \emptyset$, and remove the inconsistent $[v,u]$ edge from $L_e$. Then, choose a non-visited $w$ with $w \in S$, and add a synchronized-edge $[v,w]$ such that $w \in C_\VH$ and $(v,w) \notin L_e$. If no attachable $C_\VH$ component is found  in any previous states, due to a high peak of $\gamma$, then increase $\gamma$ growth rate, remove the inconsistent $[v,u]$ and go back to the former $v=y$ in order to choose another non-visited $u \sim y$. 
\end{strategy}

\begin{strategy}
The variable $k$ must be proportional to $\gamma$ growth rate assuming that region $R$ is treatable by expanding $P_{x_1}$ or $P_{x_2}$. 
\end{strategy}

The node $\nA$ can have some properties node to make \textsc{Reconstruct} keep track of an inconsistent region $R$ that \textsc{Reconstruct} wants to correct by triggering a strategy. The \textit{total cost} needed to attach an inconsistent $C_\VH$ and its appearance frequency can be used by $\lfloor T \rceil$ to detect if SFCM-R is tending to behave like RS-E. The total cost needed to attach an inconsistent $C_\VH$ can be represented by the following equation, where: $\Delta_{\gamma(s,a_i)} = \gamma_{s-1} - \gamma_s$; $\gamma_{s-1}$ is the value of $\gamma$ of state $s-1$; \; $\gamma_{s}$ is the value of $\gamma$ of an inconsistent state $s \in S$ where $a_i = C_\VH$,$a_i \in \nA$,  appeared as inconsistency; and $p(s,a)$ is an extra cost directly proportional to the appearance frequency of $a_i = C_\VH$ in $s \in S$. 

\begin{equation}
\nA.cost(a_i) =   \sum_{s \in S} \Delta_\gamma(s, a_i) + p(s, a_i)
\end{equation}

Because of that, $\lfloor T \rceil$ needs to make $\gamma$ growth rate increase as both the total cost needed to attach an inconsistent $C_\VH$ of $R$, and its appearance frequency, tends to increase.

Thus, we have the following strategy.

\begin{strategy}
Make $\gamma$ growth rate increase, as $\nA.cost(a_i)$ gets increased.
\end{strategy}

Furthermore, we can also use the negativity rate along with attached $C_\VH$ stored in $\nA$ to change the variable $k$. Thus, $\nA$ can be used by \textsc{Reconstruct} to keep track of specific regions in current expansion, serving as an extra parameter to change $k$. As an example, we can undo $k$ states until we find an arbitrary $C_\VH$ that was attached in current \textsc{Reconstruct} call. 

As mentioned before, we need to store inconsistent $C_\VH$ components in node $\nC$ before using any attaching strategy. However, SFCM-R can't imitate RS-E by trying to attach them aggressively. Thus, the following strategies could be useful to prevent SFCM-R from imitating RS-E.

\begin{strategy}
Avoid adding new $C_{\VH^\prime}$ to $\nC$ node until we have at least one well-succeeded $C_\VH$ attaching.
\end{strategy}

\begin{strategy}
If attachment attempts always generates new $C^\prime_\VH$ components, $\gamma$ growth rate must be increased drastically. In such case, try to attach additional $\VH^\prime$ components by adding them to $\nC$ node and giving them a higher priority.
\end{strategy}

Also, the number of $C_\VH$ of $\nC$ node can be limited by a variable that is decreased as $\gamma$ growth rate is increased. Such strategy forces \textsc{Reconstruct} to not try to attach $C_\VH$ components aggressively when we have successive peaks of $\gamma$.

\begin{strategy}
The number of $C_\VH$ components considered by current state must be limited by a variable that is decreased as  $\gamma$ growth rate is increased. 
\end{strategy}

Notice that once we have a valid attachable $C_\VH$, the remaining $C_\VH$ components can't be chosen by probability. As the choice of remaining $C^\prime_\VH$ components must be explicitly tied to a goal-oriented strategy, $\lfloor T \rceil$ can remove these $\VH$ vertices from $\nC$ since SFCM-R assumes that $\mu_x$ is enough for reconstruction process.

\begin{strategy}
Remove every $C_\VH$ from $\nC$ node for every $v$ after a valid attachable $C_\VH$ is found.
\end{strategy}

As mentioned earlier, if we try to attach every $C_\VH$  aggressively we can end up with SFCM-R imitating RS-E, since we can have subscenes with only invalid $C_\VH$ components. In other words, there is no guarantee that every $C_\VH$ found in every $R$ of vertices will be consistent without making any expansion call. Also, SFCM-R assumes that every vertex $w$ is reachable through $v$ or $\raiz$.  It means that there may exist components $C_\VH$ only attachable though $P_{x_2}$. In both cases, $P_{x_1}$ is \textit{overlapping} $P_{x_2}$ since an inconsistent region $R$ can happen to be consistent by either: (1) making a path swap in order to expand $P_{x_2}$ to correct inconsistencies; or (2) making a new expansion with a different $x_1 \in \phi$.

\begin{definition}
A path overlapping in a region $R$ of vertices is when: (1) $P_{x_2}$ needs to pass through $R$ to attach or cancel the appearance of inconsistent $C_\VH$ components found by expanding $P_{x_1}$; or (2) $\phi$ needs to be changed in order to attach or cancel the appearance of inconsistencies found by expanding $P_{x_1}$ or $P_{x_2}$.
\end{definition}

A path overlapping can occur in many cases. For example, if $H-P_{x_1}$ generates a non-reachable component $H^{\prime\prime}$ with $V(H^{\prime\prime}) \cap \{x_1,x_2\} = \emptyset$, $H^{\prime\prime}$ is clearly invalid in both hamiltonian circuit and hamiltonian path context. Also, we can have, in hamiltonian circuit context, $A(x_2, H)=T$ holding for $x_2$ by expanding $P_{x_1}$, or even worse, successive peaks of $\gamma$ in a region $R$. If we have successive peaks of $\gamma$ in a region $R$, there may exist a $C_\VH$ component frequently inconsistent by expanding $P_{x_1}$, suggesting that it may be attachable by expanding $P_{x_2}$. Another sign of path overlapping is when  $A(x_1, H) = T$ holds for $x_1$ in hamiltonian circuit context, and $H-P_{x_1}$ generates a component $H^\prime$ with $x_2 \in V(H^\prime)$ and $|V(H^\prime)|$ being very small. This sign suggests that such $H^\prime$ can't be generated by $P_{x_1}$. 

In such cases, the path overlapping correction strategies can be useful since we may find different $C_\VH$ components by expanding $P_{x_2}$ that can degenerate such inconsistencies without making new expansion calls. Therefore, we have the following strategy.

\begin{strategy}
If we have a path overlapping in some $R$ in $P_{x_1}$, undo $k$ states and make a path swap, so that we can pass through $R$ by expanding $P_{x_2}$. If path overlapping is corrected, make another path swap to continue the reconstruction process through former $P_{x_1}$.
\end{strategy}

As an alternative, instead of making a path swap to continue this process through former $P_{x_1}$, we can continue through $P_{x_2}$ without making a path swap.

\begin{strategy}
If we have a path overlapping in some $R$, undo $k$ states and make a path swap, so that we can pass through $R$ by expanding $P_{x_2}$. Continue through $P_{x_2}$ until we have another path overlapping.
\end{strategy}

Also, we can continue this process through $P_{x_2}$ until we have a new inconsistent $C_\VH \sim x_1$ in either current $P_{x_1}$ state or earlier states with a different $v=x_1$. If such $C_\VH$ is found, we undo the states created after the path swap and then, make another path swap to go back to $P_{x_1}$ in order to attach $C_\VH$. The goal here is to generate new inconsistent $C_\VH$ components to be attached by $P_{x_1}$ and change $P_{x_1}$ without relying on probability. 

\begin{strategy}
If we have a path overlapping in some $R$, undo $k$ states and make a path swap, so that we can pass through $R$ by expanding $P_{x_2}$. If path overlapping is corrected, continue through $P_{x_2}$ until we have new inconsistent $C_\VH \sim x_1$ in either current  $P_{x_1}$ state or earlier states with a different $v=x_1$.  If such $C_\VH$ is found,  undo the states created after the path swap and then, make another path swap to go back to former $P_{x_1}$ in order to attach such $C_\VH$.
\end{strategy}

As we're ignoring $u=\VA$ vertices in \textsc{Mapping}, we can have sequences of creatable components $H^\prime \supset H$ with $|H^n|=2$,$|H^c|=F$ when \textsc{Reconstruct} is passing through a potential \VH-path. If \textsc{Reconstruct} needs to attach an inconsistent $C_\VH$ of a potential \VH-path, we could choose an attachable $C_\VH$ of one of its endpoints in order to not make $\gamma$ growth rate get increased drastically. Such endpoints will be $C_\VH$ components that appear as inconsistency frequently. 

\begin{strategy}
Undo $k$ states until we find the first attachable $C_\VH$ of an endpoint of a potential \VH-path instead of making $\gamma$ growth rate increase drastically.
\end{strategy}

Before continuing, we need to define the last type of vertex mentioned in this paper, that will be called \textit{$C_\VH$ generators} or simply $\VG$.

\begin{definition}\emph{($C_\VH$ generator)}
Let \Gv be a minimal scene. A vertex $w \in V$ is a $C_\VH$ generator when $|H[V-w]_\VH| > |H[V]_\VH|$.
\end{definition}

From a technical point of view, $\VG$ is not $C_\VH$. On the other hand, if we consider $\VG$ as an inconsistent $C_\VH$,$\VH=\VG$, we can degenerate it so that the unwanted $C_\VH$ components are not created by $\VG$. Also, we can degenerate it by considering such unwanted $C_\VH$ components inconsistencies if we want to change the inconsistent \VH-path that $\VG$ is about to create. As $\VG$ is not an explicit $C_\VH$, this kind of event must make $\gamma$ growth rate increase but it's particularly useful in very specific cases. 

As an example, let $w$ be a vertex that for every $H^\prime \supseteq H$, $H^\prime - w$ generates two potential \VH-paths starting from $w$. It means that there's only one way to reach $w$ without having $P_{x_1}$ and $P_{x_2}$ being paths with non-adjacent dead ends. If $w$ needs to be attached as $C_\VH$, using a lazy approach here could make $\gamma$ growth rate increase. So we have to assume that either $\VG$, or such unwanted $C_\VH$ components created by $\VG$, are inconsistencies in order to attach $w$ properly. 

Another example is when we have unwanted $C_\VH$ components preventing $\lfloor T \rceil$ from making minimal scene attachments through Strategy \ref{str:8}. If these unwanted $C_\VH$ are properly attached,  $\lfloor T \rceil$ can prevent itself from using path overlapping correction strategies. As a result, this strategy can make \textsc{Reconstruct} undo a small number of states, which can retard $\gamma$ growth rate and consequently postpone the need of a new expansion call.

We can also use this strategy to enforce the ordering constraints of ordering node $\nJ$, or when we have signs  that suggests that there exists hidden region ordering constraints. A possible sign of hidden region ordering constraints is when \textsc{Reconstruct} finds itself using path overlapping correction strategies that generate always almost the same $C_\VH$ components from $P_{x_1}$ and $P_{x_2}$ with no significant progress. In this case, \textsc{Reconstruct} would just make a new expansion call due to a high peak of $\gamma$ in order to enforce such ordering by using ordering constraints of node $\nJ$. However, $\lfloor T \rceil$  can try to use this strategy before making a new expansion call when these components are about to force either $P_{x_1}$ or $P_{x_2}$ to create a wrong region ordering.

\begin{strategy}
If there's unwanted $C_\VH$ components created by $\VG$ , assume that $\VG$ or such unwanted $C_\VH$ components  are inconsistent $C_\VH$ components that need to be attached, make $\gamma$ growth rate increase and try to attach these inconsistencies.
\end{strategy}

We can also store valid sequences of minimal scene attachments in the  region node $\nN$ of RSPN whenever we find inconsistencies that cause successive peaks of $\gamma$. In this case, a useful strategy is to create a temporary  expansion call with $\phi=(w,\square)$ with $w$ being frequently part of non-attachable $C_\VH$ in current expansion, store a valid sequence of attached $C_\VH$ components in  $\nN$ and enforce this sequence of attachments through  $P_{x_1}$ by using a lazy approach locally. It means that $\lfloor T \rceil$  will not enforce this sequence at first. It must enforce parts of such sequence of attachments progressively only if it finds successive peaks of $\gamma$. 

The goal of this strategy is to minimize the number of expansions calls since we're enforcing a known valid sequence of $C_\VH$. It's important to mention that in order to enforce such ordering, these $C_\VH$ components need to appear as inconsistency explicitly. Therefore, such strategy is an extra parameter to change $k$. As an example, \textsc{Reconstruct} can undo $k$ states until it finds a vertex $v \in C_\VH$ with $C_\VH$ being part of a valid sequence of attachments.

\begin{strategy}
If we have high peaks of $\gamma$ in a region $R$ of vertices, then create a temporary expansion call with $\phi=(w,\square)$ with $w$ being frequently part of non-attachable $C_\VH$ components in current expansion in order to find and store a valid sequence of attached  $C_\VH$ components in \nN. Next, enforce this sequence of attachments $C_\VH$ through $P_{x_1}$ or $P_{x_2}$ progressively by using a lazy approach locally. If this strategy fails, $\gamma$ growth rate must be increased drastically.
\end{strategy}

As this strategy doesn't assume that $\mu_x$ is enough to reconstruct the hamiltonian sequence in region $R$, it must be used only in very specific cases. As an example, such strategy could be used when $\lfloor T \rceil$ is about to abort the reconstruction process or detects that the number of expansion calls is increasing very fast with no significant progress whenever \textsc{Reconstruct} tries to pass through such region.

\subsubsection{Goal-oriented strategies for hamiltonian path}
\label{sec:12}

In this section, we present specific goal-oriented strategies that SFCM-R needs to use to reconstruct a hamiltonian path. As mentioned earlier, the goal-oriented strategies of section \ref{sec:11} are focused on keeping  $H$ connected, considering $v \to \VH=T$ as an inconsistency. However, $P_{x_1}$ and $P_{x_2}$ may have non-adjacent dead ends in hamiltonian path. In this case, we can have up to one $v \to \VH=T$.  In other words, we can have $0 \leq \Delta(H) \leq 2-d$ with $\Delta(H)$ being the number of creatable components $H^\prime \supset H$,$V(H^\prime) \cap \{x_1,x_2\} = \emptyset$,$|H^n|=1$,$|H^c|=F$, and $d=0$ being a variable that is incremented when $x_1$ or $x_2$ reaches a dead end. 

Notice that the same strategies can be used in hamiltonian path context. In this context, we can ignore at least two $C_\VH$ attaching operations. If these $C_\VH$ components happen to be non-reachable by $P_{x_1}$ or $P_{x_2}$, just enforce the attachment of such invalid $C_\VH$ by using goal-strategies of section \ref{sec:11} and continue the reconstruction process. 

\begin{strategy}
In hamiltonian path context, Allow $\Delta(H)$ components to exist, assuming that these components are reachable by $x_1$ or $x_2$.
\end{strategy}

As an alternative strategy, we can enforce $H$ to have $H_{\VH}=\emptyset$ until we have only non-attachable $C_\VH$ components. When it happens, allow one $v \to \VH = T$ and split the scene $H$ in two different subscenes $H^\prime$ and $H^{\prime\prime}$ with $x_1 \in V(H^\prime)$ and $x_2 \in V(H^{\prime\prime})$. In this case, $x_1$ of $H$ will be the $x_1$ of $H^\prime$ and $x_2$ of $H$ will be the $x_1$ of $H^{\prime\prime}$. The $x_2$ of $H^\prime$ and $H^{\prime\prime}$ will be the root of a creatable component with $|H^n|=1$,$|H^c|=F$ (if one exists) of $H^\prime$ and $H^{\prime\prime}$, respectively.

\begin{strategy}
In hamiltonian path context, enforce $H_{\VH}=\emptyset$ until we have only non-attachable $C_\VH$ components.
\end{strategy}

If we enforce $H_{\VH}=\emptyset$ until we have only non-attachable $C_\VH$ components, we can find possible mandatory dead ends of hamiltonian path. As an example, the figure bellow shows a RSPN with $j_0 = \VH_0$ being an empty child of $\nJ$. The reason is that the vertices $w_1$ and $w_2$, which were added to $\nJ$ in expansion call $k-4$, were added to $\nJ$ again in expansion call $k-1$ to $\nJ$ when  $\ltimes (\gamma, t) = F$. It means that if we pass through $w_1$ or $w_2$ in expansion call $k$, $\Delta(H)$ could get increased by $\lfloor T \rceil$ at any moment since $\lfloor T \rceil$ failed to prevent $\VH_0$ from being empty.

\begin{figure}[H]
\includegraphics[scale=0.6]{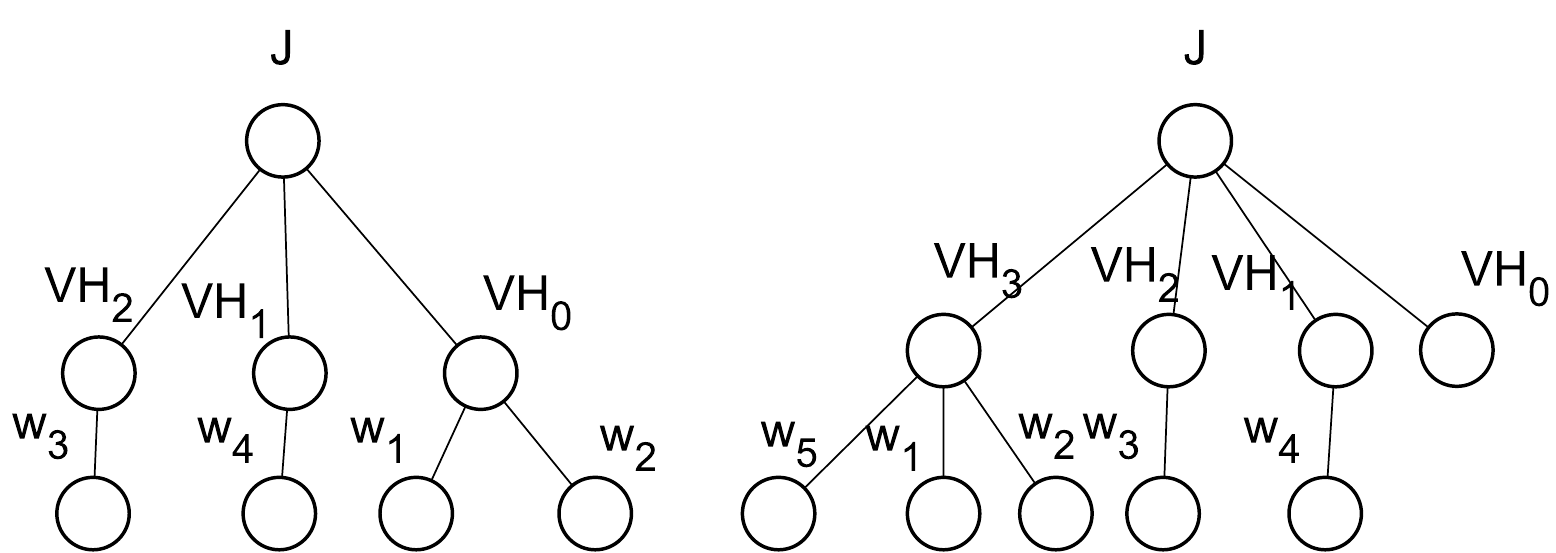}
\centering
\caption{RSPN's node $\nJ$ of expansion call $k-1$ (on the left side) and $k$ (on the right side) with $\VH_0$ being a mandatory dead end.}
\label{fig:14}
\end{figure}

When there's an empty $j_i = \VH_i$ node, such $\VH_i$ can be considered active. In figure 3, $w_1$ and $w_2$ forms together the only possible choice of $\VH_0$, which represents a possible mandatory dead end. So if we have $H_\VH \cap \{w_1,w_2\} \neq \emptyset$, or a creatable component $H^\prime \supset H$ with $1 \leq |H^n| \leq 2$,$|H^c|=F$, $\{w_1,w_2\} \cap V(H^\prime) \neq \emptyset$ , SFCM-R can just ignore these attachments at first since it must assume that $\Delta(H)$ could get increased by $\lfloor T \rceil$ at any moment since $\lfloor T \rceil$  failed to prevent $\VH_0$ from being empty.

\subsection{Proof of correctness} 
\label{sec:13}

This section is dedicated to the proof of correctness of \textsc{Reconstruct}, which consequently proves the correctness of SFCM-R algorithm. In this section, the unknown negated forbidden condition of RS-R is refereed to as $\lfloor F \rceil$. Before continuing, consider the following corollaries of Theorem \ref{thm:2}. 

\begin{corollary}
\label{clr:2}
$\lfloor F \rceil$ can't ignore the constraints of SFCM-R completely.
\end{corollary}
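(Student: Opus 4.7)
The plan is to argue by contradiction, reusing the argument block that already appears inside the proof of Theorem \ref{thm:2}. Suppose, toward contradiction, that $\lfloor F \rceil$ is able to ignore the constraints of SFCM-R completely. The constraints of SFCM-R are of two flavours: the real-scene constraints (Constraint \ref{cst:1} and Constraint \ref{cst:2}) and the minimal-scene constraints (\ref{cst:3}--\ref{cst:11}), together with the priority order on labelled successors. Since Theorem \ref{thm:2} already shows that \textsc{Mapping} is goal-oriented by imitating $\lfloor F \rceil$, every constraint of SFCM-R is in correspondence with some behaviour of $\lfloor F \rceil$; the claim therefore reduces to showing that this correspondence is not vacuous.

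I would split into two cases exactly as the theorem's proof does. First, in any scenario with $H_\VB \neq \emptyset$, if $\lfloor F \rceil$ ignored the minimal-scene constraints completely, then at least one $\VB \in V(\tau_i)$ would appear as an inconsistency of the real scene in some state of RS-R, forcing $\lfloor F \rceil$ to ignore Theorem \ref{thm:1} in that state; but Theorem \ref{thm:1} is a hard sufficient condition for \textsc{Validator} to output \textbf{false}, so $\lfloor F \rceil$ cannot ignore it without making RS-R produce an invalid output. Second, in any scenario with $H_\VB = \emptyset$, if $\lfloor F \rceil$ ignored the constraints completely there would exist at least one $v \in V(H)$ with $v \to u = F$ for every $u \sim v$ in some state, and this would occur even though Theorem \ref{thm:1} is not being violated — again invalid, because it would force RS-R to abort on an input where a valid hamiltonian sequence still exists through some other $u \sim v$ that $\lfloor F \rceil$ refused to explore.

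The combination of the two cases contradicts the assumption that $\lfloor F \rceil$ is a valid negated forbidden condition of RS-R (in particular, that RS-R produces a valid output whenever one exists and aborts only when none exists). Hence $\lfloor F \rceil$ must enforce at least a partial synchronization with the constraints of SFCM-R, i.e.\ it cannot ignore them completely. The main obstacle in writing this cleanly is keeping the logical roles straight: the statement is not that $\lfloor F \rceil$ enforces the SFCM-R constraints \emph{verbatim}, but only that it cannot bypass the \emph{conceptual} content that those constraints encode in the minimal scene, since doing so would collapse into ignoring Theorem \ref{thm:1}. I would therefore phrase the conclusion carefully as ``not completely'', matching the language used throughout the paper, rather than asserting literal equivalence of the two forbidden conditions.
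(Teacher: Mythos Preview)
Your proposal is correct and follows exactly the route the paper takes: the corollary is stated without a separate proof because the two-case contradiction argument (the $H_\VB \neq \emptyset$ case forcing $\lfloor F \rceil$ to ignore Theorem~\ref{thm:1}, and the $H_\VB = \emptyset$ case forcing $v \to u = F$ for every $u \sim v$) is already embedded verbatim in the proof of Theorem~\ref{thm:2}, and you have correctly identified and extracted that block. Your closing remark about phrasing the conclusion as ``not completely'' rather than literal equivalence is also in line with how the paper uses the corollary downstream.
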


\begin{corollary}
\label{clr:3}
If RS-R ran without aborting itself, its potentially-exponential error rate curve was completely distorted by $\lfloor F \rceil$ in its final state.
\end{corollary}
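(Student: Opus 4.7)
The plan is to argue by contrapositive, directly leveraging the characterization of $\lfloor F \rceil$ already established inside the proof of Theorem \ref{thm:2}. Recall from that proof the following informal but decisive fact: at every state $i$ of RS-R, $\lfloor F \rceil$ must distort the potentially-exponential error rate curve (the curve counting how many times $v \to u = F$ holds for $u \sim v$), and the only mechanism available to $\lfloor F \rceil$ when that distortion is about to fail is to abort RS-R by making it throw a non-catchable exception. I would take this dichotomy, ``distort or abort'', as the engine of the proof.

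First I would make the dichotomy precise: in any state $i$ of RS-R, either (1) $\lfloor F \rceil$ succeeds in keeping the error rate curve non-exponential up to state $i$, or (2) $\lfloor F \rceil$ detects that its distortion has been degenerated and aborts RS-R. Theorem \ref{thm:2} already argues both directions of this dichotomy: the ``distort'' side in the discussion of how $\lfloor F \rceil$ predicts exponential growth and discards subscenes without aborting, and the ``abort'' side in the passage where $v \to u = F$ holds for every $u \sim v$ and $\lfloor F \rceil$ is forced to throw a non-catchable exception. Next I would take the contrapositive of side (2): if RS-R runs without aborting itself, then at no intermediate state $i$ does $\lfloor F \rceil$ detect that its error curve distortion has been degenerated.

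From this, the conclusion follows by induction on the states of RS-R. The base case is trivial, since in the initial state no errors have yet been recorded. For the inductive step, assuming the error rate curve has been completely distorted up to state $i-1$ and that RS-R did not abort at state $i$, then by the dichotomy the distortion at state $i$ was successful, extending the complete distortion to state $i$. Taking $i$ to be the final state yields exactly the statement of the corollary.

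The main obstacle I expect is that ``completely distorted'' is defined only semi-formally in the paper through the discussion around $\lfloor F \rceil$ in Theorem \ref{thm:2}; to make the induction airtight, I would need to fix the precise meaning of distortion at a single state (namely, that the cumulative count of $v\to u=F$ events up to that state remains bounded by a sub-exponential profile consistent with the behaviour of RS-R rather than RS-E). Once this is pinned down as the negation of the condition under which $\lfloor F \rceil$ throws a non-catchable exception, the argument above goes through without any additional machinery, since all the heavy lifting has already been done inside the proof of Theorem \ref{thm:2}.
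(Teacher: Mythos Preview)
The paper gives no separate proof of this corollary: it is simply stated as an immediate consequence of Theorem~\ref{thm:2}, alongside Corollary~\ref{clr:2}, with no argument beyond the phrase ``consider the following corollaries of Theorem~\ref{thm:2}.'' Your proposal therefore does not conflict with the paper's approach; rather, it spells out what the paper leaves entirely implicit.

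Your reconstruction is faithful to the reasoning inside the proof of Theorem~\ref{thm:2}. The ``distort or abort'' dichotomy you isolate is exactly the mechanism the paper describes there: $\lfloor F \rceil$ ``predicts, directly or indirectly, when the error rate curve of RS-R will grow exponentially in order to make RS-R abort itself,'' and when $v \to u = F$ holds for every $u \sim v$, ``$\lfloor F \rceil$ is predicting when its error rate curve distortion is about to be degenerated in order to abort RS-R.'' Taking the contrapositive and running a state-by-state induction is a clean way to package this into the stated corollary, and it does not introduce any idea not already present in the paper.

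Your own caveat is the right one: the paper never gives a precise definition of ``completely distorted,'' so the induction step rests on the informal characterization from Theorem~\ref{thm:2} rather than on a sharp predicate. That is not a flaw in your argument relative to the paper --- the paper operates at the same level of informality --- but if you want a genuinely airtight proof you would need to supply that definition yourself, as you note.
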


Now, we will prove the following theorem, which states  that \textsc{Reconstruct} is goal-oriented with at most $|V|-1$ expansion calls a different $ x_1 \in \phi$, are made by using an optimizable tolerance policy $\lfloor T \rceil$. 

\begin{theorem}
\label{thm:3}
\textsc{Reconstruct} is goal-oriented if at most $|V|-1$ expansion calls, with a different $x_1 \in \phi$, are made by using an optimizable tolerance policy $\lfloor T \rceil$.
\end{theorem}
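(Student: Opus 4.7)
The plan is to show that, under the hypotheses, every successor choice and every edge removal made by \textsc{Reconstruct} is non-probabilistic, involves the minimal scene directly (through $H$ and $L_e$), and involves the real scene partially (through constraints \ref{cst:1} and \ref{cst:2}). Built on top of Theorem \ref{thm:2} and its Corollaries \ref{clr:1}--\ref{clr:5}, the argument will exhibit how the error-curve distortion already performed by \textsc{Mapping} is inherited by \textsc{Reconstruct} via $L_e$, the mirrorable RS-R pseudocode, and the RSPN data structure. The starting observation is that conventions 1--3 of Section \ref{sec:9} deterministically select $u$ from the ordered sets $S_1$, $S_2$, and $S_0$ as functions of $L_e$ and \Hs; since $L_e$ is goal-oriented by Theorem \ref{thm:2} and Corollary \ref{clr:1}, and since $\lfloor F \rceil$ cannot ignore constraints \ref{cst:1} and \ref{cst:2} completely by Corollary \ref{clr:2}, every edge removed via the inconsistency cases I, II, III is also meaningful in RS-R's context, so the core successor-selection mechanism is already goal-oriented whenever no error is thrown.

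Next, I would analyse what happens when the current state is inconsistent and the conventions cannot be applied. In that case, $\ltimes(\gamma, t) = F$ triggers $\lfloor T \rceil$ to choose among the strategies of Sections \ref{sec:11}--\ref{sec:12}. Because $\lfloor T \rceil$ is optimizable, the choice is made by solving
\begin{equation}
\underset{t_i}{\operatorname{arg\,min}} \sum_{i \in S}(t_i - \gamma_i), \quad \gamma_i > t_i,
\end{equation}
using only deterministic data from the RSPN nodes \nA, \nC, \nJ, \nN. The quantum-inspired discussion of Section \ref{sec:10}, together with Corollary \ref{clr:4}, shows that this adjustment is precisely the mirror of how $\lfloor F \rceil$ distorts the potentially-exponential error-rate curve of RS-R, so each triggered strategy is itself a goal-oriented move rather than a probabilistic one.

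Then I would bound the number of expansion calls. Since each expansion call must use a different $x_1 \in \phi$ with $x_1 \in V(H)$, at most $|V|$ distinct starting vertices are available; the vertex originally fixed as \raiz acts as the anchor of $P_{x_2}$ across all expansions and is unavailable as a new $x_1$, so the cap is $|V|-1$. By Strategy \ref{str:1}, if $\ltimes(\gamma, t) = F$ persists beyond this cap, $\gamma$ must grow exponentially, making SFCM-R abort itself --- and by Corollaries \ref{clr:2}--\ref{clr:4} this abortion is itself the correct imitation of $\lfloor F \rceil$ terminating RS-R without producing a valid output. Hence, whenever \textsc{Reconstruct} does not abort, it has stayed within the $|V|-1$ bound while remaining goal-oriented.

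The main obstacle will be showing that the combination of RSPN, the optimizable $\lfloor T \rceil$, and the finite catalogue of goal-oriented strategies actually uniquely determines a successor in every inconsistent state, without secretly resorting to probability. This is delicate because several strategies can apply simultaneously, their activation order is only implicitly fixed by $\lfloor T \rceil$'s optimization target, and successive path swaps or ordering enforcements can interact through the static \VH\space articulations stored in \nJ. The argument will need to show that an optimizable $\lfloor T \rceil$ resolves such conflicts using only previously recorded goal-oriented data, and that the finiteness of RSPN --- guaranteed by the $|V|-1$ expansion cap --- prevents any infinite regress that would otherwise force \textsc{Reconstruct} into the exhaustive behaviour of RS-E.
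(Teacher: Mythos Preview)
Your plan diverges from the paper's proof in two substantive ways, and the second of these is a genuine gap.

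\textbf{Different justification for the $|V|-1$ cap.} You derive the bound by a pigeonhole on available start vertices $x_1$. The paper instead argues through the \nJ{} node: each expansion call $i$ contributes a static $\VH_i$ without duplicated \VH{} points, and since the first \textsc{Reconstruct} call can already update \nJ, after $|V|-1$ expansion calls every static $\VH$ is \emph{active}. At that point the ordering constraints force $L_e$ itself to be a hamiltonian sequence, or Strategy~\ref{str:1} aborts. So in the paper the number $|V|-1$ is not merely an upper bound on how many distinct $\phi$ are possible; it is the threshold at which the RSPN ordering becomes rigid enough to decide the outcome. Your counting argument does not recover this structural consequence, and without it you have no mechanism that forces termination to coincide with either a valid output or an RS-R-style abort.

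\textbf{The unresolved obstacle.} You correctly flag as ``the main obstacle'' the need to show that the finite catalogue of strategies plus an optimizable $\lfloor T\rceil$ actually determines a successor in every inconsistent state without hidden probability. But your proposal then stops, saying only that ``the argument will need to show'' this. The paper's proof is precisely an attempt to discharge this obligation: it introduces the forbidden sequence $F_v$ and the set $Z=H_\VH$, lists five explicit reasons (I)--(V) why degenerating $F_v$ imitates RS-R rather than RS-E, and then recasts the whole situation as a minimax game $\Game(\curlywedge_j,\gamma)$ between thermodynamic systems $S_{\text{SFCM-R}}$ and $S_{\text{RS-E}}$, arguing that the existence of a fixed point $s\in\Phi(s)$ makes the game unfair for $S_{\text{RS-E}}$. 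It then layers on the ordering analysis of the sets $X_i,X_k$ stored in \nJ{} to handle the case where several static articulations interact. Whatever one thinks of the rigour of that machinery, it is the paper's answer to exactly the question you leave open. A proposal that names the obstacle but supplies no candidate resolution is not yet a proof sketch; at minimum you would need your own substitute for the $F_v$/minimax/ordering argument, or an explanation of why optimizability of $\lfloor T\rceil$ alone already forces determinism of strategy selection.
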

 
\begin{proof}

Let \Gv be a scene, and $\lfloor T \rceil$ be an optimizable tolerance policy. As \textsc{Mapping} ignores Theorem \ref{thm:1} partially and \textsc{Reconstruct} passes through $H$ by using paths of $H^{*}$, \textsc{Reconstruct} is goal-oriented only if its error rate curve, which is the curve of $\gamma$, doesn't degenerate the error rate curve distortion made by \textsc{Mapping} (Corollary \ref{clr:4}) while enforcing both constraints \ref{cst:1} and \ref{cst:2}.

Let $F_v = P_{v}$ with $v \in \{x_1,x_2\}$, be a forbidden sequence of $H^{*}$ that makes the current of state of \textsc{Reconstruct} be inconsistent in $H[V-F_v]$. Let $Z=H_\VH$ be an inconsistent $H_\VH$ generated by $F_v$. If $F_v$ is found by \textsc{Reconstruct}, $\lfloor T \rceil$ (along with the proposed goal-oriented strategies and variants) makes \textsc{Reconstruct} either: 

\begin{enumerate}[(1)]

\item degenerate $F_v$ by undoing $k$ states in order to attach a $C_\VH$ component such that $\VH \in Z$ through $x_1$ or $x_2$; or 

\item perform a new expansion call with $\phi$ such that $\phi=(\VH,\square)$ and $\VH \in Z$ in order to degenerate $F_v$ ordering by accessing $Z$ before $F_v$.  
\end{enumerate}

Notice that: 

\begin{enumerate}[(I)]

\item \textsc{Reconstruct} imitates \textsc{Mapping}, which is a goal-oriented by Theorem \ref{thm:2}, in order to degenerate $F_v$ since it minimizes the appearance of non-mandatory $C_\VH$ components by attaching them successfully, while using of paths $H^{*}$ to pass through $H$, which could make SFCM-R ignore its own constraints partially to imitate RS-R, that also can ignore SFCM-R constraints partially by Corollary \ref{clr:2}. 
\\
\item By Corollary \ref{clr:1}, the existence of non-mandatory $C_\VH$ components and potential isolated forbidden minors doesn't imply that \textsc{Reconstruct} is ignoring $\lfloor F \rceil$ by imitating \textsc{Mapping}; 
\\
\item $F_v$ is not degenerated by imitating RS-E explicitly due to both $\lfloor T \rceil$, and restrictions related to the proposed goal-oriented strategies (and variants) that forces $\Delta(H)$ to be consistent while preventing \textsc{Reconstruct} from imitating RS-E explicitly;
\\
\item Assuming that $\lfloor F \rceil$ makes a recursive \textsc{Hamiltonian-Sequence} call to check if $v \to u=T$ hold for $u$, due to fact that RS-R performs only $v \to u=T$ operations, $\lfloor F \rceil$ can discard the scene $G$ of successive recursive calls without aborting RS-R in order to return $v \to u=F$ to their callers. Each caller, in turn, either increments its error count by one or makes $v \to u=F$ hold for the remaining $u$. Thus, \textsc{Reconstruct} is imitating RS-R when \textsc{Reconstruct} is undoing $k$ states in order to attach a $C_\VH$ component such that $\VH \in Z$;
\\
\item  We can also assume that $\lfloor F \rceil$ can also change the first $v=y$ of the first \textsc{Hamiltonian-Sequence} call, when $y$ is preventing  $\lfloor F \rceil$ from constructing a valid hamiltonian sequence $S$ in order to not make RS-R fail to produce a valid output, with $S=v_i ... v_k$ such that $|S|=|V|$, $1 \leq k \leq |V|$,$1 \leq i \leq k$, $v_1 \neq y$. Thus, \textsc{Reconstruct} is imitating RS-R when \textsc{Reconstruct} is performing a new expansion call with $\phi$ such that $\phi=(\VH,\square)$ and $\VH \in Z$ in order to degenerate $F_v$ ordering by accessing $Z$ before $F_v$. 
\\

\end{enumerate}

To illustrate (I), (II), (III), (IV) and (V), assume that RS-R, SFCM-R and RS-E are thermodynamic closed isolated systems in a row, defined by  $S_{\text{RS-R}}=S(\text{SFCM-R},x_i,x^\prime_j)$, $S_{\text{SFCM-R}}=S(\text{RS-R},x_i,x_j)$, and $S_{\text{RS-E}}=S(\text{RS-E},x_i,x^{\prime\prime}_j)$, respectively.  $S(A,x_i=w_i,z_j)$ is a linear combination of Gaussian kernels, which illustrates non-overlapping homeomorphic imaginary surfaces in different dimensions. 

\begin{equation}
{\begin{array}{rcll} \displaystyle{S(A,x_i=w_i,z_j) = \sum_{j=1}^{n} \curlywedge_j e^{-\parallel x_i - z_j  \parallel}}, \hphantom{00} &&  w_i \in V, \curlywedge_j \geq 0, n=|V|\end{array}}
\end{equation}

$S_{SFCM-R}$ in the middle illustrates the following quantum superposition as we want.

\begin{equation}
S_{\text{SFCM-R}} = c_0 \mid S_{\text{RS-R}}\rangle + c_1 \mid S_{\text{RS-E}}\rangle  
\end{equation}
\begin{equation}
c_0 =  max(0, t - \gamma)
\end{equation}
\begin{equation}
c_1 = 1 - c_0
\end{equation}

Let $T_{S_{SFCM-R}}$, $T_{S_{RS-R}}$ and $T_{S_{RS-E}}$ be $\gamma$, $\nabla$  and $\infty$, respectively, with $T_S$ being the temperature of $S$ at equilibrium and $\nabla$ being an imaginary variable. 

In this context, we set $\gamma$ as follows because the hidden variable $\curlywedge_j$, that corresponds to the temperature at $x_j$, is uniform in every $x_j$ when $S_{SFCM-R}$ is at equilibrium. In this sense, as $F_v$ represents an inconsistency of SFCM-R, $F_v$ makes $c_j$ and $\curlywedge_j$ increase.

\begin{equation}
\gamma = \sum \frac{\curlywedge_j}{|V|}=\curlywedge_j 
\end{equation}
\begin{equation}
{\begin{array}{rcll}  \curlywedge_j = \gamma + c_j, \hphantom{00} && c_j \geq 0 \end{array}}
\end{equation}

Because (I), (II), (III), (IV) and (V), we can assume that $S_{RS-E}$ and $S_{SFCM-R}$ are essentially disputing the following minimax-based game, which tests the effectiveness of $S_{SFCM-R}$ on minimizing its disorder (entropy) as $\gamma$ growth rate increases by using a systematic method, which  forces $T_{S_{SFCM-R}}$ to approach $T_{S_{RS-R}}$ instead of $T_{S_{RS-E}}$ in order to $\lfloor T \rceil$ not be more prone to abort $S_{SFCM-R}$. 

\begin{equation} 
{\begin{array}{rcll} \vartheta = \displaystyle{ \underset {\curlywedge_j} {\operatorname{min}}\;\underset {\gamma} {\operatorname{max}}\; \Game(\curlywedge_j, \gamma) = \frac{1}{n} \sum_{j \in V} (\curlywedge_j - \gamma)^2} , \hphantom{00} &&  0 \leq \gamma \leq \curlywedge_j, \gamma < t,  n=|V| \end{array}}
\end{equation}

In other words, $S_{SFCM-R}$ win $\vartheta$ only and only if it forces itself to not collapse to $S_{RS-E}$ successive times due to $c_0 = 0$, which can maximize the entropy of $S_{SFCM-R}$ by $T_{S_{SFCM-R}}$ approaching $\infty$. Figure \ref{fig:12} illustrates an scenario where $\vartheta$ became unfair for $S_{SFCM-R}$.

\begin{figure}[H]
\includegraphics[scale=0.8]{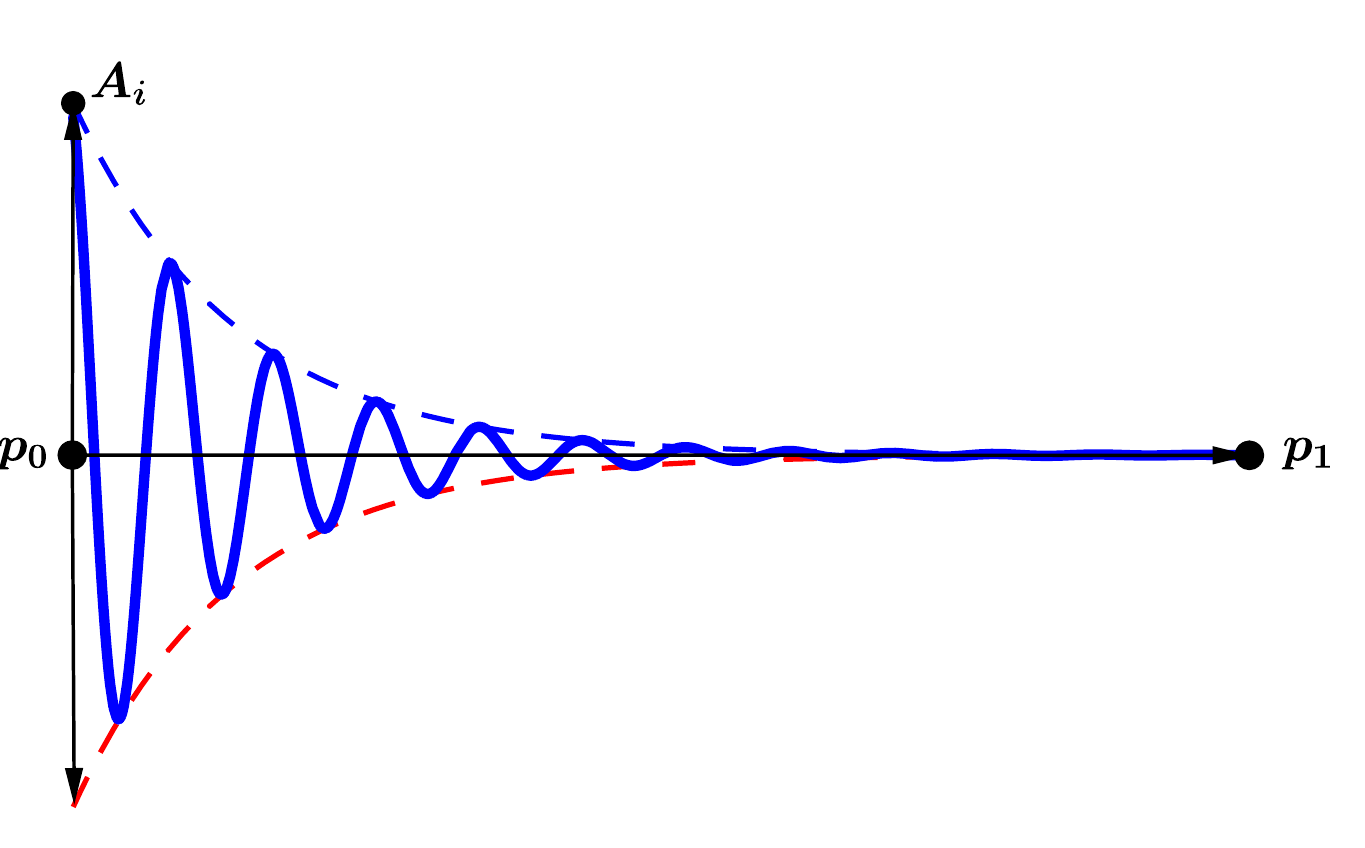}
\centering
\caption{Illustration of an approximated "anti-exponential" curve $h(x)$ of $S_{SFCM-R}$ decaying exponentially from $(x_i, A_i), A_i = max(0,t_i-\gamma_i) = t_i-\gamma_i$ to $(x^\prime_i, y_i)$, $y_i = max(0,t_i-\gamma_i) = 0$, $x^\prime_i \in [x_i ... x^\prime_i]$, as a function of the number of expansion calls, with $A_i$ representing each positive amplitude peak of $h(x)$, resulting on the entropy of $S_{SFCM-R}$ being maximized by $T_{S_{SFCM-R}}$ approaching $\infty$. The functions used in this figure are as follows.}
\label{fig:12}

\begin{equation}
\displaystyle{f(x) = e^{-bx}} \text{  (blue dashed line) }
\end{equation}
\begin{equation}
\displaystyle{g(x) = -(f(x))} \text{ (dashed red line) }
\end{equation}
\begin{equation}
\displaystyle{h(x) = f(x)\;cos(\sqrt{w-b^2})} \text{   (blue line)}
\end{equation}
\end{figure}

In fact, by Theorem \ref{thm:2} , $\lfloor T \rceil$ must compute the value of $s \in \Phi(s)$ in order to not abort $S_{SFCM-R}$ due to $c_0=p_1$, with $\Phi(s) : S \to 2^S$, $S=\VH_i ... \VH_k$ being a set that maps  attachable $C_\VH$ components in $S$ to subsets of $S$. If such computation is possible, then the existence of $s$ implies that  $\vartheta$ is unfair for $S_{R-SE}$ because of $S_{SFCM-R}$ reaching $x \in [p_0 ... p_1[$ is guaranteed by $\lfloor T \rceil$ because of (I), (II), (III), (IV) and (V). Likewise, the non-existence of $s$ implies that $\vartheta$ is unfair for $S_{SFCM-R}$ because of $S_{SFCM-R}$ reaching $x = p_1$ is guaranteed by $\lfloor T \rceil$.

It's worth mentioning that $\lfloor F \rceil$ also needs to minimize the entropy of $S_{RS-R}$ by using a systematic method instead of using a probabilistic approach in order to not fail to produce a valid output, directly or indirectly, since $T_{S_{RS-R}}$ needs to collapse to either $\nabla$ or $\infty$ precisely. 

\begin{equation} 
s \in \Phi(s) \Longrightarrow \text{(} { \underset {\curlywedge_j} {\operatorname{min}}\;\underset {\gamma} {\operatorname{max}}\; \Game(\curlywedge_j, \gamma)}\text{ is unfair for }S_{R-SE}\text{)}
\end{equation}

Therefore, (I), (II), (III), (IV) and (V) imply that the existence of $F_v$ is not a sufficient condition to make \textsc{Reconstruct} degenerate the error rate distortion made by \textsc{Mapping} and imitate RS-E. 

Now, let $X_i$ be a set $Z$ added to $\nJ$ node in expansion call $i$ that doesn't active any static $\VH$. As \textsc{Reconstruct} passes through $H$ by using $H^{*}$ paths, we have:

\begin{enumerate}[(1)]
\item by Corollary \ref{clr:1}, paths of $H^{*}$ can generate potential independent forbidden minors;  and

\item by Corollary \ref{clr:2}, $\lfloor F \rceil$ can't ignore the constraints of SFCM-R completely. 

\end{enumerate}

Thus, we can assume that, if there exists $X_i=j_i$ and $X_k=j_k$,$i>k$, then there exists two ordered fragments $S_i$ and $S_k$ of a potential hamiltonian sequence, such that $S_i \cap X_i \neq \emptyset$, $S_k \cap X_k \neq \emptyset$ that $\lfloor T \rceil$ is forced to create due to $\ltimes=F$ in order to degenerate $X_i$ and $X_k$. 

Otherwise, \textsc{Reconstruct} would need to imitate RS-E explicitly, since it should have used probability to imitate RS-E in order to avoid both $X_i$ and $X_k$, instead of using the proposed goal-oriented strategies (and variants) along with $\lfloor T \rceil$ to postpone the creation of both $X_i$ and $X_k$ as well as the activation of both $X_i$ and $X_k$, which is invalid because of $S_{SFCM-R}$ avoiding $\vartheta$.

Notice we can also assume that RS-R also uses an optimizable tolerance policy, directly or indirectly, since:

\begin{enumerate}[(1)]
\item by Corollary \ref{clr:3}, $\lfloor F \rceil$ must distort the potentially-exponential error rate curve of RS-R, which is represented by the number of times that $v \to u=F$ holds for $u$, by using a systematic approach in order to make RS-R fail to produce a valid output; and 

\item by Corollary \ref{clr:2}, as $\lfloor F \rceil$ can ignore the constraints of SFCM-R partially, we can assume that,directly or indirectly, $\lfloor F \rceil$ can tolerate a small $\gamma$ growth rate in order to consider SFCM-R constraints progressively as a mean to minimize the entropy of $S_{RS-R}$. 
\end{enumerate}

Otherwise, RS-R would also need to imitate RS-E explicitly, since it wouldn't predict optimally if its error rate curve distortion would be degenerated in order to abort itself, which is invalid. 

As $\lfloor F \rceil$ needs to map the ordering constraints related to potential independent forbidden minors by using RSPN in order to not make RS-R imitate RS-E,  \textsc{Reconstruct} is imitating RS-R  due to the fact that \textsc{Reconstruct} needs to pass through $X_i$ before $X_k$ by using $\lfloor T \rceil$.

However, if \textsc{Reconstruct} happens to pass through $X_k$ before $X_i$, there may exist a hidden region $X_l$ such that $l > k$, that updates $X_k$ in way that the ordering $X_i ... X_k$ remains preserved. As $X_l$ can be created by $\lfloor T \rceil$ in any subsequent expansion call, such event is not a sufficient condition to prove that \textsc{Reconstruct} ignores $X_i ... X_k$ ordering unless both $j_i = \VH_i$ and $j_k=\VH_k$ are active. In this case, if \succv\space with $u \in X_k$ and $X_i \cap V(H) \neq \emptyset$, $\Delta(H)$ could get increased by $\lfloor T \rceil$ at any moment since: 

\begin{enumerate}[(1)]
\item \textsc{Reconstruct} can't create any $X_l$ in subsequent expansion calls; and 

\item $\lfloor T \rceil$ failed to postpone the creation of both $\VH_i$ and $\VH_k$ and the activation of both $\VH_i$ and $\VH_k$, while preventing \textsc{Reconstruct} from imitating RS-E.
\end{enumerate}

Because of that, $\lfloor T \rceil$ is forced to delete some edges $e \in L_e$ to enforce the ordering of active static $\VH$ points in order to make $\Delta(H)$ be consistent. In this case, \textsc{Reconstruct} is still using paths of $H^{*}$ even if some of edges are removed from $L_e$, which means that the existence of such removal operations is not a sufficient condition to prove that the error rate distortion made by \textsc{Mapping} is degenerated by \textsc{Reconstruct}. In addition, by  Corollary \ref{clr:2}, $\lfloor F \rceil$  can't ignore the constraints of SFCM-R completely.

However, if \textsc{Reconstruct} is not able to add $[v,u]$, for at least one $v$, in an arbitrary region $R$ because of such ordering, \textsc{Reconstruct} can't pass through such region unless by using probability. In such state, \textsc{Reconstruct} is aborted by $\lfloor T \rceil$ since the error rate curve distortion made by \textsc{Mapping} (Corollary \ref{clr:4}) is about to be degenerated, which makes $\lfloor T \rceil$ trigger Strategy \ref{str:1} to disintegrate the curve distortion ring $\ltimes$ in order to make $\gamma$ grow exponentially.

 That's because \textsc{Reconstruct} would need to imitate the behaviour of RS-E explicitly by ignoring $L_e$ as well as its tolerance policy completely in order to continue the reconstruction process. Notice that such state imitates the abort condition of RS-R by making  $v \to u=F$ hold for every $u$, since: 

\begin{enumerate}[(1)]
\item $\lfloor T \rceil$ failed to prevent \textsc{Reconstruct} from degenerating itself while preventing \textsc{Reconstruct} from imitating RS-E; and \item $\lfloor F \rceil$ can't ignore the  constraints of SFCM-R completely by Corollary \ref{clr:2}. 
\end{enumerate}

In addition, as a static $\VH_i$ can't have duplicated \VH\space points and the first \textsc{Reconstruct} call can update $\nJ$ node, $|V|-1$ expansion calls, with a different $x_1 \in \phi$, is a sufficient condition to activate every static $\VH$. If \textsc{Reconstruct} makes $|V|-1$ expansion calls with a different $x_1 \in \phi$, $L_e$ needs to be a hamiltonian sequence in order to not violate any region ordering. Otherwise, \textsc{Reconstruct} is aborted by Strategy \ref{str:1}. In such case, by Corollary \ref{clr:3}, \textsc{Reconstruct} also imitates the stop condition of RS-R, since a valid $u$ must exist for every $v$ found when RS-R is not aborted, and, a distorted error rate curve must exist when RS-R is not aborted. 
\\

Therefore, \textsc{Reconstruct} is goal-oriented if at most $|V|-1$ expansion calls, with a different $x_1 \in \phi$, are made by using an optimizable tolerance policy $\lfloor T \rceil$. 

\end{proof}

\section{Conclusion}
\label{sec:14}
In this paper, a novel algorithm to hamiltonian sequence is proposed. Such algorithm tries to reconstruct a potential hamiltonian sequence $P$ by solving a synchronization problem between the forbidden condition of an unknown non-exhaustive hamiltonian sequence characterization test, which is a set of unknown sufficient conditions that makes such test fail to produce a valid output, and the forbidden condition of the proposed algorithm, which is a set of sufficient conditions that makes the proposed algorithm fail to produce a valid output. In conclusion, this study suggests that the hamiltonian sequence problem can be treated as a synchronization problem involving the two aforementioned forbidden conditions.

\end{document}